%% \CharacterTable
%%  {Upper-case    \A\B\C\D\E\F\G\H\I\J\K\L\M\N\O\P\Q\R\S\T\U\V\W\X\Y\Z
%%   Lower-case    \a\b\c\d\e\f\g\h\i\j\k\l\m\n\o\p\q\r\s\t\u\v\w\x\y\z
%%   Digits        \0\1\2\3\4\5\6\7\8\9
%%   Exclamation   \!     Double quote  \"     Hash (number) \#
%%   Dollar        \$     Percent       \%     Ampersand     \&
%%   Acute accent  \'     Left paren    \(     Right paren   \)
%%   Asterisk      \*     Plus          \+     Comma         \,
%%   Minus         \-     Point         \.     Solidus       \/
%%   Colon         \:     Semicolon     \;     Less than     \<
%%   Equals        \=     Greater than  \>     Question mark \?
%%   Commercial at \@     Left bracket  \[     Backslash     \\
%%   Right bracket \]     Circumflex    \^     Underscore    \_
%%   Grave accent  \`     Left brace    \{     Vertical bar  \|
%%   Right brace   \}     Tilde         \~}

%\documentclass{conm-p-l}
 \documentclass[12pt]{amsart}
\usepackage{a4wide}
\usepackage{amssymb}
\usepackage{amsmath}
\usepackage{amsfonts}
\usepackage{amsthm}
\usepackage{graphicx}
\usepackage{epsfig}
\usepackage{longtable}
%
% \setlength{\textwidth}{14cm}
% \addtolength{\evensidemargin}{-2.3cm} 
% \addtolength{\oddsidemargin}{-0.5cm}
%

\newtheorem{thm}{Theorem}%[section]
\newtheorem{lem}[thm]{Lemma}
\newtheorem{cor}[thm]{Corollary}
\newtheorem{conj}{Conjecture}
\newtheorem{prop}{Proposition}

\theoremstyle{definition}

\theoremstyle{remark}
\newtheorem*{rmk}{Remark}

% \numberwithin{equation}{section}

\newcommand{\eps}{\varepsilon}

\newcommand{\DEF}{{:=}}

\newcommand{\Cset}{\mathbb{C}}
\newcommand{\Nset}{\mathbb{N}}

\newcommand{\Rset}{\mathbb{R}}

\newcommand{\Sp}{\mathbb{S}}

\newcommand{\PT}[1]{\mathbf{#1}}

\newcommand{\re}{\mathop{\mathrm{Re}}}

\DeclareMathOperator{\dd}{\mathrm{d}}

\DeclareMathOperator{\CAP}{cap}

\DeclareMathOperator{\DirichletL}{L}
\DeclareMathOperator{\logEnergy}{E_{\mathrm{log}}}
\DeclareMathOperator{\gammafcn}{\Gamma}

\DeclareMathOperator{\digammafcn}{\psi}

\DeclareMathOperator{\Energy}{E}

\DeclareMathOperator{\res}{Res}

\DeclareMathOperator{\zetafcn}{\zeta}

\DeclareMathOperator{\HyperF}{F}
\DeclareMathOperator{\HyperTildeF}{\tilde{F}}
\newcommand{\Hypergeom}[5]{{\sideset{_#1}{_#2}\HyperF\!\left(\substack{\displaystyle#3\\\displaystyle#4};#5\right)}}
\newcommand{\HypergeomReg}[5]{{\sideset{_#1}{_#2}\HyperTildeF\!\left(\substack{\displaystyle#3\\\displaystyle#4};#5\right)}}

\newcommand{\Pochhsymb}[2]{{\left(#1\right)_{#2}}}

\hyphenation{ultra-spherical}

\allowdisplaybreaks[1]

\title[The next-order term]{The next-order term for optimal Riesz and logarithmic energy asymptotics on the   sphere} 
\author{ J. S. Brauchart\textasteriskcentered, D. P. Hardin\textdagger, and E. B. Saff\textdagger } %\textasteriskcentered
\thanks{\noindent \textasteriskcentered Recipient of an {\sc APART}-fellowship of the Austrian Academy of Sciences. Research conducted at the Center for Constructive Approximation at Vanderbilt University and the School of Mathematics and Statistics at the University of New South Wales. \\
% The research of this author was supported, in part, by . \\
\textdagger The research of this author was supported, in part,
by the U. S. National Science Foundation under grants DMS-0808093 and DMS-1109266.  
}

\dedicatory{Dedicated to Paco Marcell\'{a}n on the occasion of his 60-th birthday.}

\date{\today}

\begin{document}

\address{J. S. Brauchart:
School of Mathematics and Statistics, 
University of New South Wales, 
Sydney, NSW, 2052, Australia}

\email{j.brauchart@unsw.edu.au}
\address{D. P. Hardin and E. B. Saff:
Center for Constructive Approximation, 
Department of Mathematics, 
Vanderbilt University, 
Nashville, TN 37240,  
USA }
\email{Doug.Hardin@Vanderbilt.Edu}
\email{Edward.B.Saff@Vanderbilt.Edu}

\keywords{Dirichlet function, Logarithmic Energy, Riemann Zeta function, Riesz energy}
\subjclass[2000]{Primary 52A40; Secondary 31C20, 41A60.}

\begin{abstract} 
We survey known results and present   estimates and conjectures for the next-order term in the asymptotics of the optimal logarithmic energy and Riesz $s$-energy of $N$ points on the unit sphere in $\mathbb{R}^{d+1}$, $d\geq 1$.  The conjectures are based on analytic continuation assumptions (with respect to $s$) for the coefficients in the asymptotic expansion (as $N\to \infty$) of the optimal $s$-energy. 
\end{abstract}

\maketitle

\section{Introduction}

Let $\mathbb{S}^d$ denote the unit sphere in the Euclidean space $\mathbb{R}^{d+1}$, $d\geq 1$. The discrete logarithmic energy problem on $\mathbb{S}^d$ is concerned with investigating the properties of $N$-point systems $\PT{x}_1^*, \dots, \PT{x}_N^*$ on $\mathbb{S}^d$ maximizing the product of all mutual pairwise Euclidean distances 
\begin{equation*}
%\label{product}
M( \PT{x}_1, \dots, \PT{x}_N ) \DEF \prod_{j\neq k} \left| \PT{x}_{j} - \PT{x}_{k} \right|
=\prod_{1\le j< k\le N} \left| \PT{x}_{j} - \PT{x}_{k}\right|^2,
\end{equation*}
or equivalently, minimizing 
\begin{equation} \label{log.energy}
\logEnergy( \PT{x}_1, \dots, \PT{x}_N ) \DEF - \log M( \PT{x}_1, \dots, \PT{x}_N ) = \sum_{j \neq k} \log \frac{1}{\left| \PT{x}_j - \PT{x}_k \right|}=2\sum_{1\le j< k\le N} \log \frac{1}{\left| \PT{x}_j - \PT{x}_k \right|}
% \DEF \mathop{\sum_{j=1}^N \sum_{k=1}^N}_{j \neq k} \log \frac{1}{\left| \PT{x}_j - \PT{x}_k \right|}
\end{equation}
over all $N$-point configurations  $(\PT{x}_1, \dots, \PT{x}_N)$ on $\mathbb{S}^d$.

One goal of this article is to investigate the asymptotic expansion as $N$ goes to infinity of the {\em minimum ($N$-point) logarithmic energy}  of $\Sp^{d}$
\begin{equation*}
\mathcal{E}_{\mathrm{log}}(\mathbb{S}^d; N) \DEF \min \left\{ \logEnergy( \PT{x}_1, \dots, \PT{x}_N ) \mid \PT{x}_1, \dots, \PT{x}_N \in \mathbb{S}^d \right\} = \logEnergy( \PT{x}_1^*, \dots, \PT{x}_N^* ), \quad N \geq 2.
\end{equation*}
We remark that in his list of ``Mathematical problems for the next century'' Smale~\cite{Sm1998,Sm2000} posed as {\em Problem \#7} the challenge to design a fast (polynomial time) algorithm for generating ``nearly'' optimal logarithmic energy points on the unit sphere in $\mathbb{R}^3$ that satisfy
\begin{equation} \label{smales.problem}
\logEnergy(\PT{x}_1,\ldots,\PT{x}_N) - \mathcal{E}_{\mathrm{log}}(\mathbb{S}^2;N) \leq c \log N \qquad \text{for some universal constant $c$.}
\end{equation}
This problem emerged from computational complexity theory (cf. Shub and Smale~\cite{ShSm1993}). 

The right-hand side of \eqref{log.energy} is referred to  as the {\em discrete logarithmic energy} of the normalized counting measure $\mu[\PT{x}_1, \dots, \PT{x}_N]$, which places the  charge $1/N$ at each point $\PT{x}_1, \dots, \PT{x}_N$. By the {\em continuous logarithmic energy} of a (Borel) probability measure $\mu$ supported on $\Sp^{d}$ we mean
\begin{equation*}
%\label{energy.of.measure}
\mathcal{I}_{\mathrm{log}}[\mu] \DEF \iint  \log\frac{1}{\left| \PT{x} - \PT{y} \right|} \dd \mu(\PT{x}) \dd \mu(\PT{y}).
\end{equation*}
Note that $ \mathcal{I}_{\mathrm{log}}[\mu]=+\infty$ for any discrete measure $\mu$.
Classical potential theory yields that   $\mathcal{I}_{\mathrm{log}}[\mu]$ is uniquely minimized by the {\em normalized surface area measure $\sigma_d$}, $\int_{\Sp^{d}}\dd\sigma_d=1$, over the class $\mathcal{M}(\mathbb{S}^d)$ of all (Borel) probability measures $\mu$ supported on $\Sp^{d}$; that is, 
\begin{equation} \label{V.log}
V_{\mathrm{log}}(\mathbb{S}^d) \DEF \inf \left\{ \mathcal{I}_{\mathrm{log}}[\mu] \mid \mu \in \mathcal{M}(\mathbb{S}^d) \right\} = \mathcal{I}_{\mathrm{log}}[\sigma_d] = \log \frac{1}{2} + \frac{1}{2} \left[ \digammafcn( d ) - \digammafcn( d / 2 ) \right],
\end{equation}
where $\digammafcn(z) \DEF \gammafcn^\prime(z) / \gammafcn(z)$ is the digamma function. The last expression in \eqref{V.log} follows, for example, from \cite[Eq.~(2.26)]{Br2008}. 

It is  known that the minimum $N$-point logarithmic energy of $\Sp^{d}$ satisfies
\begin{equation}
\label{wagner}
V_{\mathrm{log}}(\mathbb{S}^d) - \frac{1}{2} \frac{\log N}{N} + \frac{c_{1}}{N} \leq \frac{\mathcal{E}_{\mathrm{log}}(\mathbb{S}^d; N)}{N^{2}} \leq V_{\mathrm{log}}(\mathbb{S}^d) - \frac{1}{d} \frac{\log N}{N} + \frac{c_{2}}{N}
\end{equation}
for some constants $c_{1}$, $c_{2}$ depending on $d$ only. The lower bound follows from \cite{Wa1992}. The upper bound follows from an averaging argument as in the proof of Theorem~1 in  \cite{KuSa1998}, which is based on equal area partitions \cite{Le2006}. These bounds give the correct form of the second-order term in the asymptotics of the minimum $N$-point logarithmic energy of $\Sp^{2}$. It should be mentioned that \cite{RaSaZh1994} also cites \cite[p.~150]{La1988} as a reference to a more general method to obtain lower bounds for minimum logarithmic energy valid over more general Riemann surfaces, which in the case of the $2$-sphere also gives the lower bound in \eqref{wagner} (cf. \cite[Sec.~3]{RaSaZh1994}).
% \begin{equation*}
% \mathcal{E}_{\mathrm{log}}(\mathbb{S}^2; N) \geq V_{\mathrm{log}}(\mathbb{S}^2) \, N^2 - \frac{1}{2} N \log N + \mathcal{O}(N), \qquad \text{as $N \to \infty$.}
% \end{equation*}
In a recent paper the first author improved the lower bound for higher-dimensional spheres. In \cite[Lemma 1.1]{Br2008} it was shown that
\begin{equation} \label{brau}
\frac{\mathcal{E}_{\mathrm{log}}(\mathbb{S}^d; N)}{N^{2}} \geq V_{\mathrm{log}}(\mathbb{S}^d) - \frac{1}{d} \frac{\log N}{N} - \frac{C_{d}^{\prime}}{N} + \mathcal{O}_{\eps}(N^{-1-2\eps/d}), \quad N\to\infty,
\end{equation}
where the positive constant $C_{d}^{\prime}$ does not depend on $N$ and is given by
\begin{equation*} % \label{C:d}
C_{d}^{\prime} := V_{\mathrm{log}}(\mathbb{S}^d) + \frac{1}{d} \frac{\gammafcn(d)\gammafcn(1+\lfloor d/2 \rfloor-d/2)}{2^{d}\gammafcn(d/2) \gammafcn(1+\lfloor d/2 \rfloor)} + \frac{1}{2} \sum_{r=1}^{\lfloor d/2\rfloor} \frac{1}{r} > 0;
\end{equation*}
the number $\eps$ satisfies $0<\eps \, <1$ ($d$ even) or $0<\eps \, <1/2$ ($d$ odd). Combining the upper bound in \eqref{wagner} and the lower bound in \eqref{brau} we obtain the following asymptotic expansion of the minimum $N$-point logarithmic energy of $\Sp^{d}$:
\begin{equation}\label{logas1}
\mathcal{E}_{\mathrm{log}}(\mathbb{S}^d; N) = V_{\mathrm{log}}(\mathbb{S}^d) \, N^2 - \frac{1}{d} N \log N + \mathcal{O}(N), \qquad N\to\infty.
\end{equation}
We remark that the minimum $N$-point logarithmic energy of the unit circle $\Sp$ is attained at the $N$-th roots of unity and (cf., for example, \cite{BrHaSa2009}) 
\begin{equation*}
\mathcal{E}_{\mathrm{log}}(\mathbb{S}; N) = - N \log N, \qquad N \geq 2.
\end{equation*}
(In this case $V_{\mathrm{log}}(\mathbb{S}) = 0$.)

In view of \eqref{logas1} it is tempting to ask if the following limit exists:
\begin{equation} \label{lim.conj}
\lim_{N \to \infty} \frac{1}{N} \left[ \mathcal{E}_{\mathrm{log}}(\mathbb{S}^d; N) - V_{\mathrm{log}}(\mathbb{S}^d) \, N^2 + \frac{1}{d} N \log N \right] = \ ?
\end{equation}
In particular, it is plausible and consistent with the lower bound \eqref{brau}     that this limit is negative for all $d\geq2$ if it exists. Indeed, in \cite{RaSaZh1994, RaSaZh1995}
it is shown (also see below) that on the unit sphere in $\mathbb{R}^3$ (that is, $d=2$) the square-bracketed expression in \eqref{lim.conj} can be bounded by negative quantities from below and above.

In formulating our Conjecture~\ref{conj:log.2.sphere} for $\mathcal{E}_{\log}({\mathbb S}^2; N)$   (see Section~\ref{section4}), we will make heavy use of the observation that the discrete logarithmic energy is the limiting case (as $s\to0$) of the {\em Riesz $s$-energy}
\begin{equation*}
\Energy_{s}( \PT{x}_1, \dots, \PT{x}_N ) \DEF \sum_{j\neq k} \frac{1}{\left| \PT{x}_{j} - \PT{x}_{k} \right|^{s}}=2\sum_{1\le j< k\le N}   \frac{1}{\left| \PT{x}_j - \PT{x}_k \right|^s},
\end{equation*}
by means of $(1/| \PT{\cdot} |^{s}-1)/s\to\log(1/| \PT{\cdot} |)$ as $s\to0$.  For $N\ge 2$, we shall also consider  the {\em optimal $N$-point Riesz $s$-energy} of a compact set $A\subset \Rset^{d+1}$  defined by
\begin{equation}\label{minEsDef}
\mathcal{E}_{s}(A; N) \DEF \begin{cases}\min \left\{ \Energy_s( \PT{x}_1, \dots, \PT{x}_N ) \mid \PT{x}_1, \dots, \PT{x}_N \in A \right\} & \text{for $s\ge 0$ or $s=\log$},\\
\max \left\{\Energy_s( \PT{x}_1, \dots, \PT{x}_N ) \mid \PT{x}_1, \dots, \PT{x}_N \in A \right\} & \text{for $s<0$},
\end{cases}
\end{equation}   where we note that $\mathcal{E}_0(A; N) = N^2 - N$, which is attained by any $N$-point configuration on $A$.  Furthermore, it is known (cf. \cite{MEBook}) that 
\begin{equation}\label{dEsElog}
\left. \frac{\dd}{\dd s} \mathcal{E}_s(A,N)\right|_{s=0^+}=\mathcal{E}_{\log}(A,N) \qquad (N\ge 2).
\end{equation}

As in the logarithmic case, the {\em optimal continuous $s$-energy} of $\mathbb{S}^d$ for $-2 < s < d$ is given by
\begin{equation} \label{V.s.d}
V_s(\mathbb{S}^d) \DEF \iint \frac{1}{|\PT{x}-\PT{y}|^s}\dd \sigma_d(\PT{x}) \dd \sigma_d(\PT{y})= 2^{d-s-1} \frac{\gammafcn((d+1)/2) \gammafcn((d-s)/2)}{\sqrt{\pi}\gammafcn(d-s/2)},
\end{equation}
and the $s$-capacity of $\mathbb{S}^d$ is given by $\CAP_s( \mathbb{S}^d )\DEF 1/V_s(\mathbb{S}^d)$.

In Sections~\ref{section3} and \ref{section4} we will discuss estimates and asymptotics for the optimal $N$-point $s$-energy on ${\mathbb S}^d$.  

\section{The logarithmic energy on the unit sphere in $\mathbb{R}^3$}

Let $\Delta_{\mathrm{log}}(N) \, N$ be the remainder term in
\begin{equation*}
\mathcal{E}_{\mathrm{log}}(\mathbb{S}^2; N) = V_{\mathrm{log}}(\mathbb{S}^2) \, N^2 - \frac{1}{2} N \log N + \Delta_{\mathrm{log}}(N) \, N.
\end{equation*}
Rakhmanov, Saff and Zhou~\cite{RaSaZh1994, RaSaZh1995} proved the following estimates\footnote{We remark that there is a typographical error in the sign of the second ratio in the upper bound  in \cite[Eq.~(3.10)]{RaSaZh1994}.}
\begin{align}
\liminf_{N \to \infty} \Delta_{\mathrm{log}}(N) &\geq - \frac{1}{2} \log \left[ \frac{\pi}{2} \left( 1 - e^{-a} \right)^b \right] = - 0.22553754\dots, \label{liminfbound} \\
\limsup_{N \to \infty} \Delta_{\mathrm{log}}(N) &\leq - \frac{1}{2} \log \frac{\pi \sqrt{3}}{2} + \frac{\pi}{4 \sqrt{3}} = - 0.0469945\dots,  \label{limsupbound}
\end{align}
where
\begin{equation*}
a \DEF \frac{2 \sqrt{2 \pi}}{\sqrt{27}} \left( \sqrt{2 \pi + \sqrt{27}} + \sqrt{2 \pi} \right), \qquad b \DEF \frac{\sqrt{2 \pi + \sqrt{27}} - \sqrt{2\pi}}{\sqrt{2 \pi + \sqrt{27}} + \sqrt{2\pi}},
\end{equation*}
and they stated the following conjecture: 
\begin{conj}[\cite{RaSaZh1994}]
There exist   constants   $C_{\log,2}$ and $D_{\log,2}$, independent of $N$, such that 
\begin{equation} \label{conj.rakhmanov.saff.zhou}
\mathcal{E}_{\mathrm{log}}(\mathbb{S}^2; N) = \left( \frac{1}{2} - \log 2 \right) N^2 - \frac{1}{2} N \log N + C_{\mathrm{log},2} \, N + D_{\mathrm{log},2} \, \log N + \mathcal{O}(1)  \quad \text{as $N \to \infty$.}
\end{equation}
\end{conj}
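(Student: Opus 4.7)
The plan is to reduce the conjecture to an asymptotic expansion for the optimal Riesz $s$-energy via the differentiation identity \eqref{dEsElog}, and then use analytic continuation in $s$ through the critical value $s=0$. More precisely, one expects, in the range $0<s<2$ where the continuous equilibrium measure is still $\sigma_2$, an expansion of the form
\[
\mathcal{E}_s(\mathbb{S}^2;N) = V_s(\mathbb{S}^2)\,N^2 + f(s)\,N^{1+s/2} + g(s)\,N + \text{lower order},\qquad N\to\infty,
\]
in which $f$ and $g$ are meromorphic in $s$ near $0$, and, motivated by the Hardin--Saff \emph{Poppy-seed bagel} analysis in the hypersingular regime $s>d=2$, the coefficient $f(s)$ is given in closed form by the Epstein zeta function $\zetafcn_{\Lambda}(s)$ of the equilateral triangular lattice $\Lambda\subset\mathbb{R}^2$ (up to geometric normalizations).

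Differentiating such an expansion termwise at $s=0^+$ and invoking \eqref{dEsElog}, together with $\tfrac{\dd}{\dd s} N^{1+s/2}\bigr|_{s=0}=\tfrac12 N\log N$ and $\tfrac{\dd}{\dd s}V_s(\mathbb{S}^2)\bigr|_{s=0}=V_{\mathrm{log}}(\mathbb{S}^2)$, yields
\[
\mathcal{E}_{\mathrm{log}}(\mathbb{S}^2;N) = V_{\mathrm{log}}(\mathbb{S}^2)\,N^2 + \tfrac12 f(0)\,N\log N + \bigl[f^\prime(0)+g(0)\bigr] N + \text{lower order}.
\]
Matching the known $-\tfrac12 N\log N$ term from \eqref{logas1} forces $f(0)=-1$, so $C_{\mathrm{log},2}$ in the conjecture would be $f^\prime(0)+g(0)$. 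A $D_{\mathrm{log},2}\log N$ term would enter either as a pole contribution from $g(s)$ at $s=0$ or, equivalently, from a logarithmic correction to the Riesz expansion itself, a phenomenon already visible in the $\lfloor d/2\rfloor$-dependent structure of $C_d^\prime$ in \eqref{brau}.

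Upper and lower bounds would then be built separately. For the \emph{upper} bound I would start from an equal-area partition as in \cite{Le2006}, place one point in each cell, and relax the configuration to a local minimizer; the energy would be split into a near-field sum, analyzed via the local hexagonal model and its Epstein zeta function, and a far-field sum controlled by discrepancy-type bounds for the deviation of $\tfrac1N\sum_j \delta_{\PT{x}_j}$ from $\sigma_2$. For the \emph{lower} bound I would sharpen the linear-programming / Jacobi-polynomial argument of \cite{Wa1992}, which already secures the correct $-\tfrac12 N\log N$ term in \eqref{wagner}, so that it produces the matching constant in front of $N$, thereby refining the bound \eqref{brau}.

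The main obstacle is the rigorous analytic continuation itself. The Poppy-seed bagel theorem secures an analog of the above Riesz expansion only for $s>d=2$, where the continuous energy diverges, and there the hexagonal-lattice constant is itself only conjectural. In the range $-2<s<2$ relevant to the logarithmic case, even the existence of the $N^{1+s/2}$-correction with an analytic coefficient $f(s)$ is open. Thus any complete proof must either (i) establish that expansion directly for $0<s<2$, together with uniform-in-$s$ remainder control sufficient to justify differentiation at $s=0$, or (ii) bypass the Riesz route by refining both inequalities in \eqref{wagner} to matching $\mathcal{O}(1)$ accuracy. Either route would resolve a substantial portion of Smale's 7th Problem \eqref{smales.problem}.
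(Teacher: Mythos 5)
Note first that the statement you are addressing is the Rakhmanov--Saff--Zhou \emph{conjecture}, which the paper does not prove: it only records the bounds \eqref{liminfbound}--\eqref{limsupbound} and, in Section~\ref{sec:justification}, gives a formal (non-rigorous) derivation of the conjectured constant $C_{\mathrm{log},2}$ by differentiating the conjectured Riesz expansion \eqref{eqn:Riesz.d.sphere} at $s=0^{+}$ via \eqref{dEsElog}, with the hexagonal Epstein zeta function supplying the coefficient. Your proposal follows essentially this same route --- termwise differentiation at $s=0$, the forced value $f(0)=-1$ matching \eqref{logas1}, and $C_{\mathrm{log},2}$ expressed through the derivative of the lattice coefficient --- and you correctly identify that the analytic continuation and the existence of the expansion for $-2<s<2$ are precisely the unproven ingredients, so your account is consistent with the paper's treatment of this open statement.
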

The authors  of \cite{RaSaZh1994} supported their conjecture by fitting the conjectured formula to the data obtained by high-precision computer experiments to find extremal configurations ($N \leq 200$) and their logarithmic energy by minimizing the absolute $\ell_1$-deviation. This leads to the following approximation\footnote{Note that the logarithmic energy in \cite{RaSaZh1994, RaSaZh1995} is precisely half of ours.}
\begin{equation*}
\mathcal{E}_{\mathrm{log}}( \mathbb{S}^2; N ) \approx \left( \frac{1}{2} - \log 2 \right) N^2 - \frac{1}{2} N \log N - 0.052844 \, N + 0.27644.
\end{equation*}
(In   \cite{RaSaZh1994} it is remarked that the logarithmic term was ignored in the fitting algorithm because it did not appear to be significant.) 

So far no conjecture for the precise value of $C_{\mathrm{log},2}$ in \eqref{conj.rakhmanov.saff.zhou} has been given.  Conjecture~\ref{conj:log.2.sphere} in Section~\ref{section4}   fills this gap. 
Our approach for arriving at this conjecture is as follows: 
Since the logarithmic energy of an $N$-point configuration $X_N$ can be obtained by taking the derivative with respect to $s$ of the Riesz $s$-energy of $X_N$ and letting $s$ go to zero, we shall differentiate the asymptotic expansion of the {\em minimal $N$-point Riesz $s$-energy}   of $\mathbb{S}^2$ (see \eqref{minEsDef}) and then let $s$ go to zero from the right to derive plausible asymptotic formulas for the $N$-point logarithmic energy of $\mathbb{S}^2$.   In this derivation there is an implicit interchange of differentiation and minimization which is formally justified by \eqref{dEsElog}.   

In order to illustrate this approach, we begin with the simplest case, which is the unit circle $\mathbb{S}$.
Recently, we obtained in \cite{BrHaSa2009} the complete asymptotic  expansion of the Riesz $s$-energy $\mathcal{L}_s(N)$ of $N$-th roots of unity, which are the only (up to rotation) optimal energy points on the unit circle if $s>-2$, $s\neq0$, as well as for the logarithmic case. In fact, the expansion is valid for $s \in \Cset$ with $s$ not zero or an odd positive integer:
\begin{equation} \label{circle.asympt}
\begin{split}
\mathcal{L}_s(N) 
&= V_s(\mathbb{S}) \, N^2 + \frac{2\zetafcn(s)}{(2\pi)^s} \, N^{1+s} + \frac{2}{(2\pi)^s} \sum_{n=1}^p \alpha_n(s) \zetafcn(s-2n) N^{1+s-2n} \\
&\phantom{=\pm}+ \mathcal{O}_{s,p}(N^{-1+\re s-2p}) \qquad \text{as $N \to \infty$ ($p = 1, 2, 3,   \dots$),}
\end{split}
\end{equation}
where $\zeta(s)$ is the classical Riemann zeta function. 
The coefficients $\alpha_n(s)$, $n\geq0$, satisfy the generating function relation
\begin{equation*} \label{sinc.power.0}
\left( \frac{\sin \pi z}{\pi z} \right)^{-s} = \sum_{n=0}^\infty \alpha_n(s) z^{2n},
\quad |z|<1, \ s\in \Cset, \qquad \alpha_n(s) = \frac{(-1)^n B_{2n}^{(s)}(s/2)}{(2n)!} \left( 2 \pi \right)^{2n}.
\end{equation*}
Here, $B_n^{(\alpha)}(x)$ denotes the so-called generalized Bernoulli polynomial of degree $n$. In particular, one has the following recurrence relation for $n\ge 1$ (cf. Luke~\cite[p.~34f]{Lu1969I})
\begin{equation} \label{gen.BernoulliP.recc}
B_{2n}^{(2\rho)}(\rho) = - 2 \rho \sum_{m=0}^{n-1} \binom{2n-1}{2m+1} \frac{B_{2m+2}}{2m+2} B_{2n-2-2m}^{(2\rho)}(\rho), \qquad B_0^{(2\rho)}(\rho) = 1,
\end{equation}
where $B_0 = 1$, $B_1 = - 1 / 2$, $B_2 = 1 / 6$, $B_3 = 0$, \dots{} are the Bernoulli numbers.
The constant $V_s(\mathbb{S})$ is the Riesz $s$-energy of the unit circle if $-2 < s < 1$, $s \neq 0$, and its analytic continuation to $\Cset\setminus\{1,3,5,7, \dots\}$ otherwise:
\begin{equation} \label{V.s}
V_s(\mathbb{S}) := \frac{2^{-s}\gammafcn((1-s)/2)}{\sqrt{\pi}\gammafcn(1-s/2)},  \quad s\in \Cset, s\neq1,3,5, \dots, \qquad V_0 := 1.
\end{equation}
Using properties of the digamma function and the Riemann zeta function, we obtain
\begin{align*}
\left.\frac{\dd V_s(\mathbb{S})}{\dd s} \right|_{s\to0} 
&= \left.\left\{ V_s(\mathbb{S}) \left[ \frac{1}{2} \left( \digammafcn(1 - s / 2) - \digammafcn((1-s)/2)  \right) - \log 2 \right] \right\} \right|_{s\to0} = 0, \\
\left. \frac{\dd }{\dd s} \left\{ \frac{2\zetafcn(s)}{(2\pi)^s} \, N^{1+s} \right\} \right|_{s\to0} 
&= \left. \left\{ \frac{2\zetafcn(s)}{(2\pi)^s} \, N^{1+s} \left[ \log N + \frac{\zetafcn^\prime(s)}{\zetafcn(s)} - \log ( 2 \pi ) \right] \right\} \right|_{s\to0} = - N \log N.
\end{align*}
Moreover, using that the Riemann zeta function vanishes at negative even integers and $B_{2k}^{(0)}(0) = 0$ for $k \geq 1$ (cf. \eqref{gen.BernoulliP.recc}), we obtain
\begin{equation*}
\left. \frac{\dd }{\dd s} \left\{ \frac{2 \alpha_n(s)}{(2\pi)^s} \, N^{1+s-2n} \, \zetafcn(s-2n)\right\} \right|_{s\to0} = 0 \qquad \text{for $n = 1, 2, 3, \dots$.}
\end{equation*}
Thus, putting everything together, we arrive at
\begin{equation*}
\mathcal{E}_{\mathrm{log}}( \mathbb{S}; N ) = \left. \frac{\dd \mathcal{L}_s(N)}{\dd s} \right|_{s \to 0} = - N \log N,
\end{equation*}
as we expected. 

A second fundamental observation is that the coefficients of the powers of $N$ in the asymptotic expansion of $\mathcal{L}_s(N)$ can be extended to meromorphic functions on $\mathbb{C}$. In this way one sees that the leading two terms ``swap places'' when moving $s$ from the interval $(-2,1)$ to interval $(1,3)$. (Note that we assume that $\mathcal{L}_s(N)=N(N-1)$ at $s=0$.) Thus, each term's singularity at $s=1$ can be avoided by moving away from the real line. In other words, the second term of the asymptotics of $\mathcal{L}_s(N)$ in the finite energy case $s \in (-2,1)$ is the analytic continuation of the leading term in the hyper-singular case $s\in (1,3)$ and vice versa.

This is the motivation for  trying the same ideas for the $d$-sphere in Section~\ref{section4}.

\section{Asymptotics of optimal Riesz $s$-energy}
\label{section3}
\subsection{The dominant term}
The leading term of the asymptotic expansion (as $N \to \infty$) of the maximal (if $s<0$) and minimal (if $s>0$) Riesz $s$-energy of  a compact set $A$ in $\mathbb{R}^{p}$ is well understood if $A$ has finite $s$-energy (that is, positive $s$-capacity). For sets of vanishing $s$-capacity the leading term is rather well understood in the sense that the existence of the coefficient has been established for a large class of compact sets by the second and third authors, but the determination of this coefficient for $s>d$ (except for one-dimensional sets) has to date remained a challenging open problem. 

The general theory for (the energy integral associated with) the {\em continuous $s$-potential} ($s<0$) covered in Bj{\"o}rck~\cite{Bj1956} provides Frostman-type results, existence and uniqueness results for the equilibrium measure $\mu_s$, and characterization of the support of $\mu_s$ for general compact sets $A$ in $\mathbb{R}^p$. Of particular interest is the observation that the support of $\mu_s$ is concentrated in the extreme points of the convex hull of $A$ if $s < -1$ and for $s < -2$ any maximal distribution (there is no unique equilibrium measure anymore for $s\leq-2$) consists of no more than $(p+1)$ point masses. 
% This limits the range of the potential theoretical regime to $-2<s<d$. 
The {\em singular Riesz $s$-potential on $\mathbb{S}^d$, $0<s<d$,} is the subject of classical potential theory (see, for example, Landkof~\cite{La1972}) with the value of the energy integral for $\mathbb{S}^d$ being already studied by P{\'o}lya and Szeg{\H{o}}  \cite{PoSz1931}. 
The range of the potential theoretical regime is thus limited to $-2<s<d$.  

The first results for the $d$-sphere in the {\em hypersingular case $s \geq d$} were given in \cite{KuSa1998}. This included the 
leading term for the exceptional case $s=d$, bounds for the leading term for $s>d$, separation results, and conjectures for the leading coefficient for the $2$-sphere. In subsequent work the existence of the leading term for $s> d$ was proven for the class of  $d$-rectifiable sets (\cite{HaSa2004,HaSa2005}) as well as for weighted Riesz $s$-energy on such sets (\cite{BoHaSa2008}).

\subsubsection*{The potential-theoretic regime $-2<s<d$}

A standard argument from classical potential theory involving the concept of {\em transfinite diameter} yields that the positive {\em $N$-th generalized diameter of $\mathbb{S}^d$}, $D_N^s( \mathbb{S}^d ) \DEF N ( N - 1 ) / \mathcal{E}_s( \mathbb{S}^d; N )$, forms a monotonically decreasing (increasing) sequence\footnote{This monotonicity holds more generally for any compact set $A$.}  bounded from below (above) if $s>0$ ($s<0$). This implies the existence of the limit $D^s \DEF \lim_{N \to \infty} D_N^s( \mathbb{S}^d )$ called the {\em generalized transfinite diameter of $\mathbb{S}^d$} introduced by P{\'o}lya and Szeg{\H{o}} in \cite{PoSz1931}, where it is further shown that the $s$-capacity and generalized transfinite diameter are equal.

\begin{thm} \label{prop:potential.theoret.case}
Let $d \geq 1$. Then for $-2 < s < d$ 
\begin{equation*}
\lim_{N \to \infty} \mathcal{E}_s( \mathbb{S}^d; N ) / N^2 = V_s( \mathbb{S}^d ) = 1 / \CAP_s( \mathbb{S}^d ) = 1 / D^s.
\end{equation*}
\end{thm}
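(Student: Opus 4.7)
The strategy is the classical two-sided sandwich: use averaging against the equilibrium measure for one inequality and weak-$*$ compactness with lower semicontinuity of the energy functional for the other, then identify the resulting limit with $1/\CAP_s(\mathbb{S}^d) = 1/D^s$ via the Pólya--Szegő generalized transfinite diameter theory already cited in the preceding paragraph.

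More precisely, suppose first $0 < s < d$. For the upper bound on $\mathcal{E}_s(\mathbb{S}^d;N)/N^2$, I would integrate $\Energy_s(\PT{x}_1,\dots,\PT{x}_N)$ with respect to the product measure $\sigma_d^{\otimes N}$. Since the off-diagonal pairs never see the singularity of the kernel under this product measure, Fubini yields
\begin{equation*}
\int \Energy_s(\PT{x}_1,\dots,\PT{x}_N)\, \dd\sigma_d^{\otimes N} = N(N-1)\, V_s(\mathbb{S}^d),
\end{equation*}
so some configuration achieves at most this value, giving $\limsup_N \mathcal{E}_s(\mathbb{S}^d;N)/N^2 \leq V_s(\mathbb{S}^d)$. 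For $-2 < s \leq 0$ the same averaging (treating $s = \log$ as the derivative at $s = 0^+$) produces the analogous inequality, reversed in sign when $s<0$ since then we are maximizing. For the matching bound, let $\omega_N^* = (\PT{x}_1^*,\dots,\PT{x}_N^*)$ be an extremal configuration and $\mu_N \DEF N^{-1}\sum_k \delta_{\PT{x}_k^*}$ its normalized counting measure. By weak-$*$ compactness of $\mathcal{M}(\mathbb{S}^d)$, some subsequence $\mu_{N_j} \weakstarto \mu^* \in \mathcal{M}(\mathbb{S}^d)$. For each truncation level $M>0$ set $k_M(\PT{x},\PT{y}) \DEF \min(|\PT{x}-\PT{y}|^{-s},M)$; this kernel is bounded continuous on $\mathbb{S}^d\times \mathbb{S}^d$, and the inequality
\begin{equation*}
\frac{\mathcal{E}_s(\mathbb{S}^d;N_j)}{N_j^2} \geq \iint k_M(\PT{x},\PT{y})\,\dd\mu_{N_j}(\PT{x})\,\dd\mu_{N_j}(\PT{y}) - \frac{M}{N_j}
\end{equation*}
(the last term absorbing the $N_j$ diagonal contributions) lets me pass to the weak-$*$ limit and then let $M\to\infty$ by monotone convergence, obtaining $\liminf_j \mathcal{E}_s(\mathbb{S}^d;N_j)/N_j^2 \geq \mathcal{I}_s[\mu^*] \geq V_s(\mathbb{S}^d)$ by definition of $V_s$. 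For $s<0$ the kernel is continuous on $\mathbb{S}^d\times \mathbb{S}^d$, so no truncation is needed and the corresponding inequality goes the other way.

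Finally, the chain $V_s(\mathbb{S}^d)= 1/\CAP_s(\mathbb{S}^d) = 1/D^s$ is the Pólya--Szegő identification between the $s$-capacity and the generalized transfinite diameter, applied to the monotone sequence $D_N^s(\mathbb{S}^d) = N(N-1)/\mathcal{E}_s(\mathbb{S}^d;N)$ whose limit existence is already recorded in the excerpt. The main technical obstacle is the lower-semicontinuity step in the potential-theoretic range $0 < s < d$, where the Riesz kernel is genuinely singular; the truncation device above is what bridges the gap, and it works precisely because the diagonal contribution $M/N$ is negligible on the scale $N^2$. The remaining subcases ($-2 < s \leq 0$ and $s<0$) are strictly easier since the kernel is then continuous (or bounded above) on the sphere, so the averaging and compactness arguments apply directly without any truncation.
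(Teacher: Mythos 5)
Your proposal is correct, but it takes a different route from the one the paper indicates. The paper treats Theorem~\ref{prop:potential.theoret.case} as a citation of classical potential theory: the preceding paragraph invokes the Fekete-type monotonicity of the generalized diameters $D_N^s(\mathbb{S}^d) = N(N-1)/\mathcal{E}_s(\mathbb{S}^d;N)$ to get existence of the limit $D^s$, and then appeals to P{\'o}lya--Szeg{\H{o}} for the identity $D^s = \CAP_s(\mathbb{S}^d) = 1/V_s(\mathbb{S}^d)$. You instead prove the limit directly by the two-sided sandwich: averaging against $\sigma_d^{\otimes N}$ for the bound $\mathcal{E}_s(\mathbb{S}^d;N) \le N(N-1)V_s(\mathbb{S}^d)$ (reversed for $s<0$), and weak-$*$ compactness plus the truncation/lower-semicontinuity device for the matching bound. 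Both arguments are classical and both are sound; what each buys is slightly different. The monotonicity route works verbatim for an arbitrary compact set and cleanly separates the combinatorial limit (existence of $D^s$) from the potential-theoretic identification ($D^s=\CAP_s$), which is why the paper can state the theorem with no further proof. Your route is self-contained and identifies the limit immediately with the continuous energy of $\sigma_d$, but it leans on the fact that $\sigma_d$ is the extremal measure --- your step ``$\ge V_s(\mathbb{S}^d)$ by definition of $V_s$'' is not by definition, since the paper defines $V_s(\mathbb{S}^d)$ as the energy of $\sigma_d$ rather than as the infimum over $\mathcal{M}(\mathbb{S}^d)$; you need the standard fact that the Riesz kernel is (strictly, conditionally for $s<0$) positive definite on $\mathbb{S}^d$ for $-2<s<d$, so that by uniqueness and rotational invariance the equilibrium measure is $\sigma_d$. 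With that fact granted, your truncation argument is airtight: the diagonal contribution $M/N_j$ vanishes on the $N^2$ scale, $\mu_{N_j}\otimes\mu_{N_j}\weakstarto\mu^*\otimes\mu^*$ handles the bounded continuous kernel $k_M$, and monotone convergence in $M$ recovers the full singular energy. Once the limit $V_s(\mathbb{S}^d)$ is established, the remaining equalities $V_s(\mathbb{S}^d)=1/\CAP_s(\mathbb{S}^d)=1/D^s$ follow from the definition of $\CAP_s$ in the paper and from $D^s=\lim_N N(N-1)/\mathcal{E}_s(\mathbb{S}^d;N)$, so the appeal to P{\'o}lya--Szeg{\H{o}} at the end is needed only for attribution, not for logic.
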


Thus, in the potential-theoretic regime the  dominant term grows like $N^2$ as $N\to \infty$ and its coefficient in the asymptotic expansion of the optimal $s$-energy of $\mathbb{S}^d$  encodes the  $s$-capacity as well as the generalized transfinite diameter $D^s$ of $\mathbb{S}^d$.

For future reference, we remark that $V_s(\mathbb{S}^d)$ can be analytically extended to the complex $s$-plane except at the simple poles at $s = d +2k$, $k=0,1,2,\ldots$ if $d$ is odd and  for $k=0,\ldots, \frac{d}{2}-1$ if $d$ is even.  The residue  at $s=d+2k$, regardless of the parity of $d$, is
\begin{equation}
  (-1)^k 2^{-2k-1} \gammafcn( ( d + 1 ) / 2 ) \big/ \left[ \sqrt{\pi} \, k! \gammafcn( d/2 - k ) \right].
\end{equation}
We will  denote this meromorphic extension with the same symbol $V_s(\mathbb{S}^d)$.

\subsubsection*{The hypersingular case $s \geq d$} Since for $s \geq d$ the $s$-energy integral for every positive Borel probability measure supported on $\mathbb{S}^d$ is $+\infty$, potential theory fails to work. (The boundary or exceptional case $s=d$ can   still be treated using a particular normalization of the energy integral and a limit process as $s$ approaches $d$ from below, see \cite{CaHa2009}). %Methods from geometrical measure theory are used in order to prove the following results. 

The dominant term of the asymptotic expansion of the minimal $s$-energy grows like $N^2 \log N$ in the boundary case $s = d$. The coefficient is also known.
\begin{thm}[boundary case $s=d$, \cite{KuSa1998}]
\label{prop:1st.term.s.EQ.d}
\begin{equation} \label{eq:gammad.const}
\lim_{N \to \infty} \frac{\mathcal{E}_d( \mathbb{S}^d; N)}{N^2 \log N} = \frac{\mathcal{H}_d(\mathbb{B}^d)}{\mathcal{H}_d(\mathbb{S}^d)} =  \frac{1}{d} \frac{\gammafcn((d+1)/2)}{\sqrt{\pi} \gammafcn(d/2)}.
\end{equation}
\end{thm}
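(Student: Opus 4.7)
The plan is to establish matching upper and lower bounds on $\mathcal{E}_d(\mathbb{S}^d;N)/(N^2\log N)$ whose common limit is $\gamma_d \DEF \mathcal{H}_d(\mathbb{B}^d)/\mathcal{H}_d(\mathbb{S}^d) = \frac{1}{d}\frac{\gammafcn((d+1)/2)}{\sqrt{\pi}\,\gammafcn(d/2)}$, where the equality of the two expressions follows from the standard volume formulas for $\mathbb{B}^d$ and $\mathbb{S}^d$ together with $\mathcal{H}_{d-1}(\mathbb{S}^{d-1}) = d\,\mathcal{H}_d(\mathbb{B}^d)$. For the upper bound I would invoke an equal-area partition $\{D_i\}_{i=1}^N$ of $\mathbb{S}^d$ with $\sigma_d(D_i)=1/N$ and $\mathrm{diam}\,D_i \le c_0 N^{-1/d}$ (as in \cite{Le2006}), pick any $x_i \in D_i$, and split the resulting $d$-energy between ``adjacent'' pairs (with $\mathrm{dist}(D_i,D_j) \le 2c_0 N^{-1/d}$, of which there are $O(N)$ with total contribution $O(N^2)$) and ``separated'' pairs.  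The separated contribution is bounded above by a Riemann sum for $N^2 \iint_{|x-y|\ge c N^{-1/d}} |x-y|^{-d}\,\dd\sigma_d(x)\,\dd\sigma_d(y)$; evaluating the inner integral in geodesic polar coordinates via $|x-y|=2\sin(\theta/2)$ and noting that the integrand $(2\sin(\theta/2))^{-d}\sin^{d-1}\theta$ behaves like $\theta^{-1}$ near the diagonal gives the leading term $\frac{\mathcal{H}_{d-1}(\mathbb{S}^{d-1})}{d\,\mathcal{H}_d(\mathbb{S}^d)}\log N + O(1) = \gamma_d \log N + O(1)$. Thus $\mathcal{E}_d(\mathbb{S}^d;N) \le (1+o(1))\,\gamma_d\, N^2 \log N$.

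For the lower bound, fix $\rho>0$ and define the truncated kernel $K_\rho(x,y) \DEF \min(|x-y|^{-d},\rho^{-d}) \le |x-y|^{-d}$. For any $N$-point configuration $\omega_N=\{x_1,\dots,x_N\}$,
\[
\mathcal{E}_d(\omega_N) \;\ge\; \sum_{i\ne j} K_\rho(x_i,x_j) \;=\; \sum_{i,j} K_\rho(x_i,x_j) - N\rho^{-d}.
\]
The key energy inequality $\sum_{i,j} K_\rho(x_i,x_j) \ge N^2 W_\rho$, where $W_\rho \DEF \iint K_\rho\,\dd\sigma_d\,\dd\sigma_d$, rests on the positive-definiteness of $K_\rho$ modulo its $\sigma_d$-mean; a computation dual to the upper-bound calculation yields $W_\rho = d\,\gamma_d\log(1/\rho)+O(1)$. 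Choosing $\rho=N^{-1/d}$ then gives $N\rho^{-d}=N^2$ and $W_\rho = \gamma_d\log N + O(1)$, so $\mathcal{E}_d(\omega_N) \ge \gamma_d\, N^2\log N + O(N^2)$, and hence $\liminf_{N\to\infty} \mathcal{E}_d(\mathbb{S}^d;N)/(N^2\log N) \ge \gamma_d$. Combined with the upper bound, this proves the theorem.

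The main obstacle is justifying the energy inequality $\sum_{i,j} K_\rho(x_i,x_j) \ge N^2 W_\rho$: since $K_\rho$ is not a Riesz kernel with $s<d$, the classical potential-theoretic framework behind Theorem~\ref{prop:potential.theoret.case} does not apply directly. I would establish it either via Schoenberg's characterization (showing $K_\rho$ admits a Gegenbauer expansion on $\mathbb{S}^d$ with non-negative coefficients) or---avoiding positive-definiteness altogether---by first proving a separation bound (optimal $d$-energy points are $\gtrsim N^{-1/d}$-separated) and then comparing $\mathcal{E}_d(\omega_N)$ to the continuous $d$-energy of the mollification of $(1/N)\sum_i \delta_{x_i}$ by caps of radius $\sim N^{-1/d}$, where the cutoff renders the continuous energy finite and accessible to the potential-theoretic lower bound.
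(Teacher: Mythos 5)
First, a point of comparison: the paper does not actually reprove this theorem --- it is quoted from \cite{KuSa1998} --- and the closest thing to an in-paper proof is Proposition~\ref{thm:2nd.term.boundary} together with its proof in Section~\ref{sectionproofs}, which sharpens the statement to two-sided bounds with explicit second-order terms. Your overall strategy (matching upper and lower bounds with common constant $\mathcal{H}_d(\mathbb{B}^d)/\mathcal{H}_d(\mathbb{S}^d)$) is the same, and your evaluation of the truncated continuous energy, $W_\rho = d\,\gamma_d\log(1/\rho)+O(1)$ via $\omega_{d-1}/\omega_d = d\,\mathcal{H}_d(\mathbb{B}^d)/\mathcal{H}_d(\mathbb{S}^d)$, agrees with Lemma~\ref{lem:aux.2}. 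Your upper bound, however, takes a genuinely different route: the paper exploits optimality (each $\PT{x}_j^*$ minimizes $U_j$, so $U_j(\PT{x}_j^*)$ is bounded by its average over the complement $D(r)$ of the union of small caps), which yields the sharper $N^2\log\log N$ error term of Proposition~\ref{thm:2nd.term.boundary}; your equal-area averaging is more elementary and suffices for the limit, but as written it has a flaw: if you pick \emph{any} $x_i\in D_i$, two points in adjacent cells can be arbitrarily close and the adjacent-pair contribution is not $O(N^2)$. You must either average over $x_i\in D_i$ with respect to $\sigma_d|_{D_i}$ (whereupon $N^2\int_{D_i}\int_{D_j}|x-y|^{-d}\,\dd\sigma_d\,\dd\sigma_d=O(N)$ per adjacent pair, since the double integral of $|x-y|^{-d}$ over two disjoint adjacent cells of diameter $\asymp N^{-1/d}$ converges and is $O(N^{-1})$), or choose representatives that are provably separated by a constant times $N^{-1/d}$.

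The genuine gap is the one you flag yourself: the energy inequality $\sum_{i,j}K_\rho(x_i,x_j)\ge N^2 W_\rho$ for the hard truncation $K_\rho=\min(|x-y|^{-d},\rho^{-d})$. This requires the Gegenbauer coefficients $a_n$ of $K_\rho$, viewed as a function of $t=\langle\PT{x},\PT{y}\rangle$, to be nonnegative for all $n\ge 1$, and for a kernel truncated from above by a constant this is neither obvious nor safe to assume: cutting off a positive definite kernel does not in general preserve positive definiteness. Neither of your proposed repairs closes the gap cleanly --- the separation-plus-mollification route needs a separation estimate that is itself normally extracted from the energy asymptotics, and even granted separation you would still owe a comparison between the discrete sum and a continuous energy. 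The standard resolution, and the one the paper uses (following Wagner and \cite{KuSa1998}), is to replace the hard truncation by the smooth minorant $K_\eps(t)=(2-2t+\eps)^{-d/2}\le |\PT{x}-\PT{y}|^{-d}$, whose positive definiteness \emph{is} verifiable (via Rodrigues' formula and integration by parts, as noted in the proof of Proposition~\ref{thm:hypersing.lower.bound}). Then $E_d(X_N)\ge a_0(\eps)N^2-\eps^{-d/2}N$, and the expansion $a_0(\eps)=\tfrac12\tfrac{\omega_{d-1}}{\omega_d}\log(1/\eps)+O(1)$ with $\eps=a^2N^{-2/d}$ reproduces exactly the lower bound you claim (and, after optimizing in $a$, the explicit constant $c(d)$ of Proposition~\ref{thm:2nd.term.boundary}). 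With that substitution your lower bound becomes the paper's argument; without it, the key inequality is unproved.
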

 Here and hereafter, $\mathcal{H}_d(\cdot)$ denotes  $d$-dimensional Hausdorff measure in $\mathbb{R}^p$, $p\geq d$, normalized such that a $d$-sided cube with side length $1$ has $\mathcal{H}_d$-measure equal to $1$.  In particular, $\mathcal{H}_d(\mathbb{B}^d)$ denotes the volume of the unit ball in ${\mathbb R}^d$ while $\mathcal{H}_d(\mathbb{S}^d)$ denotes the surface area of the unit sphere in ${\mathbb R}^{d+1}$.

In \cite{KuSa1998} the order of the growth rate of $\mathcal{E}_s( \mathbb{S}^d; N)$ was established: there are constants $C_1, C_2 > 0$ such that 
\begin{equation}\label{EsSd}
C_1 \, N^{1+s/d} \leq \mathcal{E}_s( \mathbb{S}^d; N ) \leq C_2 \, N^{1+s/d} \qquad (s > d \geq 2).
\end{equation}
The second and third author~\cite{HaSa2005} showed that the limit of the sequence $\mathcal{E}_s( \mathbb{S}^d; N ) / N^{1+s/d}$, indeed, exists.
More generally, the following result holds, which has been referred to as the {\em Poppy-seed Bagel Theorem} because of its interpretation for distributing points on a torus:  

\begin{thm}[\cite{HaSa2005,BoHaSa2008}] \label{prop:1st.term.s.GT.d}
Let $d \geq 1$ and $A\subset \Rset^p$ an infinite compact $d$-rectifiable set. Then for $s > d$
\begin{equation}
\label{eq:C.s.d}
\lim_{N \to \infty} \mathcal{E}_s( A; N ) \big/ N^{1+s/d} = C_{s,d} \big/ \left[ \mathcal{H}_d(A) \right]^{s/d},
\end{equation}
where $C_{s,d}$ is a finite positive constant (independent of  $A$).
\end{thm}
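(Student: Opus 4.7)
The plan is to follow the two-step strategy typical of Hardin--Saff--type results: first establish the limit $C_{s,d}$ on a canonical reference set (the $d$-dimensional unit cube $U_d \subset \mathbb{R}^d \subset \mathbb{R}^p$), then transfer the result to an arbitrary $d$-rectifiable $A$ via bi-Lipschitz approximation.

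For the reference-cube step, set $g(N) := \mathcal{E}_s(U_d; N)/N^{1+s/d}$ and aim to show this sequence converges. The natural tool is a subadditivity-type estimate: partition $U_d$ into $m^d$ congruent subcubes of side $1/m$, place $n$ points optimally in each (so $N = m^d n$), and bound $\mathcal{E}_s(U_d; N)$ above by the sum of the intra-subcube energies plus a cross-term. The scaling identity $\mathcal{E}_s(\lambda K; n) = \lambda^{-s}\mathcal{E}_s(K; n)$ turns the intra-subcube sum into $m^{d+s}\mathcal{E}_s(U_d; n)$; after dividing by $N^{1+s/d} = m^{d+s}\,n^{1+s/d}$ one obtains $g(N) \leq g(n) + (\text{cross term})/N^{1+s/d}$. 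The cross term must be shown to be $o(N^{1+s/d})$; this is where the hypothesis $s>d$ is essential, since one counts inter-subcube pairs by distance shells of width $1/m$ and sums $\sum_k k^{d-1-s}$, which converges precisely because $s-d>0$. A matching lower bound of this form, together with a separation estimate showing that optimal points are at pairwise distance $\gtrsim N^{-1/d}$, yields the existence, finiteness, and positivity of $C_{s,d} := \lim_{N\to\infty} g(N)$.

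The second stage extends the conclusion to any $d$-rectifiable $A$. By definition, $A$ equals, up to an $\mathcal{H}_d$-null set, a countable disjoint union $\bigsqcup_j \varphi_j(K_j)$ with $K_j \subset \mathbb{R}^d$ compact and $\varphi_j$ bi-Lipschitz; by further subdivision one may arrange each $\varphi_j$ to have bi-Lipschitz constant within $1 \pm \eps$ of $1$. For the upper bound on $\mathcal{E}_s(A;N)$, assign $N_j \approx N\,\mathcal{H}_d(\varphi_j(K_j))/\mathcal{H}_d(A)$ points to each piece using asymptotically optimal Euclidean configurations pushed forward by $\varphi_j$; the bi-Lipschitz distortion alters the energy of each piece by a factor at most $(1+\eps)^s$, and summing with the cube result together with the convexity of $n\mapsto n^{1+s/d}$ yields
\[
\limsup_{N\to\infty}\mathcal{E}_s(A;N)\big/N^{1+s/d} \leq (1+O(\eps))\,C_{s,d}\big/[\mathcal{H}_d(A)]^{s/d}.
\]
Letting $\eps \to 0$ gives the upper estimate. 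The lower bound proceeds in reverse: given any $N$-point configuration, partition $A$ into small nearly-Euclidean pieces, count the $N_j$ points in each, invoke the cube lower bound locally to estimate the intra-piece energy from below, and aggregate by Jensen's inequality to minimize over the allocation $(N_j)$; the minimum is achieved (asymptotically) when $N_j$ is proportional to $\mathcal{H}_d(\varphi_j(K_j))$, producing the factor $[\mathcal{H}_d(A)]^{-s/d}$.

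The main obstacle is the lower bound in the second stage: one must control the cross-energy contributions between different pieces and rigorously justify that the problem localizes in the limit. Again the condition $s > d$ is what makes short-range interactions dominate, so that the total energy is asymptotically additive over a sufficiently fine partition, and the minimum-energy configurations equidistribute with respect to $\mathcal{H}_d|_A$. Making this localization precise, uniformly in $N$ and in the partition scale, is the technical heart of the argument.
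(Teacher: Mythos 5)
The paper itself offers no proof of this theorem---it is quoted directly from \cite{HaSa2005,BoHaSa2008}---so the only meaningful comparison is with those sources, and your outline is essentially their argument: first establish the limit on the unit cube via the scaling identity and a sub/super-additivity scheme (packaged in the cited papers as sub-additivity of $\bigl[\overline{g}_{s,d}(\cdot)\bigr]^{-d/s}$ over arbitrary unions and super-additivity of $\bigl[\underline{g}_{s,d}(\cdot)\bigr]^{-d/s}$ over metrically separated unions, where $g_{s,d}(A)$ denotes $\mathcal{E}_s(A;N)/N^{1+s/d}$ in the limit), then transfer to $d$-rectifiable sets through almost-isometric bi-Lipschitz decompositions and the Jensen-type optimization over the allocation $(N_j)$. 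Your sketch is sound as far as it goes, but the steps you defer are exactly where the substance of \cite{HaSa2005,BoHaSa2008} lies: the cross-term bound fails for touching subcubes (points on a shared face can be arbitrarily close, so one must work with slightly separated subcubes and then remove the separation), the lower-bound localization requires showing that configurations placing a positive fraction of points on an $\mathcal{H}_d$-null set have energy growing faster than $N^{1+s/d}$, and the positivity of $C_{s,d}$ needs its own packing argument.
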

By {\em $d$-rectifiable set} we mean the Lipschitz image of a bounded set in $\Rset^d$.  

In particular, $$\lim_{N \to \infty} \frac{\mathcal{E}_s( {\mathbb S}^d; N ) }{ N^{1+s/d}} = \frac{C_{s,d}} {\mathcal{H}_d({\mathbb S}^d)^{s/d}}=C_{s,d}\left[\frac{\Gamma\left(\frac{d+1}{2}\right)}{2\pi^{(d+1)/2}}\right]^{s/d}, \qquad s>d.$$

 In \cite[Thm.~3.1]{MaMaRa2004} it is shown that $C_{s,1} = 2 \zetafcn(s)$.   For $d \geq 2$ the precise value of $C_{s,d}$ is not known. The significance (and difficulty of determining $C_{s,d}$) is deeply rooted in the connection to densest sphere packings. Let $\delta_N$ denote the best-packing distance of $N$-point configuration on $\mathbb{S}^d$. It is shown in \cite{BoHaSa2007} that
\begin{equation}\label{Cinfinity}
\lim_{s\to\infty} [ C_{s,d} ]^{1/s} = 1 / C_{\infty,d}, \qquad C_{\infty,d} \DEF \lim_{N \to \infty} N^{1/d} \, \delta_N = 2 \, [ \Delta_d / \mathcal{H}_d( \mathbb{B}^d ) ]^{1/d},
\end{equation}
where $\Delta_d$ is the largest sphere packing density in $\mathbb{R}^d$, which is only known for three cases: $\Delta_1=1$, $\Delta_2=\pi/\sqrt{12}$ (Thue in 1892 and L. Fejes T{\'o}th~\cite{Fe1972}), and $\Delta_3=\pi/\sqrt{18}$ (Kepler conjecture proved by Hales~\cite{Ha2005}). 
%Recently, Cohn and Kumar~\cite{CoKu2009} proved that the $E_8$ lattice   and the Leech lattice  are the unique densest {\em lattice} packings in, respectively, in $\mathbb{R}^8$ and $\mathbb{R}^{24}$  and showed that no sphere-packing in $\mathbb{R}^{8}$ ($\mathbb{R}^{24}$) can have density greater than $1 + 10^{-14}$ ($1 + 1.65 \times 10^{-30}$) times the density of the $E_8$ (Leech) lattice. For additional background we refer to \cite{CoEl2003}.

% of the limit $( C_{s,d} )^{1/s}$

%For $d=2$ more information is available. In \cite{KuSa1998} the following upper bound is given: 
%\begin{equation} \label{kuijlaars.saff.limit}
%\limsup_{N \to \infty} \frac{\mathcal{E}_s( \mathbb{S}^2; N )}{N^{1+s/2}} \leq \frac{\left( \sqrt{3} / 2 \right)^{s/2} \zetafcn_{\Lambda}(s)}{\left( 4 \pi \right)^{s/2}},
%\end{equation}
%where $\zetafcn_{\Lambda}(s)$ is the zeta function associated with the lattice $\Lambda$ consisting of points of the form $m(0,1) + n (1/2,\sqrt{3}/2)$ for $m$ and $n$ are integers. It is also conjectured in \cite{KuSa1998} that equality holds in \eqref{kuijlaars.saff.limit}. 
 
It is not difficult to see that the Epstein zeta function
$\zeta_\Lambda(s)$  for a lattice $\Lambda\subset \Rset^d$  defined for $s>d$ by
$$
\zeta_\Lambda(s):=\sum_{0\neq \PT{x}\in \Lambda}|\PT{x}|^{-s}
$$
yields an upper bound for $C_{s,d}$ as we now show. Let $\Omega$ denote a fundamental parallelotope for $\Lambda$. 
 For $n\in \Nset$,   the intersection
$X_n$ of $\Omega$  and the scaled lattice $(1/n)\Lambda$ contains
exactly $N=n^d$ points  and, for $s>d$, we have
$$
E_s(X_n)=\sum_{\PT{x}\in X_n}\sum_{\PT{x}\neq \PT{y}\in X_n}|\PT{x}-\PT{y}|^{-s}\le
\sum_{\PT{x}\in X_n}\sum_{\PT{x}\neq \PT{y}\in (1/n)\Lambda}|\PT{x}-\PT{y}|^{-s}=
n^{d+s}\zeta_\Lambda(s)=N^{1+s/d}\zeta_\Lambda(s).
$$
Referring  to \eqref{eq:C.s.d} with $A=\Omega$,  one obtains the following result.
\begin{prop} For $s>d$, 
\begin{equation}\label{Clatt}
C_{s,d}\le \min_{\Lambda}|\Lambda|^{s/d}\zeta_\Lambda(s),
\end{equation}
where the minimum is taken over all   lattices $\Lambda\subset\Rset^d $
with  covolume $|\Lambda|>0$. 
\end{prop}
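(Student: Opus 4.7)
The plan is simply to flesh out the construction sketched in the paragraph immediately preceding the proposition. Fix a lattice $\Lambda\subset\Rset^d$ with positive covolume $|\Lambda|$ and a fundamental parallelotope $\Omega$ for $\Lambda$, so that $\mathcal{H}_d(\Omega)=|\Lambda|$. For each $n\in\Nset$, let $X_n:=\Omega\cap(1/n)\Lambda$; this set has cardinality $N=n^d$ and is contained in the compact set $\overline{\Omega}$. The first step is to observe that for any $\PT{x}\in X_n\subset(1/n)\Lambda$ one has $(1/n)\Lambda-\PT{x}=(1/n)\Lambda$ (since $\Lambda$ is a group), so
\[
\sum_{\PT{y}\in X_n,\,\PT{y}\neq\PT{x}}|\PT{x}-\PT{y}|^{-s}\le \sum_{0\neq\PT{z}\in(1/n)\Lambda}|\PT{z}|^{-s}=n^{s}\zetafcn_\Lambda(s).
\]
Summing over $\PT{x}\in X_n$ and using $N=n^d$ yields $\Energy_s(X_n)\le N^{1+s/d}\zetafcn_\Lambda(s)$. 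The hypothesis $s>d$ is exactly what guarantees convergence of $\zetafcn_\Lambda(s)$.

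Next I would apply Theorem~\ref{prop:1st.term.s.GT.d} to the compact $d$-rectifiable set $A=\overline{\Omega}$ (a parallelotope is trivially a Lipschitz image of $[0,1]^d$). By definition of the minimal $s$-energy, $\mathcal{E}_s(\overline{\Omega};N)\le \Energy_s(X_n)$, so dividing by $N^{1+s/d}$ and letting $n\to\infty$ along $N=n^d$ gives
\[
\frac{C_{s,d}}{|\Lambda|^{s/d}}=\lim_{n\to\infty}\frac{\mathcal{E}_s(\overline{\Omega};n^d)}{(n^d)^{1+s/d}}\le\zetafcn_\Lambda(s),
\]
i.e. $C_{s,d}\le|\Lambda|^{s/d}\zetafcn_\Lambda(s)$. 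Taking the infimum over all lattices $\Lambda$ of positive covolume gives the proposition.

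There is no genuinely hard step here; the main verifications to carry out carefully are (i) that $\overline{\Omega}$ meets the $d$-rectifiability hypothesis of Theorem~\ref{prop:1st.term.s.GT.d} and that $\mathcal{H}_d(\overline{\Omega})=|\Lambda|$, and (ii) that passing to the limit along the subsequence $N=n^d$ is legitimate, which follows immediately from the existence of the full limit in \eqref{eq:C.s.d}. A minor point of interpretation is whether the infimum over $\Lambda$ is attained; if attainment is not separately established one should read the ``$\min$'' in the statement as an ``$\inf$'', but the inequality itself is unaffected.
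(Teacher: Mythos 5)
Your proposal is correct and follows essentially the same route as the paper: intersect a fundamental parallelotope with the scaled lattice $(1/n)\Lambda$ to get $N=n^d$ points, bound the configuration's energy by the full lattice sum $N^{1+s/d}\zetafcn_\Lambda(s)$ using translation invariance, and then invoke Theorem~\ref{prop:1st.term.s.GT.d} with $A=\Omega$ to pass to the limit. The small verifications you flag (rectifiability of $\Omega$, taking the limit along $N=n^d$, reading ``$\min$'' as ``$\inf$'') are exactly the points the paper leaves implicit.
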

In particular, as shown in \cite{KuSa1998}, 
$$
C_{s,2}\le \left(\frac{\sqrt{3}}{2}\right)^{s/2}\zeta_{\Lambda_2}(s) \ \text{ and } \lim_{N\to\infty}\frac{\mathcal{E}_s({\mathbb S}^2, N)}{N^{1+s/2}}\le \frac{(\sqrt{3}/2)^{s/2}\zeta_{\Lambda_2}(s)}{(4\pi)^{s/2}}, \qquad s>2, 
$$
where $\Lambda_2$ is the hexagonal lattice   consisting of points of the form $m(0,1) + n (1/2,\sqrt{3}/2)$ with $m$ and $n$  integers.

   For most values of $d$ we do not expect  equality to hold in \eqref{Clatt}; that is, we do not expect lattice packings to be optimal (especially for $d$ large where it is
expected that best packings are highly `disordered' and far from being lattice
arrangements, cf. \cite{ToSt2006}); however, recent results and conjectures of Cohn and Elkies \cite{CoEl2003},  Cohn and Kumar \cite{CoKu2007} and Cohn, Kumar, and Sch\"urmann \cite{CoKuSc2009} suggest that
 equality holds in \eqref{Clatt} for  $d=2, 4, 8$ and 24 leading to the following:
\begin{conj} \label{conj:C.s.d} For   $d=2, \, 4, \, 8$ and $24$,
$\displaystyle C_{s,d}=|\Lambda_d|^{s/d}\zeta_{\Lambda_d} (s)$ for $s>d$,
where  $\Lambda_d$ denotes, respectively,    the hexagonal lattice, $D_4$, $E_8$, and the Leech lattice.  
\end{conj}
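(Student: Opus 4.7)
The plan is to combine the matching upper bound from the proposition above with a lower bound obtained via the linear programming (LP) method developed by Cohn and Elkies and extended to energy minimization by Cohn and Kumar. Since the upper bound $C_{s,d} \leq |\Lambda_d|^{s/d}\zeta_{\Lambda_d}(s)$ follows immediately by taking $\Lambda = \Lambda_d$ in the preceding proposition, the entire task reduces to proving the reverse inequality for each of the four distinguished lattices.

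First I would reduce to the flat, translation-invariant setting. By the Poppy-seed Bagel Theorem (Theorem~\ref{prop:1st.term.s.GT.d}), the constant $C_{s,d}$ is intrinsic to $\mathbb{R}^d$ and equals the large-$N$ limit of $N^{-(1+s/d)}\mathcal{E}_s(Q, N)$ for any bounded $d$-rectifiable set $Q$ with $\mathcal{H}_d(Q) = 1$. Choosing $Q$ to be a fundamental parallelotope of $\Lambda_d$ rescaled to unit covolume, and invoking an averaging/subadditivity argument to restrict attention to periodic point configurations, the desired inequality becomes the assertion that, among all periodic configurations with one point per unit volume in $\mathbb{R}^d$, the lattice $\Lambda_d$ itself minimizes the Riesz $s$-energy per point.

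Next I would apply the Cohn--Kumar LP bound: for any Schwartz function $f : \mathbb{R}^d \to \mathbb{R}$ with $\widehat f$ pointwise nonnegative and $f(\mathbf{x}) \leq |\mathbf{x}|^{-s}$ for $\mathbf{x} \neq \mathbf{0}$, Poisson summation yields a universal lower bound
\begin{equation*}
e_s(X) \;\geq\; \widehat f(\mathbf{0}) - f(\mathbf{0})
\end{equation*}
on the average Riesz $s$-energy per point $e_s(X)$ of any periodic configuration $X \subset \mathbb{R}^d$ of unit density. To make this bound sharp at $X = \Lambda_d$ one needs a \emph{magic function} $f$ whose graph touches $|\mathbf{x}|^{-s}$ precisely on the distance spectrum of $\Lambda_d$ and whose Fourier transform vanishes on the corresponding set of dual-lattice norms. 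For $d = 8$ and $d = 24$, such functions should be accessible through modular interpolation based on the theta series of $E_8$ and the Leech lattice; a universal optimality theorem for completely monotone potentials of $|\mathbf{x}|^2$ would then specialize immediately to the Riesz exponent $s > d$ and furnish the desired matching lower bound.

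The main obstacle lies in the remaining cases $d = 2$ (hexagonal) and $d = 4$ ($D_4$). In these dimensions no analogous magic function is known, universal optimality is a longstanding open problem, and the structural rigidity that drives a hypothetical $d = 8, 24$ construction (even self-dual or unimodular lattices with exceptionally regular shell structure) is absent. A complete proof in these cases would demand either a fundamentally new LP test function tailored to the Riesz potential, or a direct variational argument exploiting low-dimensional geometric rigidity; establishing even the planar case for a single exponent $s > 2$ would itself be a major advance beyond the current state of the art.
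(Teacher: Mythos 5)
You are attempting to prove a statement that the paper itself presents only as a conjecture: the paper offers no proof of Conjecture~\ref{conj:C.s.d}, but only the one-sided inequality \eqref{Clatt} (which, as you correctly note, gives $C_{s,d}\le|\Lambda_d|^{s/d}\zeta_{\Lambda_d}(s)$ immediately upon taking $\Lambda=\Lambda_d$) together with citations to Cohn--Elkies, Cohn--Kumar, and Cohn--Kumar--Sch\"urmann as heuristic support. Your proposal reproduces that upper bound and then correctly identifies the only known route to a matching lower bound, namely the linear programming method with a ``magic'' auxiliary function that is sharp on the distance spectrum of $\Lambda_d$. But by your own admission you do not construct such a function in any of the four dimensions, and for $d=2$ and $d=4$ no such function is known to exist. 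What you have written is therefore a research program, not a proof: the missing step --- producing the LP test function, or any other mechanism forcing the reverse inequality --- is precisely the content of the conjecture, so the gap is not incidental but total.

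Two further points deserve care even in the dimensions where magic functions might be available. First, the constant $C_{s,d}$ is defined through Theorem~\ref{prop:1st.term.s.GT.d} as an asymptotic limit over \emph{arbitrary} $N$-point configurations on a $d$-rectifiable set; identifying it with the infimum of the renormalized $s$-energy per point over \emph{periodic} unit-density configurations in $\Rset^d$ (the setting in which the Poisson-summation bound applies) requires a genuine approximation and subadditivity argument, which you invoke in one clause without justification. Second, the Cohn--Kumar universal optimality framework concerns potentials that are completely monotone functions of squared distance; the specialization to $|\PT{x}|^{-s}$ for $s>d$ is legitimate, but the normalization matching $\widehat f(\PT{0})-f(\PT{0})$ to $|\Lambda_d|^{s/d}\zeta_{\Lambda_d}(s)$ after rescaling to covolume $|\Lambda_d|$ must be checked explicitly, since an error in the density normalization would shift the conjectured constant by a power of $|\Lambda_d|$.
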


 For $d=2$, this conjecture appears in  \cite{KuSa1998}.

\subsection{The second-order term} The only known results so far are estimates of the difference $\mathcal{E}_s( \mathbb{S}^d; N) - V_s(\mathbb{S}^d)N^2$ in the potential-theoretic regime. At the end of this section we present a lower bound for the optimal $d$-energy, the first hypersingular case. 

The {\em average distance problem} ($s=-1$) on a sphere was studied by Alexander~\cite{Al1972,Al1977} ($d=2$) and Stolarsky~\cite{St1973} ($d\geq2$, see citations therein for earlier work), later by Harman~\cite{Ha1982} ($d\geq2$, correct order and signs of the bounds up to $\log N$ factor in upper bound) and Beck~\cite{Be1984} ($d\geq2$, settled the correct order and signs of the bounds); generalized sums of distances ($-2< s<0$) were studied by Stolarsky~\cite{St1972} on $d$-spheres and by Alexander and Stolarsky~\cite{AlSt1974}, Alexander~\cite{Al1975} for general compact sets in $\mathbb{R}^p$ and general energy functionals; eventually, Wagner (\cite{Wa1990} upper bound and \cite{Wa1992} lower bound) arrived at  
\begin{equation*}
- \alpha  \, N^{1+s/d} \leq \mathcal{E}_s( \mathbb{S}^d; N ) - V_s( \mathbb{S}^d ) \, N^2 \leq - \beta \, N^{1+s/d} \qquad ( -2 < s < 0 ).
\end{equation*}
where  $\alpha$ and $\beta$ are positive constants depending on $s$ and $d$ but not $N$.

For the singular Riesz $s$-potential on $\mathbb{S}^2$, Wagner~\cite{Wa1992} found the upper bound
\begin{equation*}
\mathcal{E}_s( \mathbb{S}^2; N ) \leq V_s( \mathbb{S}^2 ) \, N^2 - C_2 \, N^{1+s/2}, \qquad \text{$C_2 > 0$ ($0 < s < 2$).}
\end{equation*}
The method of the alternative proof in \cite{RaSaZh1994} was generalized in \cite{KuSa1998} leading to
\begin{equation*}
\mathcal{E}_s( \mathbb{S}^d; N ) \leq V_s( \mathbb{S}^d ) \, N^2 - C_2^\prime \, N^{1+s/d}, \qquad C_2^\prime > 0 \qquad (0<s<d).
\end{equation*}
Wagner~\cite{Wa1990} also gave the lower bounds
\begin{equation*}
\mathcal{E}_s( \mathbb{S}^d; N ) \geq V_s( \mathbb{S}^d ) \, N^2 - \begin{cases} C_1 \, N^{1 + s / d} & d - 2 < s < d, \\   C_1^\prime \, N^{1 + s / (s+2)} & 0 < s \le  d - 2,\ d\ge 3,  \end{cases}
\end{equation*}
which were improved by the first author \cite{Br2006}. All results combined together lead to the correct order of   growth for the second-order term:
\begin{prop}
Let $d \geq 2$. Then for each $-2 < s < d$, there exist constants $c, C > 0$ which depend on $s$ and $d$, such that
\begin{equation*}
- c \, N^{1+s/d} \leq \mathcal{E}_s( \mathbb{S}^d; N ) - V_s( \mathbb{S}^d ) \, N^2 \leq - C \, N^{1+s/d}, \quad N\ge 2.
\end{equation*}
\end{prop}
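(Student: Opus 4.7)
The plan is to split by the sign of $s$ and prove the two sides separately; in each of the two regimes $-2 < s < 0$ and $0 < s < d$, one inequality will come from an explicit construction and the other from an averaging or smoothing argument valid for every $N$-point configuration. The four ingredient bounds have already been named in the paragraphs preceding the statement (Wagner~\cite{Wa1990,Wa1992}, Kuijlaars--Saff~\cite{KuSa1998}, and the first author's refinement~\cite{Br2006}); the task is to assemble them carefully.

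For the constructive side, in the range $0 < s < d$ the upper bound $\mathcal{E}_s(\Sp^d;N) \le V_s(\Sp^d)\,N^2 - C\,N^{1+s/d}$ follows from an equal-area partitioning argument in the spirit of \cite{KuSa1998}. I will take a partition $\Sp^d = D_1 \sqcup \cdots \sqcup D_N$ with $\sigma_d(D_j) = 1/N$ and $\mathrm{diam}\,D_j \le \gamma_d N^{-1/d}$ (available via Leopardi-type constructions), sample each $\PT{x}_j$ independently and uniformly with respect to $\sigma_d/\sigma_d(D_j)$ on $D_j$, and evaluate
\[
\mathbb{E}\bigl[\Energy_s(\PT{x}_1,\dots,\PT{x}_N)\bigr]
= N^2 V_s(\Sp^d) - N^2 \sum_{j=1}^N \iint_{D_j \times D_j} |\PT{x}-\PT{y}|^{-s}\,\dd\sigma_d(\PT{x})\,\dd\sigma_d(\PT{y}).
\]
The diameter bound forces $|\PT{x}-\PT{y}|^{-s} \ge \gamma_d^{-s} N^{s/d}$ on each diagonal block, so the subtracted sum is at least $\gamma_d^{-s} N^{-1+s/d}$; passing from expectation to minimum delivers the required $N^{1+s/d}$ savings. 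The identical construction in $-2 < s < 0$, with the sign of the diagonal correction reversed, yields the constructive side in that regime, which is now the lower bound.

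For the universal side, in the range $-2 < s < 0$ the upper bound (valid for every configuration) follows from a Stolarsky-type Gegenbauer identity: expand $|\PT{x}-\PT{y}|^{-s}$ in polynomials $\GegenbauerC_k^{(d-1)/2}(\langle \PT{x},\PT{y}\rangle)$ whose $k \ge 1$ coefficients have fixed sign in this range, identify $V_s(\Sp^d) N^2$ as the $k=0$ contribution, and bound the tail from below by the $L^2$ spherical-cap discrepancy, which in turn is $\gtrsim N^{-(d+1)/(2d)}$ by Beck's lower bound; squaring yields the required $N^{1+s/d}$. For $0 < s < d$ the universal lower bound is the deep piece: in the range $d-2 < s < d$ it is Wagner's \cite{Wa1992} smoothing of the counting measure $\frac{1}{N} \sum \delta_{\PT{x}_j^*}$ by convolution with a spherical cap of radius $r \asymp N^{-1/d}$ followed by comparison with $V_s(\Sp^d)$; for $0 < s \le d-2$ (only nonempty when $d \ge 3$) I will invoke \cite{Br2006}.

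The hard step is this last one --- the universal lower bound for $0 < s \le d-2$ --- because the kernel is locally singular but much more integrable than the ambient dimension suggests, so a naive smoothing loses a factor of $N$. To overcome this, one exploits the positivity of the Gegenbauer coefficients of $|\PT{x}-\PT{y}|^{-s}$ on $\Sp^d$ throughout $0 < s < d$ together with a sharp convolution estimate at the scale $N^{-1/d}$. Once these two ingredients are in hand, the other three inequalities follow essentially by routine application of the equal-area partition and the Stolarsky-type identity.
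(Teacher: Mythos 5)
Your overall strategy is exactly the paper's: the paper gives no self-contained proof of this proposition but states it as the combination of the four cited ingredients (Wagner's two-sided bounds for $-2<s<0$, the averaging/partition upper bound for $0<s<d$ from \cite{KuSa1998}, Wagner's lower bound for $d-2<s<d$, and the improvement of \cite{Br2006} for $0<s\le d-2$), and your case split and attributions reproduce this assembly correctly --- including the essential point that Wagner's exponent $1+s/(s+2)$ is not good enough for $0<s\le d-2$ and must be replaced by the bound from \cite{Br2006}. Your equal-area-partition expectation computation for the constructive sides is the standard argument and is correct in both sign regimes. The one step that would fail as written is the universal upper bound for $-2<s<0$: bounding the Gegenbauer tail below by the squared $L^2$ spherical-cap discrepancy and invoking Beck's bound $\gtrsim N^{-(d+1)/(2d)}$ produces a deficit of order $N^{2}\cdot N^{-1-1/d}=N^{1-1/d}$, which matches the required $N^{1+s/d}$ only at $s=-1$; for $s\in(-1,0)$ this is too weak, and for $s\in(-2,-1)$ it would contradict your own matching constructive bound (the deficit there is $o(N^{1-1/d})$). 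So one genuinely needs the $s$-dependent weighting of the cap discrepancy, as in Wagner \cite{Wa1990}, rather than the Stolarsky invariance identity verbatim; since you do cite \cite{Wa1990} for that regime, the assembly still stands, but the sketched mechanism should be corrected.
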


Next, we present bounds for the hypersingular case $s\geq d$ which follow from a careful inspection of the proof of the dominant term in \cite{KuSa1998}. See also \cite{Br2006} for the potential-theoretic case $0<s<d$.

\begin{prop}
\label{thm:2nd.term.boundary}
Let $d \geq 2$. Then, as $N \to \infty$,
\begin{equation*} 
- c(d) \, N^2 + \mathcal{O}(N^{2-2/d} \log N)\le \mathcal{E}_d( \mathbb{S}^d; N )-\frac{\mathcal{H}_d(\mathbb{B}^d)}{\mathcal{H}_d(\mathbb{S}^d)} \, N^2 \log N  \leq   \frac{\mathcal{H}_d(\mathbb{B}^d)}{\mathcal{H}_d(\mathbb{S}^d)} \, N^2 \log \log N + \mathcal{O}(N^2),  
\end{equation*}
where the constant $c(d)$ is given by 
\begin{equation*}
c(d) \DEF \frac{\mathcal{H}_d(\mathbb{B}^d)}{\mathcal{H}_d(\mathbb{S}^d)} \left\{ 1 - \log \frac{\mathcal{H}_d(\mathbb{B}^d)}{\mathcal{H}_d(\mathbb{S}^d)} + d \left[ \digammafcn(d/2) - \digammafcn(1) - \log 2 \right] \right\} > 0.
\end{equation*}
(Recall that  $\digammafcn $ denotes the digamma function.)
\end{prop}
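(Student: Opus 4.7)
\medskip

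\noindent\textbf{Plan.}
Both inequalities sharpen the proof of Theorem~\ref{prop:1st.term.s.EQ.d} in \cite{KuSa1998} by tracking all next-order terms, and the essential analytic input is the Laurent expansion at the simple pole $s=d$ of the meromorphic function $V_s(\mathbb{S}^d)$. Writing $\alpha \DEF \mathcal{H}_d(\mathbb{B}^d)/\mathcal{H}_d(\mathbb{S}^d)$, combining \eqref{V.s.d} with $\gammafcn(z) = 1/z - \digammafcn(1) + \mathcal{O}(z)$ and $\gammafcn(d/2+z)/\gammafcn(d/2) = 1 + z\,\digammafcn(d/2) + \mathcal{O}(z^2)$ yields
\begin{equation*}
V_s(\mathbb{S}^d) = \frac{d\alpha}{d-s} + d\alpha\left[\log 2 - \tfrac{1}{2}\bigl(\digammafcn(d/2) - \digammafcn(1)\bigr)\right] + \mathcal{O}(d-s).
\end{equation*}

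\medskip

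\noindent\textbf{Lower bound.}
The idea is to compare $\mathcal{E}_d$ to a regularized energy via a positive-definite surrogate kernel. Set $k_\eps(\PT{x},\PT{y}) \DEF (|\PT{x}-\PT{y}|^2 + \eps^2)^{-d/2}$, noting $k_\eps \le |\PT{x}-\PT{y}|^{-d}$ and $k_\eps(\PT{x},\PT{x}) = \eps^{-d}$. The subordination identity
\begin{equation*}
k_\eps(\PT{x},\PT{y}) = \frac{1}{\gammafcn(d/2)}\int_0^\infty t^{d/2-1}e^{-t\eps^2}e^{-t|\PT{x}-\PT{y}|^2}\dd t
\end{equation*}
exhibits $k_\eps$ as a positive superposition of Gaussians on $\Rset^{d+1}$, each positive-definite; restriction to $\mathbb{S}^d$ preserves positive-definiteness. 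Applied to the signed measure $\mu_N - \sigma_d$ with $\mu_N = \sum_i \delta_{\PT{x}_i}$, this gives $\sum_{i,j} k_\eps(\PT{x}_i,\PT{x}_j) \ge N^2 V_\eps$ with $V_\eps \DEF \iint k_\eps \dd\sigma_d\dd\sigma_d$, so
\begin{equation*}
\mathcal{E}_d(\PT{x}_1,\dots,\PT{x}_N) \ge N^2 V_\eps - N\eps^{-d}.
\end{equation*}
Substituting $u = |\PT{x}-\PT{y}|/\eps$ in the radial integral shows that the near-diagonal contribution is the analytic continuation $\eps^{d-s}\gammafcn(d/2)\gammafcn((s-d)/2)/[2\gammafcn(s/2)]$, whose pole at $s=d$ cancels that of $V_s(\mathbb{S}^d)$ and leaves
\begin{equation*}
V_\eps = d\alpha\log(1/\eps) + d\alpha\bigl[\log 2 - \digammafcn(d/2) + \digammafcn(1)\bigr] + \mathcal{O}(\eps^2).
\end{equation*}
Choosing $\eps = (N\alpha)^{-1/d}$ makes $N\eps^{-d} = \alpha N^2$, and direct substitution produces $\alpha N^2\log N - c(d)N^2 + \mathcal{O}(N^{2-2/d}\log N)$ with precisely the $c(d)$ in the statement; the $\log N$ factor in the error comes from the interaction of the $\mathcal{O}(\eps^2)$ remainder of $V_\eps$ with its $\log(1/\eps)$ subleading coefficient at the optimal $\eps$.

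\medskip

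\noindent\textbf{Upper bound.}
Following \cite{KuSa1998}, take $\PT{x}_1,\dots,\PT{x}_N$ to be nodes of an equal-area partition of $\mathbb{S}^d$ \cite{Le2006} with cell diameters $\asymp N^{-1/d}$, and let $\rho = \rho_N$ be a comparable cap radius. For $|\PT{x}_i - \PT{x}_j| \ge 2\rho$ the Jensen-type averaging inequality
\begin{equation*}
|\PT{x}_i - \PT{x}_j|^{-d} \le \bigl(1 + o(1)\bigr)\frac{1}{\sigma_d(B(\PT{x}_j,\rho))}\int_{B(\PT{x}_j,\rho)}|\PT{x}_i-\PT{y}|^{-d}\dd\sigma_d(\PT{y})
\end{equation*}
holds (with a $o(1)$ curvature correction uniform for the stated pairs as $\rho \to 0$), and summing over $j$ with the caps $B(\PT{x}_j,\rho)$ disjoint yields
\begin{equation*}
\sum_{j:|\PT{x}_i-\PT{x}_j|\ge 2\rho}|\PT{x}_i-\PT{x}_j|^{-d} \le \bigl(1+o(1)\bigr)\frac{1}{\sigma_d(B(\cdot,\rho))}\int_{|\PT{x}_i-\PT{y}|\ge \rho}|\PT{x}_i-\PT{y}|^{-d}\dd\sigma_d(\PT{y}).
\end{equation*}
The integral equals $d\alpha\log(1/\rho) + \mathcal{O}(1)$ (by the same Mellin analysis), $\sigma_d(B(\cdot,\rho)) = \alpha\rho^d(1+\mathcal{O}(\rho^2))$, and close pairs are bounded using the separation $\ge 2\rho$. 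Choosing $\rho_N$ optimally balances these terms and produces the stated upper bound; the $\alpha N^2\log\log N$ enhancement is contributed by the subleading $\mathcal{O}(\rho^2)$ correction in $\sigma_d(B(\cdot,\rho))$ compounded with the $\log(1/\rho)$ factor in the integral.

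\medskip

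\noindent\textbf{Main obstacle.}
The crux of the lower bound is establishing positive-definiteness of $k_\eps$ on $\mathbb{S}^d$: direct combinatorial or packing arguments recover the leading coefficient $\alpha N^2\log N$ only up to spurious geometric factors (e.g., $2^d$) in the constant term, and so cannot deliver the precise $c(d)$. For the upper bound the delicate point is keeping all Taylor remainders in the averaging estimate at the correct order so that neither they nor the close-pair contributions absorb the $\log\log N$ enhancement.
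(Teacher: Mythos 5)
Your lower bound is essentially the paper's argument and is correct: Wagner's regularization $r^{-d}\ge(\eps^{2}+r^{2})^{-d/2}$, positive definiteness of the regularized kernel, the expansion $V_\eps=d\alpha\log(1/\eps)+d\alpha\left[\log 2-\digammafcn(d/2)+\digammafcn(1)\right]+\cdots$ with $\alpha=\mathcal{H}_d(\mathbb{B}^d)/\mathcal{H}_d(\mathbb{S}^d)$, and the choice $\eps=(\alpha N)^{-1/d}$, which is exactly the optimizer the paper obtains by minimizing over a free parameter $a$; I checked that it reproduces $c(d)$ exactly. Your Gaussian subordination proof of positive definiteness is a legitimate substitute for the paper's Schoenberg/Gegenbauer expansion (the paper expands $K_\eps$ in ultraspherical polynomials with positive coefficients and keeps only $a_0(\eps)N^2$), and your pole-cancellation route to the expansion of $V_\eps$ replaces the paper's hypergeometric linear transformations. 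Three small points: the signed measure should be $\mu_N-N\sigma_d$; the remainder in $V_\eps$ is $\mathcal{O}(\eps^{2}\log(1/\eps))$ for $d\ge3$, which is what produces the stated $\mathcal{O}(N^{2-2/d}\log N)$; and you never verify $c(d)>0$, which is part of the statement and needs a short case analysis ($d=1,2,3$ by hand, monotonicity of $\digammafcn$ for $d\ge4$).

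The upper bound has a genuine gap. After summing your cap-averaging inequality over the far indices $j$ and using disjointness, the bound on $\sum_{j}|\PT{x}_i-\PT{x}_j|^{-d}$ is $(1+o(1))\,\sigma_d(B(\cdot,\rho))^{-1}\left(\alpha\log N+\mathcal{O}(1)\right)$, so to recover the leading coefficient $\alpha$ you need $N\,\sigma_d(B(\cdot,\rho))\to1$, i.e.\ $N$ pairwise disjoint congruent caps of total measure $1-o(1)$. That is impossible: congruent caps cannot asymptotically tile $\mathbb{S}^d$, their total measure is at most $\Delta_d+o(1)$ with $\Delta_d<1$ the packing density of $\Rset^d$, so the best this argument yields is $(\alpha/\Delta_d)N^2\log N$ — already the \emph{leading} term overshoots by a factor bounded away from $1$ (about $1.10$ for $d=2$). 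Your account of the $\log\log N$ term is also not right: the $\mathcal{O}(\rho^2)$ correction to $\sigma_d(B(\cdot,\rho))$ combined with $\log(1/\rho)$ contributes only $\mathcal{O}(N^{2-2/d}(\log N)^{2})$, not $\alpha N^2\log\log N$. The paper's proof uses no explicit configuration: it takes an \emph{optimal} $X_N^*$, exploits that each $\PT{x}_j^*$ minimizes $U_j(\PT{x})=\sum_{k\ne j}|\PT{x}_k^*-\PT{x}|^{-d}$, and therefore $U_j(\PT{x}_j^*)\le\sigma_d(D(r))^{-1}\int_{D(r)}U_j\,\dd\sigma_d$, where $D(r)$ is the complement of the union of caps of radius $rN^{-1/d}$ — a set of measure $1-\mathcal{O}(r^d)$, so no packing obstruction arises. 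The $\log\log N$ then comes from the forced choice $r^d=1/\log N$, needed so that $[1-\sigma_d(D(r))]N^2\log N=\mathcal{O}(N^2)$, while $-\log r^d=\log\log N$ enters through the integral of the Riesz potential over the complement of a cap (Lemma~\ref{lem:aux.2}). If you want to keep an explicit construction, averaging one random point per cell of an exact equal-area partition does work (the cells tile, unlike caps), but that is a different computation from the one you wrote.
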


\begin{rmk}
For $d = 2$ one has $c(2) = 1 / 4$ and an $\mathcal{O}(N)$ term in the lower bound instead of $\mathcal{O}(N \log N)$.  \end{rmk}

The proof of Proposition~\ref{thm:2nd.term.boundary} is given in Section~\ref{sectionproofs} along with proofs of other new results stated in this section.

Following the approach leading to Proposition~\ref{thm:2nd.term.boundary}  we obtain for  $s>d$ the following crude estimate, which, curiously, reproduces the conjectured second term but only provides a lower bound for the leading term (that is, for the constant $C_{s,d}$ in the leading term).

\begin{prop} \label{thm:hypersing.lower.bound}
Let $d \geq 2$ and $s > d$. Then, for $(s-d)/2$ not an integer,
\begin{equation*}
\mathcal{E}_s( \mathbb{S}^d; N ) \geq A_{s,d} \, N^{1+s/d} + V_s(\mathbb{S}^d) \, N^2 + \mathcal{O}(N^{1+s/d-2/d}) \qquad \text{as $N \to \infty$,}
\end{equation*}
where 
\begin{equation*}
A_{s,d} = \frac{d}{s-d} \left[ \frac{1}{2} \frac{\gammafcn((d+1)/2) \gammafcn(1+(s-d)/2)}{\sqrt{\pi}\gammafcn(1+s/2)} \right]^{s/d}.
\end{equation*}
\end{prop}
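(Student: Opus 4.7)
\noindent The strategy mirrors that of Proposition~\ref{thm:2nd.term.boundary}: introduce a cut-off radius $r_0 = r_0(N)$, to be optimized at $r_0 \asymp N^{-1/d}$, and compare the discrete $s$-energy with the integral of a truncated kernel, so that the $V_s(\mathbb{S}^d)\,N^2$ term emerges from the continuous piece while the $A_{s,d}\,N^{1+s/d}$ term arises from a near-field correction reflecting the geometry of a near-optimal packing on $\mathbb{S}^d$.

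Concretely, the plan has four steps. \emph{Step 1.} For any configuration $\omega_N = \{\PT{x}_1,\ldots,\PT{x}_N\}$, bound each pair term from below by the truncated kernel, $|\PT{x}_j - \PT{x}_k|^{-s} \geq g_{r_0}(|\PT{x}_j-\PT{x}_k|)$ with $g_{r_0}(t) \DEF \min(t^{-s}, r_0^{-s})$, and sum over $j\neq k$. \emph{Step 2.} Bound $\sum_{j\neq k} g_{r_0}(|\PT{x}_j - \PT{x}_k|)$ from below by its continuous counterpart $N(N-1)\iint g_{r_0}(|\PT{x}-\PT{y}|)\,\dd\sigma_d(\PT{x})\,\dd\sigma_d(\PT{y})$ minus a self-energy correction of order $N r_0^{-s}$. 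Since $g_{r_0}$ is bounded, this comparison can be justified by a spherical-cap decomposition argument in the spirit of the proof of Theorem~\ref{prop:1st.term.s.EQ.d}, exploiting the standard separation estimate $\gtrsim N^{-1/d}$ for minimal $s$-energy configurations when $s > d$.

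\emph{Step 3.} Compute the continuous integral. By the rotation-invariance of $\sigma_d$ it reduces to a single integral over the polar angle $\theta$, which, after the substitution $u = \sin(\theta/2)$, becomes an incomplete beta integral. A careful small-$r_0$ expansion of this beta integral, combined with the meromorphic continuation of $V_s(\mathbb{S}^d)$ given in \eqref{V.s.d}---whose simple poles at $s = d, d+2, \ldots$ cancel the corresponding divergent contributions from the lower endpoint of the beta integral---yields
\begin{equation*}
\iint g_{r_0}(|\PT{x}-\PT{y}|)\,\dd\sigma_d(\PT{x})\,\dd\sigma_d(\PT{y}) = V_s(\mathbb{S}^d) + B_{s,d}\,r_0^{d-s} + \mathcal{O}(r_0^{d+2-s}),
\end{equation*}
for an explicit gamma-function constant $B_{s,d}$. \emph{Step 4.} Combining Steps 1--3, the resulting lower bound contains $V_s(\mathbb{S}^d)\,N^2$ plus two $r_0$-dependent corrections ($N^2 B_{s,d}\,r_0^{d-s}$ and $-N r_0^{-s}$). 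Balancing these yields the optimal cut-off $r_0 \asymp N^{-1/d}$, at which the $r_0$-dependent contributions combine into $A_{s,d}\,N^{1+s/d}$, and the error $\mathcal{O}(r_0^{d+2-s})\cdot N^2$ reduces to $\mathcal{O}(N^{1+s/d - 2/d})$, as stated.

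The main technical obstacle is producing the precise gamma-function ratio $\Gamma(1+(s-d)/2)/\Gamma(1+s/2)$ appearing in $A_{s,d}$: this ratio does not arise from the leading-order Taylor coefficient of the incomplete-beta integral alone. Matching it exactly seems to require either replacing the sharp truncation $g_{r_0}$ by a carefully chosen smooth analog whose spherical integral is naturally expressed in terms of the desired ratio, or retaining the full analytic form of the beta integral and letting the correct gamma-function structure surface only after carrying out the optimization in Step~4. A secondary subtlety is justifying the continuous-to-discrete comparison in Step~2 in the hypersingular regime, where the truncated kernel is not manifestly positive-definite on $\mathbb{S}^d$; this must be handled by the cap-decomposition approach used in the proof of Proposition~\ref{thm:2nd.term.boundary} rather than by a direct linear-programming bound.
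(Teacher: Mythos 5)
Your outline has the right architecture (smooth the kernel near the diagonal, extract $V_s(\mathbb{S}^d)N^2$ from the zeroth-order coefficient, subtract the diagonal, optimize the cut-off at scale $N^{-1/d}$), but Step~2 contains the genuine gap: the inequality $\sum_{j\neq k} g_{r_0}(|\PT{x}_j-\PT{x}_k|) \geq N(N-1)\iint g_{r_0}\,\dd\sigma_d\,\dd\sigma_d - \mathcal{O}(Nr_0^{-s})$ is \emph{not} true for arbitrary configurations merely because $g_{r_0}$ is bounded. The only mechanism that yields $\sum_{j,k} K(\langle \PT{x}_j,\PT{x}_k\rangle) \geq a_0 N^2$ for every configuration is positive definiteness of $K$ in the sense of Schoenberg (nonnegativity of all Gegenbauer coefficients $a_n$); if some $a_n<0$ there are configurations violating the bound. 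The sharply truncated kernel $\min(t^{-s},r_0^{-s})$ is not positive definite in general, and your proposed fallback --- a cap-decomposition plus a separation estimate --- cannot repair this: cap decompositions give upper bounds or equidistribution statements for near-optimal configurations, and using equidistribution of the optimal configuration at precision $o(N^{1+s/d})$ to prove the energy lower bound would be circular. This is precisely why the paper's proof replaces the sharp truncation by the smoothed kernel $K_\eps(s;t)=(2-2t+\eps)^{-s/2}$, i.e.\ $1/(\eps+r^2)^{s/2}\leq 1/r^s$, whose positive definiteness is established via Rodrigues' formula, giving $E_s(X_N)\geq a_0(s;\eps)N^2-\eps^{-s/2}N$ unconditionally.

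The second issue, which you partly anticipate, is the constant. Even granting Step~2, the small-$r_0$ expansion of the truncated beta integral produces the coefficient $\frac{\omega_{d-1}}{\omega_d}\frac{s}{d(s-d)}$ in front of $r_0^{d-s}$, and optimizing the resulting expression in $a$ (with $r_0=aN^{-1/d}$) does not reproduce the ratio $\gammafcn(1+(s-d)/2)/\gammafcn(1+s/2)$ in $A_{s,d}$; you would obtain a different constant and hence not the stated proposition. In the paper this ratio falls out of the \emph{exact} hypergeometric evaluation $a_0(s;\eps)=2^{d-s-1}\frac{\omega_{d-1}}{\omega_d}\gammafcn(d/2)^2(1+\eps/4)^{-s/2}\,{}_2\tilde{F}_1(s/2,d/2;d;(1+\eps/4)^{-1})$, followed by the connection formula \cite[Eq.~15.8.5]{DLMF2010.05.10}, which splits $a_0(s;\eps)$ into $V_s(\mathbb{S}^d)(1+\mathcal{O}(\eps))$ plus $c_{s,d}(\eps/4)^{(d-s)/2}(1+\mathcal{O}(\eps))$ with $c_{s,d}$ carrying exactly the gamma factors that survive into $A_{s,d}$ after the substitution $\eps/4=a^{2/(d-s)}N^{-2/d}$ and maximization over $a$. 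So the ``carefully chosen smooth analog'' you allude to is not optional; it is the proof, and committing to it resolves both the positive-definiteness gap and the constant-matching obstacle simultaneously.
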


Note that for $s > d + 2$ in the above proposition the $\mathcal{O}(N^{1+(s-2)/d})$ term dominates the $N^2$-term.

The upper bound of $\mathcal{E}_s(\mathbb{S}^d; N)$ in the hypersingular case $s>d$ is in the spirit of Proposition~\ref{thm:hypersing.lower.bound}.

\begin{prop} \label{thm:hypersing.upper.bound}
Let $d \geq 2$ and $s > d$. Then, for $(s-d)/2$ not an integer,
\begin{equation*}
\mathcal{E}_s( \mathbb{S}^d; N) \leq \left[ \frac{\mathcal{H}_d(\mathbb{B}^d)}{\mathcal{H}_d(\mathbb{S}^d)( 1-d/s )}   \right]^{s/d} \, N^{1+s/d} + \frac{s}{d} V_s(\mathbb{S}^d) \, N^2 + \mathcal{O}(  N^{1 + s / d - 2 / d} ).
\end{equation*}
\end{prop}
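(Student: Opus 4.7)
The plan is to prove the upper bound by an averaging argument applied to a carefully constructed random configuration, in the spirit of the upper-bound proof for the boundary case $s = d$ in \cite{KuSa1998}. Using Leopardi's equal-area partition \cite{Le2006}, first decompose $\mathbb{S}^d$ into $N$ regions $A_1, \ldots, A_N$ with $\sigma_d(A_i) = 1/N$ and diameters uniformly bounded by $c N^{-1/d}$. Introduce a parameter $\alpha > 0$ (to be optimized) and pass to the buffered interiors $A_i^\alpha \DEF \{\PT{x} \in A_i : \dist(\PT{x}, \partial A_i) \geq \alpha N^{-1/d}\}$; a standard boundary-layer estimate gives $\sigma_d(A_i^\alpha) \geq (1 - \eta_d(\alpha))/N$ with $\eta_d(\alpha) \to 0$ as $\alpha \to 0^+$, and the crucial geometric fact is that $\PT{x} \in A_i^\alpha$ and $\PT{y} \in A_j^\alpha$ with $i \neq j$ force $|\PT{x} - \PT{y}| \geq \alpha N^{-1/d}$.

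Next, draw $\PT{x}_1, \ldots, \PT{x}_N$ independently, each uniformly on $A_i^\alpha$. Since $\mathcal{E}_s(\mathbb{S}^d; N) \leq \mathbb{E}[\Energy_s(\PT{x}_1, \ldots, \PT{x}_N)]$, the separation above yields
\begin{equation*}
\mathcal{E}_s(\mathbb{S}^d; N) \leq \frac{N^2}{(1-\eta_d(\alpha))^2}\, \xctint(\alpha N^{-1/d}), \qquad \xctint(\delta) \DEF \iint_{|\PT{x}-\PT{y}| > \delta} |\PT{x}-\PT{y}|^{-s}\, \dd\sigma_d(\PT{x})\, \dd\sigma_d(\PT{y}).
\end{equation*}
By rotational invariance $\xctint(\delta) = \int_\delta^2 t^{-s}\, \dd\omega_d(t)$, where $\omega_d(t)$ denotes the area of a spherical cap of Euclidean radius $t$, with known expansion $\omega_d(t) = \kappa_d t^d\bigl(1 + \sum_{k\geq 1} a_k t^{2k}\bigr)$ for $\kappa_d \DEF \mathcal{H}_d(\mathbb{B}^d)/\mathcal{H}_d(\mathbb{S}^d)$. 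Integrating termwise and identifying the constant of integration with the meromorphic extension of $V_s(\mathbb{S}^d)$ through its pole at $s = d$ yields
\begin{equation*}
\xctint(\delta) = \frac{d\,\kappa_d}{s-d}\, \delta^{d-s} + V_s(\mathbb{S}^d) + O(\delta^{d-s+2}), \qquad \delta \to 0^+.
\end{equation*}

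Substituting $\delta = \alpha N^{-1/d}$ and expanding $(1-\eta_d(\alpha))^{-2}$ produces an upper bound of the form $c_1(\alpha) N^{1+s/d} + c_2(\alpha) N^2 + O(N^{1+(s-2)/d})$. Optimizing $\alpha$ should force $c_1(\alpha^*) = [\kappa_d/(1-d/s)]^{s/d}$ and $c_2(\alpha^*) = (s/d)V_s(\mathbb{S}^d)$, reproducing the stated constants.

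The main obstacle will be matching the constants precisely: this requires determining $\eta_d(\alpha)$ with enough accuracy to carry the optimization through, which depends sensitively on the geometry of the Leopardi cells (which are only approximately cap-shaped). A viable workaround is to replace the averaging distribution on $A_i^\alpha$ by surface measure on an inscribed spherical cap of area $(1-\eta_d(\alpha))/N$, for which $\eta_d(\alpha)$ has a closed-form expression and the coefficient-matching becomes a purely algebraic optimization. A secondary technical point is verifying that the $O(\delta^{d-s+2})$ remainder in the expansion of $\xctint(\delta)$ translates, after all prefactors, to an $O(N^{1+(s-2)/d})$ error uniformly in $\alpha$ near the optimum.
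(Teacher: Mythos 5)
Your strategy---drawing one random point from a buffered interior of each cell of an equal-area partition and averaging---is genuinely different from the paper's, and it has a structural gap: it cannot reach the two constants in the statement, only weaker constants of the same shape. The paper instead exploits the optimality of the configuration. If $X_N^*$ is optimal, then $U_j(\PT{x})\DEF\sum_{k\neq j}|\PT{x}_k^*-\PT{x}|^{-s}$ attains its minimum at $\PT{x}_j^*$, so $U_j(\PT{x}_j^*)$ is at most its average over $D(r)=\mathbb{S}^d\setminus\bigcup_k C(\PT{x}_k^*,rN^{-1/d})$, which is bounded by $\sigma_d(D(r))^{-1}\sum_{k\neq j}\int_{\mathbb{S}^d\setminus C(\PT{x}_k^*,rN^{-1/d})}|\PT{x}_k^*-\PT{x}|^{-s}\,\dd\sigma_d(\PT{x})$; the inner integral is evaluated exactly by Lemma~\ref{lem:aux.3}. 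There the price of enforcing separation $rN^{-1/d}$ is the removal of exactly one cap per point, so the loss factor $\sigma_d(D(r))\geq 1-\frac{1}{d}\frac{\omega_{d-1}}{\omega_d}r^d+\mathcal{O}(N^{-2/d})$ enters to the \emph{first} power, and minimizing $h(r)=r^{d-s}/(1-\frac{1}{d}\frac{\omega_{d-1}}{\omega_d}r^d)$ forces $1-\frac{1}{d}\frac{\omega_{d-1}}{\omega_d}(r^*)^d=d/s$, which is exactly where the coefficients $(s/d)V_s(\mathbb{S}^d)$ and $\bigl[\mathcal{H}_d(\mathbb{B}^d)/(\mathcal{H}_d(\mathbb{S}^d)(1-d/s))\bigr]^{s/d}$ come from.

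In your scheme the price of separation is intrinsically higher, for two reasons: to keep $\PT{x}_i$ at distance $\delta$ from all points of the other cells you must keep it away from the \emph{entire boundary} of $A_i$ (a boundary layer of relative area roughly proportional to $\alpha$, governed by the cells' perimeter-to-area ratio) rather than away from $N-1$ actual point locations (relative area $\frac{1}{d}\frac{\omega_{d-1}}{\omega_d}\alpha^d$); and both endpoints of every pair are restricted, so the loss factor $(1-\eta_d(\alpha))^{-2}$ appears squared. Consequently the optimization lands elsewhere. Even in the idealized case where every cell is a cap, so that $1-\eta_d(\alpha)=(1-\alpha\kappa^{1/d})^d$ with $\kappa=\mathcal{H}_d(\mathbb{B}^d)/\mathcal{H}_d(\mathbb{S}^d)$, minimizing $\alpha^{d-s}(1-\alpha\kappa^{1/d})^{-2d}$ gives $\alpha^*\kappa^{1/d}=(s-d)/(s+d)$ and hence an $N^2$-coefficient $V_s(\mathbb{S}^d)\bigl(\frac{s+d}{2d}\bigr)^{2d}$, which by the AM--GM inequality always exceeds $(s/d)V_s(\mathbb{S}^d)$; the $N^{1+s/d}$-coefficient is likewise strictly larger than claimed, and for genuine Leopardi cells both constants also inherit unquantified geometric factors. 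The inscribed-cap workaround does not help: a cap inscribed in $A_i$ may sit arbitrarily close to $\partial A_i$, so the pairwise separation on which your bound $N^2(1-\eta)^{-2}\xctint(\alpha N^{-1/d})$ rests is lost unless you buffer again. Your argument therefore yields an upper bound of the correct order $N^{1+s/d}$ with an explicit but strictly larger constant (essentially a quantitative form of \eqref{EsSd}), not the proposition as stated; the minimum-principle step is the missing ingredient.
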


The constant $A_{s,d}$ in Proposition~\ref{thm:hypersing.lower.bound} and the corresponding constant in Proposition~\ref{thm:hypersing.upper.bound} provide lower and upper bounds for the  constant $C_{s,d}$ appearing in Proposition~\ref{prop:1st.term.s.EQ.d}.

\begin{cor}
Let $d \geq 2$ and $s > d$. Then for $(s-d)/2$ not an integer
\begin{equation*}
A_{s,d} \leq \frac{C_{s,d} }{ \left[ \mathcal{H}_d(\mathbb{S}^d) \right]^{s/d}} \leq \left[ \frac{\mathcal{H}_d(\mathbb{B}^d)}{\mathcal{H}_d(\mathbb{S}^d)( 1-d/s )}   \right]^{s/d},
\end{equation*}
where $A_{s,d}$ is given in Proposition~\ref{thm:hypersing.lower.bound}.
\end{cor}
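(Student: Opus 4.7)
The plan is to derive both inequalities in the corollary as immediate consequences of Propositions~\ref{thm:hypersing.lower.bound} and~\ref{thm:hypersing.upper.bound}, combined with the Poppy-seed Bagel Theorem (Theorem~\ref{prop:1st.term.s.GT.d}) applied to $A = \mathbb{S}^d$. That theorem provides
$$
\lim_{N \to \infty} \frac{\mathcal{E}_s(\mathbb{S}^d; N)}{N^{1+s/d}} = \frac{C_{s,d}}{\left[\mathcal{H}_d(\mathbb{S}^d)\right]^{s/d}},
$$
so the strategy is to divide the two-sided bounds of the preceding propositions by $N^{1+s/d}$ and send $N \to \infty$.

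For the lower bound, I would begin with Proposition~\ref{thm:hypersing.lower.bound} and divide through by $N^{1+s/d}$, obtaining
$$
\frac{\mathcal{E}_s(\mathbb{S}^d; N)}{N^{1+s/d}} \geq A_{s,d} + V_s(\mathbb{S}^d)\, N^{1 - s/d} + \mathcal{O}(N^{-2/d}).
$$
Since $s > d$, the exponent $1 - s/d$ is strictly negative, so the $V_s(\mathbb{S}^d)$ term tends to zero; the hypothesis that $(s-d)/2$ is not an integer (inherited from Proposition~\ref{thm:hypersing.lower.bound}) guarantees that $V_s(\mathbb{S}^d)$ is a finite value of the meromorphic extension introduced after Theorem~\ref{prop:potential.theoret.case}, avoiding the poles at $s = d + 2k$. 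The error term $\mathcal{O}(N^{-2/d})$ likewise vanishes, and passing to the limit gives the left-hand inequality of the corollary.

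The upper bound follows by the symmetric procedure: dividing the inequality of Proposition~\ref{thm:hypersing.upper.bound} by $N^{1+s/d}$ produces
$$
\frac{\mathcal{E}_s(\mathbb{S}^d; N)}{N^{1+s/d}} \leq \left[ \frac{\mathcal{H}_d(\mathbb{B}^d)}{\mathcal{H}_d(\mathbb{S}^d)(1 - d/s)} \right]^{s/d} + \frac{s}{d}\, V_s(\mathbb{S}^d)\, N^{1 - s/d} + \mathcal{O}(N^{-2/d}),
$$
and letting $N \to \infty$ yields the right-hand inequality.

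Honestly, there is no genuine obstacle here: all of the analytical substance lives in Propositions~\ref{thm:hypersing.lower.bound} and~\ref{thm:hypersing.upper.bound} (whose proofs are the hard work, deferred to Section~\ref{sectionproofs}). The only delicate point worth flagging in writing the corollary is the verification that the $N^2$ contributions in both propositions are subdominant relative to $N^{1+s/d}$, which is immediate from $s > d$ since this forces $1 + s/d > 2$, together with the standing assumption on $(s-d)/2$ to keep $V_s(\mathbb{S}^d)$ from blowing up at the excluded values.
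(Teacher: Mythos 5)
Your proposal is correct and is exactly the argument the paper intends: the corollary is stated as an immediate consequence of Propositions~\ref{thm:hypersing.lower.bound} and~\ref{thm:hypersing.upper.bound} together with the limit in Theorem~\ref{prop:1st.term.s.GT.d}, obtained by dividing by $N^{1+s/d}$ and letting $N\to\infty$, with the $N^2$- and error terms vanishing because $s>d$. The paper gives no separate proof, treating it as immediate, and your write-up supplies precisely the routine verification it omits.
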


For $d = 2$, the above bounds for $C_{s,d} / [ \mathcal{H}_d(\mathbb{S}^d) ]^{s/d}$ reduce to 
\begin{equation} \label{eq:bounds.4.cs2}
2^{-s/2} s^{-s/2} \big/ \left[ s/2 - 1 \right] \leq C_{s,2} / \left[4\pi \right]^{s/2} \leq 2^{-s} \left[ s / \left( s - 2 \right) \right]^{s/2}.
\end{equation}

\section{Conjectures for the Riesz $s$-energy}
\label{section4}
A   straightforward generalization of the asymptotics for the unit circle \eqref{circle.asympt} would be
\begin{equation}\label{straightforward}
\begin{split}
\mathcal{E}_s(\mathbb{S}^d; N) 
&= V_s( \mathbb{S}^d ) \, N^2 + \frac{C_{s,d}}{\left[\mathcal{H}_d(\mathbb{S}^d)\right]^{s/d}} \, N^{1+s/d} + \sum_{n=1}^p\beta_n(s,d) N^{1+s/d-2n/d} \\
&\phantom{=\pm}+ \mathcal{O}_{s,d,p}(N^{1+  s/d -2p/d-2/d}), \qquad \text{as $N \to \infty$}
\end{split}
\end{equation}
for $s$ not a pole of $V_s(\mathbb{S}^d)$ given in \eqref{V.s.d} and $C_{s,d}$ as defined in \eqref{eq:C.s.d}. The exceptional cases are caused by the simple poles of $V_s( \mathbb{S}^d)$ as a complex function in $s$. As $s \to d$, one of the terms in the finite sum has to compensate for the pole of the $N^2$ term, thus introducing a logarithmic term  (cf. the motivation for Conjecture~\ref{conj:Riesz.d.dsphere} in Section~\ref{sec:justification}). Recall  that for even $d$ there are only finitely many poles suggesting that there might only be finitely many exceptional cases for even $d$. %
It is possible that the higher-order terms ($p\geq1$) may not exist in this form;\footnote{In fact, Michael Kiessling from Rutgers University pointed out that higher-order terms might not exist, instead one could have an oscillating term as $N$ grows. Moreover, numerical results in Melnyk et al~\cite{MeKnSm1977} suggest that for $N$ fixed there might be an abrupt change of optimal configuration when an increasing $s$ passes certain critical values that depend on $N$. This behavior might also influence the existence of higher-order terms.} the presence of the $N^2$ and $N^{1+s/d}$ is motivated by known results and the principle of analytic continuation. In the potential theoretic case $-2<s<d$, the leading term is the $N^2$-term and $V_s(\mathbb{S}^d)$ equals the continuous $s$-energy of $\mathbb{S}^d$ (see Theorem~\ref{prop:potential.theoret.case}) and is its analytic continuation elsewhere. By the same token, the leading term in the hypersingular case $s>d$ is the $N^{1+s/d}$-term (see Theorem~\ref{prop:1st.term.s.EQ.d}) whose coefficient $C_{s,d}$ is assumed to have an analytic continuation to the complex $s$-plane as well.   % (see Conjecture~\ref{conj:C.s.d} tracking the supposed properties of $C_{s,d}$).

% Let $V_s(\mathbb{S}^d)$ denote the Riesz $s$-energy of $\mathbb{S}^d$ if $-2 < s < d$, $s \neq 0$, and its meromorphic continuation to $\Cset$ otherwise (cf., for example, \cite{}): \COMMENT{ref}
% \begin{equation} %\label{V.s.d}
% V_s(\mathbb{S}^d) = 2^{d-s-1} \frac{\gammafcn((d+1)/2) \gammafcn((d-s)/2)}{\sqrt{\pi}\gammafcn(d-s/2)},  \quad s\in \Cset, s\neq d,d+2,d+4,d+6, \dots.
% \end{equation}
% (Note that $V_0 = 1$.)

\begin{conj} \label{conj:Riesz.d.sphere}
Let $d\geq 2$. For $-2 <   s < d + 2$, $s\neq d$, there is a constant $C_{s,d}$ such that
\begin{equation}\label{eqn:Riesz.d.sphere}
\mathcal{E}_s( \mathbb{S}^d; N ) = V_s( \mathbb{S}^d ) \, N^2 + \frac{C_{s,d}}{\left[ \mathcal{H}_d( \mathbb{S}^d ) \right]^{s/d}} \, N^{1+s/d} + o(N^{1+   s/d}) \qquad \text{as $N \to \infty$,}
\end{equation}
where, for $s>d$, the constant   $C_{s,d}$ is the same as that in \eqref{eq:C.s.d}.
Furthermore, for   $d=2, \, 4, \, 8$ and $24$, 
$\displaystyle C_{s,d}=|\Lambda_d|^{s/d}\zeta_{\Lambda_d} (s)$,   
where $\Lambda_d$ is as in Conjecture~\ref{conj:C.s.d}.  For  $d=2$, $-2<s<4$, and $s\neq 2$, \eqref{eqn:Riesz.d.sphere} reduces to  
\begin{equation}\label{conj:Riesz.2.sphere}
\mathcal{E}_s( \mathbb{S}^2; N ) = \frac{2^{1-s}}{2-s} \, N^2 + \frac{\left( \sqrt{3} / 2 \right)^{s/2} \zetafcn_{\Lambda_2}(s)}{\left( 4 \pi \right)^{s/2}} \, N^{1+s/2} + o( N^{1+ s/2}) \qquad \text{as $N \to \infty$.}
\end{equation}
 
\end{conj}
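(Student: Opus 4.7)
The plan is to prove the expansion separately in the potential-theoretic regime $-2<s<d$ and in the hypersingular regime $d<s<d+2$, and then to reconcile the two second-order coefficients via analytic continuation in~$s$. In each regime the leading term is already established (Theorem~\ref{prop:potential.theoret.case} in the first and Theorem~\ref{prop:1st.term.s.GT.d} in the second); the real task is to extract the next-order coefficient and identify it with the meromorphic continuation of the opposite regime's leading coefficient. The template for this phenomenon is the circle computation culminating in~\eqref{circle.asympt}, where the entire expansion is literally a meromorphic function of~$s$ and the two dominant coefficients swap roles as $s$ crosses~$1$.

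I would first treat the hypersingular range $d<s<d+2$. Fix an optimal configuration $\omega_N^{*}=\{\PT{x}_1^{*},\dots,\PT{x}_N^{*}\}$ and split the double sum at an intermediate scale $r_N=N^{-1/d+\eta}$ into a short-range part and a long-range part. The known separation estimate $|\PT{x}_j^{*}-\PT{x}_k^{*}|\gtrsim N^{-1/d}$ together with the weak-$*$ convergence $\mu[\omega_N^{*}]\to\sigma_d$ should make the short-range sum, after a tangent-plane blow-up at each point, converge to the Poppy-seed Bagel contribution $C_{s,d}\,[\mathcal{H}_d(\mathbb{S}^d)]^{-s/d}N^{1+s/d}$. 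On the long-range part the kernel $|\cdot|^{-s}$ is bounded by $r_N^{-s}$, so a quantitative discrepancy estimate for $\mu[\omega_N^{*}]$ against $\sigma_d$ should allow replacement of the discrete double sum by $N^2\iint |\PT{x}-\PT{y}|^{-s}\,\dd\sigma_d(\PT{x})\dd\sigma_d(\PT{y})$, which, interpreted via the meromorphic continuation of~\eqref{V.s.d} past $s=d$, equals $V_s(\mathbb{S}^d)N^2$.

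For the potential-theoretic range $-2<s<d$, the symmetric decomposition is more delicate because the singularity is integrable and a naive short-range blow-up simply recovers part of the continuous energy rather than isolating an $N^{1+s/d}$-correction. The proposed remedy is to introduce a renormalized local energy by subtracting, inside each cell of a Leopardi equal-area partition, the corresponding double integral of $|\PT{x}-\PT{y}|^{-s}$ against $\sigma_d\otimes\sigma_d$; then $\mathcal{E}_s(\mathbb{S}^d;N)-V_s(\mathbb{S}^d)N^2$ decomposes into a sum of renormalized local contributions plus small cross terms. One must show that this renormalized local functional has the same $N^{1+s/d}$-asymptotics as in the hypersingular case and that the resulting coefficient, as a function of~$s$, agrees with the meromorphic continuation of $C_{s,d}$. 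This is where the conjectural analyticity hypothesis for the coefficients is genuinely used, and making it rigorous in the absence of a closed-form optimal configuration is the principal analytic difficulty.

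The identification $C_{s,d}=|\Lambda_d|^{s/d}\zetafcn_{\Lambda_d}(s)$ for $d\in\{2,4,8,24\}$ is the deepest part and the main obstacle I anticipate. The upper bound~\eqref{Clatt} is already provided by the lattice cut-off construction. The matching lower bound would require showing that optimal Riesz $s$-energy configurations on $\mathbb{S}^d$ are, at the microscopic scale $N^{-1/d}$, asymptotically tangent to a scaled copy of $\Lambda_d$, and then exploiting the optimality of $\Lambda_d$ among competing periodic arrangements in $\Rset^d$ for an Epstein-zeta-type exchange energy. This would rely on universal-optimality statements for the hexagonal lattice, $D_4$, $E_8$, and the Leech lattice that go substantially beyond the sphere-packing optimality presently available, and transferring such $\Rset^d$-statements to a blow-up of $\mathbb{S}^d$ is itself a nontrivial limit argument. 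I would therefore expect the specification of $C_{s,d}$ at these four dimensions to remain conditional on open packing-theoretic conjectures, even after the general shape~\eqref{eqn:Riesz.d.sphere} has been established.
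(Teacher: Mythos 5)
What you have been asked to prove is stated in the paper as a \emph{conjecture}, and the paper does not prove it; it only motivates it, via the analytic-continuation phenomenon visible in the circle expansion \eqref{circle.asympt} (the two leading coefficients swapping roles across $s=1$), the endpoint results Theorem~\ref{prop:potential.theoret.case} and Theorem~\ref{prop:1st.term.s.GT.d}, order-sharp but non-matching two-sided bounds on the second-order term, the lattice upper bound \eqref{Clatt} together with the Cohn--Elkies/Cohn--Kumar evidence behind Conjecture~\ref{conj:C.s.d}, and numerical data. Your proposal is accordingly a research program rather than a proof, and you are candid about this; the purpose of this review is to make the gaps concrete.

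First, in the potential-theoretic range $-2<s<d$ the strongest known statement is $-c\,N^{1+s/d}\le \mathcal{E}_s(\mathbb{S}^d;N)-V_s(\mathbb{S}^d)\,N^2\le -C\,N^{1+s/d}$ with distinct constants, so even the \emph{existence} of the coefficient of $N^{1+s/d}$ is open; your renormalized equal-area-cell functional is named but no mechanism is offered for proving it has a limit, and that is precisely the open problem rather than a deferrable step. Second, in the hypersingular range $d<s<d+2$, extracting the $N^2$-term with coefficient equal to the meromorphic continuation of $V_s(\mathbb{S}^d)$ and error $o(N^{1+s/d})$ would require a quantitative equidistribution or discrepancy estimate for optimal configurations that is not available; the paper's own Propositions~\ref{thm:hypersing.lower.bound} and \ref{thm:hypersing.upper.bound} only trap the $N^2$-coefficient between $V_s(\mathbb{S}^d)$ and $(s/d)V_s(\mathbb{S}^d)$. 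Third, the identification $C_{s,d}=|\Lambda_d|^{s/d}\zetafcn_{\Lambda_d}(s)$ for $d=2,4,8,24$ is exactly Conjecture~\ref{conj:C.s.d}; a lower bound matching \eqref{Clatt} would need energy-minimality (indeed universal optimality) of $\Lambda_d$ among \emph{all} configurations in $\Rset^d$, not merely among lattices or packings, plus a nontrivial blow-up transfer to $\mathbb{S}^d$ --- none of which is supplied. In short, your proposal correctly mirrors the paper's heuristic (two meromorphic coefficients in $s$ interchanging dominance at $s=d$, with lattice values at the special dimensions), but it does not close any of the three gaps that make the statement a conjecture, so it cannot be accepted as a proof.
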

\begin{rmk}
Note that $s=0$ in \eqref{eqn:Riesz.d.sphere} does not refer to the logarithmic case but rather to $\mathcal{E}_0( \mathbb{S}^d; N ) = N (N - 1)$, from which we deduce that   $C_{0,d}=-1$.  
\end{rmk}

For $s$ in the range $-2 < s < d + 2$ ($s \neq d$),   the $N^2$-term and $N^{1+s/d}$-term  are the two leading terms, interchanging their role as dominant term as $s$ passes by the boundary case $s = d$, where $V_s(\mathbb{S}^d)$ has a pole.  
 Its next pole for $s>d$ occurs at $s=d+2$.  For higher values of $s$ ($s > d + 2$) the $N^2$-term is conjectured to  be dominated by other powers of $N$.

% Moreover, the ideas leading to this conjecture were extended to the range $0<s<2$ in the same paper.
% In light of the now known asymptotics \eqref{circle.asympt} we believe the following to be true but are not yet able to prove it.
  
\begin{rmk}
Starting with the assumption that the Riesz $s$-energy of $N$ points is approximately given by $N$ times the potential $\Phi$ (or ``point energy'') created by all other $N-1$ points at a given point and using a semicontinuum approximation\footnote{This technique can be found in old papers addressing problems in solid state physics, cf. \cite[p.~188f]{GoAd1960}.} to approximate $\Phi$, Berezin~\cite{Be1986} arrived at the plausible asymptotics
\begin{equation*}
\begin{split}
\mathcal{E}_s( \mathbb{S}^2; N ) 
&\approx N^2 \frac{2^{1-s}}{2-s} \left[ 1 - \left( n / N \right)^{1-s/2} \right] + N \left( \frac{N \sqrt{3}}{8 \pi} \right)^{s/2} \\
&\phantom{=\pm\times}\times \left[ \frac{6}{1^s} + \frac{6}{( \sqrt{3} )^s} + \frac{6}{2^s} + \frac{12}{( \sqrt{7} )^s} + \frac{6}{3^s} + \frac{6}{( 2 \sqrt{3} )^s} + \frac{12}{( \sqrt{13} )^s} + \cdots \right].
\end{split}
\end{equation*}
The denominators [without the power] are the first $7$ distances in the hexagonal lattice $\Lambda_2$ and the numerators give the number of nearest neighbors with the corresponding distance. Thus, $\Phi$ is approximated by the flat hexagonal lattice using the nearest neighbors up to level $7$ and the remaining $N-n$ points are approximated by a continuum), where $n$ is one plus the number of at most $7$-th nearest neighbors. The $\cdots$ indicates that the local approximation can be extended to include more nearest neighbors. In fact, for $s>2$ the square-bracketed expression is the truncated zeta function for the hexagonal lattice. Figure \ref{fig-1} shows the distance distribution function for numerical approximation of a local optimal $900$-point $1$-energy configuration (cf. Womersley~\cite{WoSl2003a}). The first few rescaled lattice distances are superimposed over this graph. Note the remarkable coincidences with the peaks of the distance distribution function.
\end{rmk}

\begin{figure}[ht]
\begin{center}
\includegraphics[scale=1]{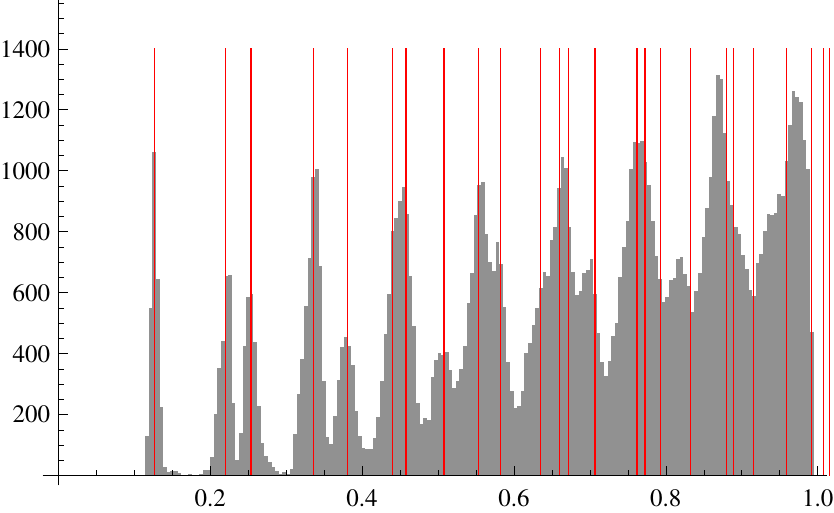}
\includegraphics[scale=1]{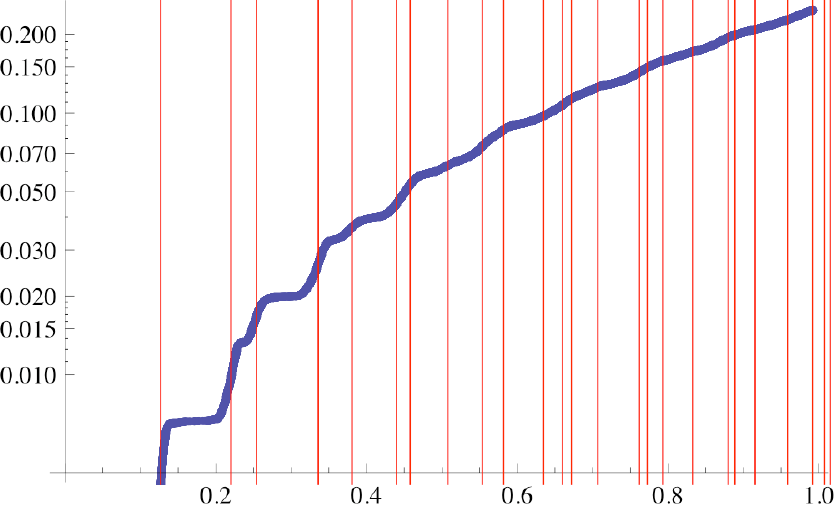}
\caption{\label{fig-1} The histogram (counting number of occurrences) and the empiric distribution function of the first $100 000$ distances of a local optimal $900$-point $1$-energy configuration. The vertical lines denote the hexagonal lattice distances adjusted such that the smallest distance coincidences with the best-packing distance.}
\end{center}
\end{figure}

 The zeta function $\zetafcn_{\Lambda_2}(s)$ appears in number theory as the zeta function of the imaginary quadratic field $\mathbb{Q}( \sqrt{-3} )$, whose integers can be identified with the hexagonal lattice $\Lambda_2$. It is known (cf., for example, \cite[Ch.~X, Sec.~7]{Co1980}) that $\zetafcn_{\Lambda_2}(s)$ admits a factorization
 \begin{equation} \label{zeta.lambda.prod}
 \zetafcn_{\Lambda_2}(s) = 6 \zetafcn( s / 2 ) \DirichletL_{-3}( s / 2 ), \qquad \re s > 2,
 \end{equation}
 into a product of the Riemann zeta function $\zetafcn$ and the first negative primitive Dirichlet $\DirichletL$-Series
 \begin{equation} 
 \DirichletL_{-3}(s) \DEF 1 - \frac{1}{2^s} + \frac{1}{4^s} - \frac{1}{5^s} + \frac{1}{7^s} - \cdots, \qquad \re s > 1.
 \end{equation}
 The Dirichlet $\DirichletL$-series above can be also expressed in terms of the Hurwitz zeta function $\zetafcn(s,a)$ at rational values $a$, cf. \cite{KuSa1998}. That is,
 \begin{equation} \label{dirichlet.id}
 \DirichletL_{-3}(s) = 3^{-s} \left[ \zetafcn(s,1/3) - \zetafcn(s,2/3) \right], \qquad \zetafcn(s,a) \DEF \sum_{k=0}^\infty \frac{1}{\left( k + a \right)^s}, \quad \re s > 1
 \end{equation}
 
\begin{rmk}
It is understood that $\zetafcn_{\Lambda_2}$ in \eqref{conj:Riesz.2.sphere} is the meromorphic extension to $\mathbb{C}$ of the right-hand side of \eqref{zeta.lambda.prod}. Since  $\zetafcn(s)$ is negative on the interval $[-1,1)$, has a pole at $s=1$, and is positive on $(1,\infty)$ and the Dirichlet $\DirichletL$-Series is positive on the interval $(-1,\infty)$\footnote{Note that $\DirichletL_{-3}(1-2m)=0$ for $m=1,2,3,4,\dots$, cf. \cite{MathWorldDirichletL2009}.} (cf.,  Eq. \eqref{zeta.lambda.prod}), it follows that $C_{s,2}$ in \eqref{conj:Riesz.2.sphere} would be negative for $-2<s<2$ and positive for $s>2$.
\end{rmk}

% \begin{equation*}
% \mathcal{E}_s( \mathbb{S}^2; N ) = \frac{2^{1-s}}{2-s} \, N^2 + \frac{\left( \sqrt{3} / 2 \right)^{s/2} \zetafcn_{\Lambda}(s)}{\left( 4 \pi \right)^{s/2}} \, N^{1+s/2} + \mathcal{O}_s( N^{-1+s/2}) \qquad \text{as $N \to \infty$.}
% \end{equation*}
% \begin{rmk} 
% 
% \end{rmk}
% 

% Figure \ref{fig0} shows the difference 
% \begin{equation*}
% \Delta_{1}(N) \DEF  \mathcal{E}_1( \mathbb{S}^2; N ) - 1 \times N^2 - \frac{\left( \sqrt{3} / 2 \right)^{1/2} \zetafcn_{\Lambda}(1)}{\left( 4 \pi \right)^{1/2}} \, N^{3/2},
% \end{equation*}
% where
% \begin{equation*}
% \frac{\left( \sqrt{3} / 2 \right)^{1/2} \zetafcn_{\Lambda}(1)}{\left( 4 \pi \right)^{1/2}} = -1.10610258671519\dots,
% \end{equation*}
% and the normalized difference $\Delta_1(N) / N$. The optimal energy is approximated using numerical data.
% \begin{figure}[ht]
% \begin{center}
% % \includegraphics[scale=.7]{conj1energy.eps} \includegraphics[scale=.7]{conj1energy2.eps} 
% \caption{\label{fig0}}
% \end{center}
% \end{figure}

Based on the motivating discussion in Section 7, we propose the following conjecture for the logarithmic energy.
\begin{conj} \label{conj:log.2.sphere}
For $d=2,4,8$, and $24$, 
\begin{equation}\label{logdasymp}
\mathcal{E}_{\mathrm{log}}( \mathbb{S}^d; N ) = V_{\mathrm{log}}( \mathbb{S}^d ) \, N^2 - \frac{1}{d} \, N \log N + C_{\mathrm{log},d} \, N + o(N), \qquad \text{as $N \to \infty$,}
\end{equation}
where the constant of the $N$-term is given by
\begin{equation}\label{Clogd}
C_{\mathrm{log},d} = \frac{1}{d}  \log (\mathcal{H}_d({\mathbb S}^d)/|\Lambda_d|)  + \zetafcn_{\Lambda_d}'(0).
\end{equation}
For the case $d=2$, \eqref{Clogd} reduces to
\begin{equation*}
C_{\mathrm{log},2} = 2 \log 2 + \frac{1}{2} \log \frac{2}{3} + 3 \log \frac{\sqrt{\pi}}{\gammafcn(1/3)} = -0.05560530494339251850\ldots < 0.
\end{equation*}
\end{conj}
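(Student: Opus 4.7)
The plan is to follow the strategy of Section~2 for the unit circle: differentiate the conjectured Riesz $s$-energy asymptotics termwise at $s=0^+$ and invoke \eqref{dEsElog}. Assuming Conjecture~\ref{conj:Riesz.d.sphere} in the strong form $C_{s,d}=|\Lambda_d|^{s/d}\zeta_{\Lambda_d}(s)$ available for $d\in\{2,4,8,24\}$, rewrite the expansion as
\begin{equation*}
\mathcal{E}_s(\mathbb{S}^d;N)=V_s(\mathbb{S}^d)\,N^2+\zeta_{\Lambda_d}(s)\left(\frac{N|\Lambda_d|}{\mathcal{H}_d(\mathbb{S}^d)}\right)^{\!s/d}\!N+o\!\left(N^{1+s/d}\right),
\end{equation*}
and differentiate each of the first two terms in $s$ at $0$.

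For the $N^2$-term, logarithmic differentiation of \eqref{V.s.d} together with $V_0(\mathbb{S}^d)=1$ gives $\left.\tfrac{d}{ds}V_s(\mathbb{S}^d)\right|_{s=0}=-\log 2+\tfrac{1}{2}[\digammafcn(d)-\digammafcn(d/2)]$, which by \eqref{V.log} equals $V_{\log}(\mathbb{S}^d)$. For the $N^{1+s/d}$-term, the decisive inputs are the standard identity $\zeta_{\Lambda_d}(0)=-1$ for the Epstein zeta of a rank-$d$ lattice and the product rule, yielding
\begin{equation*}
\left.\frac{d}{ds}\right|_{s=0}\!\left[\zeta_{\Lambda_d}(s)\!\left(\frac{N|\Lambda_d|}{\mathcal{H}_d(\mathbb{S}^d)}\right)^{\!s/d}\!N\right]=-\frac{N\log N}{d}+\left[\frac{1}{d}\log\frac{\mathcal{H}_d(\mathbb{S}^d)}{|\Lambda_d|}+\zeta_{\Lambda_d}'(0)\right]\!N,
\end{equation*}
which delivers both the $-\tfrac{1}{d}N\log N$ term and exactly the constant $C_{\log,d}$ of \eqref{Clogd}. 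As a sanity check, evaluating (rather than differentiating) at $s=0$ reproduces $V_0(\mathbb{S}^d)N^2-N=N(N-1)=\mathcal{E}_0(\mathbb{S}^d;N)$.

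The specialization to $d=2$ then uses the factorization \eqref{zeta.lambda.prod}: from $\zeta_{\Lambda_2}'(0)=3\zeta'(0)\DirichletL_{-3}(0)+3\zeta(0)\DirichletL_{-3}'(0)$ one computes $\DirichletL_{-3}'(0)$ via \eqref{dirichlet.id} and Lerch's formula $\zeta'(0,a)=\log\Gamma(a)-\tfrac12\log(2\pi)$, applying the reflection $\Gamma(1/3)\Gamma(2/3)=2\pi/\sqrt{3}$ to combine the two $\Gamma$-values. Substituting $|\Lambda_2|=\sqrt{3}/2$ and $\mathcal{H}_2(\mathbb{S}^2)=4\pi$ into \eqref{Clogd} is a routine algebraic step that recovers the closed form and the numerical value displayed in the statement.

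The principal obstacle is rigorous rather than computational: one must justify the interchange of $\tfrac{d}{ds}|_{s=0^+}$ with both the minimum defining $\mathcal{E}_s$ (formally legitimate by \eqref{dEsElog}) and the $o(N^{1+s/d})$ remainder of the Riesz expansion. Concretely, one needs that the remainder is differentiable in $s$ in a neighborhood of $s=0$ with derivative of size $o(N)$, which would require uniform-in-$s$ control on how the near-minimizing configurations vary with $s$; on the unit circle this was automatic because the minimizers (roots of unity) are independent of $s$ and an explicit formula \eqref{circle.asympt} is available. A secondary obstacle is that Conjecture~\ref{conj:Riesz.d.sphere} is itself open, so the present derivation is conditional on its validity for $d\in\{2,4,8,24\}$.
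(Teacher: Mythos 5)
Your derivation is correct and follows essentially the same route as the paper's own ``Motivation for Conjecture~\ref{conj:log.2.sphere}'' in Section~\ref{sec:justification}: termwise differentiation at $s=0^+$ of the conjectured expansion \eqref{eqn:Riesz.d.sphere} with $C_{s,d}=|\Lambda_d|^{s/d}\zeta_{\Lambda_d}(s)$, using $\left.\frac{\dd}{\dd s}V_s(\mathbb{S}^d)\right|_{s=0}=V_{\log}(\mathbb{S}^d)$, $\zeta_{\Lambda_d}(0)=-1$, and (for $d=2$) the factorization \eqref{zeta.lambda.prod} together with the Hurwitz zeta values as in Proposition~\ref{prop:zeta.lambda.at.0}. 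The obstacles you flag — the interchange of $\frac{\dd}{\dd s}$ with the minimization and with the $o(N^{1+s/d})$ remainder, and the conditionality on Conjecture~\ref{conj:Riesz.d.sphere} — are exactly the assumptions the paper makes explicit (the hypothesis $\Delta(\mathbb{S}^d;N)=o(N)$), which is why the statement is presented as a conjecture with a motivation rather than a theorem with a proof.
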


\begin{rmk}
  We expect, more generally,  that \eqref{logdasymp} holds for arbitrary $d$ provided that $C_{s,d}$ is differentiable at $s=0$, in which case \eqref{Clogd} becomes $$C_{\log, d}=\left.\frac{\dd}{\dd s}C_{s,d}\right|_{s=0}+(1/d)\log \mathcal{H}_d({\mathbb S}^d).$$ 

 Regarding the bounds \eqref{liminfbound} and \eqref{limsupbound}, note that $C_{\mathrm{log},2}$ is closer to the upper bound given in  \eqref{limsupbound}, which gives rise to the question if the related argument can be improved to give the precise value.
 
\end{rmk}

\subsection{The boundary case $s=d$}

% From the result of Kuijlaars and the third author~\cite{KuSa1998} we obtain the dominant term of the asymptotical expansion of the optimal Riesz $d$-energy:
% \begin{equation*}
% \lim_{N \to \infty} \frac{\mathcal{E}_d( \mathbb{S}^d; N)}{N^2 \log N} = \frac{\mathcal{H}_d(\mathbb{B}^d)}{\mathcal{H}_d(\mathbb{S}^d)} = \frac{1}{d} \frac{\gammafcn((d+1)/2)}{\sqrt{\pi} \gammafcn(d/2)}.
% \end{equation*}
% (Here, $\mathbb{B}^d$ denotes the unit ball in $\mathbb{R}^d$.) 
As with the unit circle, we expect to obtain the asymptotics of the optimal Riesz energy in the singular case from the corresponding asymptotics of the Riesz $s$-energy, $s\neq d$ and $s$ sufficiently close to $d$, by means of a limit process $s \to d$. This approach leads to the next conjectures whose motivating analysis is given in Section 7. 

\begin{conj} \label{conj:Riesz.d.dsphere}
Let $d \ge 1$. Then
\begin{equation*}
\mathcal{E}_d( \mathbb{S}^d; N) = \frac{\mathcal{H}_d(\mathbb{B}^d)}{\mathcal{H}_d(\mathbb{S}^d)} \, N^2 \log N + C_{d,d} N^2 + \mathcal{O}(1) \qquad \text{as $N \to \infty$,}
\end{equation*}
where
\begin{equation*}
C_{d,d} = \lim_{s\to d} \left[ V_s( \mathbb{S}^d ) + \frac{C_{s,d}}{\left[ \mathcal{H}_d( \mathbb{S}^d ) \right]^{s/d}} \right].
\end{equation*}
For $d=2$, this becomes
\begin{equation*}
C_{2,2} = \frac{1}{4} \left[ \gamma - \log ( 2 \sqrt{3} \pi ) \right] + \frac{\sqrt{3}}{4 \pi} \left[ \gamma_1(2/3) - \gamma_1(1/3) \right] = -0.08576841030090248365\dots < 0.
\end{equation*}
Here, $\gamma$ is the Euler-Mascheroni constant and $\gamma_n(a)$ is the generalized Stieltjes constant appearing as the coefficient $\gamma_n(a) / n!$ of $(1-s)^n$ in the Laurent series expansion of the Hurwitz zeta function $\zetafcn(s,a)$   about $s=1$.
\end{conj}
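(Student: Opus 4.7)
The plan is to derive the statement as a formal limit $s\to d$ of the two-term expansion in Conjecture~\ref{conj:Riesz.d.sphere}, in parallel with the derivation of the logarithmic case on $\mathbb{S}$ from \eqref{circle.asympt} by differentiation at $s=0$.  The key observation is that both coefficient functions $V_s(\mathbb{S}^d)$ and $\widetilde C_{s,d}\DEF C_{s,d}/[\mathcal{H}_d(\mathbb{S}^d)]^{s/d}$ are meromorphic in $s$ with simple poles at $s=d$.  From \eqref{V.s.d}, using $\gammafcn((d-s)/2)\sim -2/(s-d)$ as $s\to d$, a direct computation gives
\begin{equation*}
\alpha_d\DEF\res_{s=d} V_s(\mathbb{S}^d) = -\frac{\gammafcn((d+1)/2)}{\sqrt{\pi}\,\gammafcn(d/2)} = -d\,\frac{\mathcal{H}_d(\mathbb{B}^d)}{\mathcal{H}_d(\mathbb{S}^d)}.
\end{equation*}
Since $\mathcal{E}_s(\mathbb{S}^d;N)$ depends continuously on $s$ across $s=d$ for each fixed $N$, the residue of $\widetilde C_{s,d}$ at $s=d$ must equal $-\alpha_d$ so that the two poles cancel in the sum.

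Writing $V_s(\mathbb{S}^d) = \alpha_d/(s-d) + \beta_d(s)$ and $\widetilde C_{s,d} = -\alpha_d/(s-d) + \gamma_d(s)$ with $\beta_d,\gamma_d$ analytic at $s=d$, the two leading terms in \eqref{eqn:Riesz.d.sphere} combine as
\begin{equation*}
V_s(\mathbb{S}^d) N^2 + \widetilde C_{s,d} N^{1+s/d} = \frac{\alpha_d\,(N^2 - N^{1+s/d})}{s-d} + \beta_d(s)\,N^2 + \gamma_d(s)\,N^{1+s/d}.
\end{equation*}
Using $N^{1+s/d} = N^2 + \frac{s-d}{d}\,N^2\log N + O((s-d)^2 N^2(\log N)^2)$ and letting $s\to d$, one extracts the $N^2\log N$ coefficient $-\alpha_d/d = \mathcal{H}_d(\mathbb{B}^d)/\mathcal{H}_d(\mathbb{S}^d)$ (in agreement with Theorem~\ref{prop:1st.term.s.EQ.d}) and the $N^2$ coefficient $C_{d,d} = \beta_d(d) + \gamma_d(d) = \lim_{s\to d}[V_s(\mathbb{S}^d) + \widetilde C_{s,d}]$.

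For the explicit form at $d=2$, I would combine $V_s(\mathbb{S}^2) = -\tfrac12(s-2)^{-1} + \tfrac12\log 2 + O(s-2)$ with the factorization \eqref{zeta.lambda.prod}, giving
\begin{equation*}
\widetilde C_{s,2} = 6\left(\frac{\sqrt 3}{8\pi}\right)^{s/2}\zetafcn(s/2)\DirichletL_{-3}(s/2).
\end{equation*}
Here $\zetafcn(s/2) = 2/(s-2) + \gamma + O(s-2)$ supplies the pole, while $\DirichletL_{-3}(s/2)$ is regular at $s=2$.  Using \eqref{dirichlet.id} and the Hurwitz Laurent series $\zetafcn(s,a) = 1/(s-1) - \digammafcn(a) + \sum_{n\ge1}\frac{(-1)^n}{n!}\gamma_n(a)(s-1)^n$, together with the reflection identity $\digammafcn(2/3) - \digammafcn(1/3) = \pi/\sqrt 3$, one obtains $\DirichletL_{-3}(1) = \pi/(3\sqrt 3)$ and $\DirichletL_{-3}'(1) = -\frac{\pi\log 3}{3\sqrt 3} + \frac13[\gamma_1(2/3) - \gamma_1(1/3)]$.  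Collecting the regular part of $\widetilde C_{s,2}$ at $s=2$ (whose residue $+\tfrac12$ confirms cancellation with $\alpha_2 = -\tfrac12$) and adding $\tfrac12\log 2$ from $V_s(\mathbb{S}^2)$ reproduces the stated closed form after the simplification $\tfrac12\log 2 + \tfrac14\log(\sqrt 3/(8\pi)) - \tfrac14\log 3 = -\tfrac14\log(2\sqrt 3\pi)$.

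The principal obstacle is that this derivation is entirely conditional: Conjecture~\ref{conj:Riesz.d.sphere} is itself open, and one must also assume that $C_{s,d}$ extends meromorphically in $s$ with residue at $s=d$ exactly $-\alpha_d$.  Moreover, even granting all this, the $o(N^{1+s/d})$ remainder in \eqref{eqn:Riesz.d.sphere} only formally produces an $o(N^2)$ error after the limit, whereas Conjecture~\ref{conj:Riesz.d.dsphere} asserts the stronger $\mathcal{O}(1)$.  Closing this gap would require uniform-in-$s$ control of the subsequent terms in the conjectural expansion \eqref{straightforward} together with analogous residue cancellations at $s=d$ in each term, or a fundamentally different argument tailored to the boundary case.
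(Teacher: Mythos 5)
Your derivation matches the paper's own motivation for this conjecture essentially step for step: the same pole--cancellation of $V_s(\mathbb{S}^d)$ and $C_{s,d}/[\mathcal{H}_d(\mathbb{S}^d)]^{s/d}$ at $s=d$ (with residues $a_{-1,d}=-d\,\mathcal{H}_d(\mathbb{B}^d)/\mathcal{H}_d(\mathbb{S}^d)$ and $-a_{-1,d}$), the same extraction of the $N^2\log N$ term from $a_{-1,d}(N^2-N^{1+s/d})/(s-d)$, and the same identification $C_{d,d}=A_d+B_d$ as the limit of the sum of regular parts; your explicit evaluation of $B_2$ via $\zetafcn_{\Lambda_2}(s)=6\zetafcn(s/2)\DirichletL_{-3}(s/2)$ and the Stieltjes constants reproduces exactly the value the paper obtains by computer algebra. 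Your closing caveat --- that the argument is conditional on Conjecture~\ref{conj:Riesz.d.sphere} plus a meromorphy assumption on $C_{s,d}$, and that it only yields an $o(N^2)$ remainder rather than the stated $\mathcal{O}(1)$ --- is precisely the paper's own stance, which presents this as a motivation rather than a proof.
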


%\begin{conj} \label{conj:Riesz.d.dsphere}
%For $d=2$ 
%\begin{equation*}
%\mathcal{E}_2( \mathbb{S}^2; N) = \frac{1}{4} \, N^2 \log N + C_{2,2} N^2 + \mathcal{O}(1) \qquad \text{as $N \to \infty$,}
%\end{equation*}
%where
%\begin{equation*}
%C_{2,2} = \frac{1}{4} \left[ \gamma - \log ( 2 \sqrt{3} \pi ) \right] + \frac{\sqrt{3}}{4 \pi} \left[ \gamma_1(2/3) - \gamma_1(1/3) \right] = -0.08576841030090248365\dots < 0.
%\end{equation*}
%Here, $\gamma$ is the Euler-Mascheroni constant and $\gamma_n(a)$ is the generalized Stieltjes constant appearing as the coefficient $\gamma_n(a) / n!$ of $(1-s)^n$ in the Laurent series expansion of $\zetafcn(s,a)$ about $s=1$.
%\end{conj}

% Figure \ref{fig2} shows the difference 
% \begin{equation*}
% \Delta_{2}(N) \DEF  \mathcal{E}_d( \mathbb{S}^d; N) - \frac{1}{4} \, N^2 \log N - C_{2,2} N^2
% \end{equation*}
% and the normalized difference $\Delta_{2}(N) / ( N \log N )$. The optimal energy is approximated using numerical data.
% \begin{figure}[ht]
% \begin{center}
% % \includegraphics[scale=.7]{conj2energy.eps} \includegraphics[scale=.7]{conj2energy2.eps} 
% \caption{\label{fig2}}
% \end{center}
% \end{figure}

\section{Numerical Results}

Rob Womersley from UNSW kindly provided numerical data, which we used to test our conjectures. For the logarithmic and the Coulomb cases ($s=1$) on ${\mathbb S}^2$ results for small numbers ($N = 4, \dots, 500$) and for large  numbers of points  ($N=(n+1)^2$ points, $N$ up to $22801$) are given. The reader is cautioned that these numerical data represent approximate optimal energies which we denote by $\hat{\mathcal{E}}_{\rm log}({\mathbb S}^2;N)$ or $\hat{\mathcal{E}}_s({\mathbb S}^2;N)$. 

A general observation is the slow convergence of the sequence of $s$-energy values. 

\subsection{Logarithmic case}

Figures~\ref{fig:2} and \ref{fig:3} show the convergence to the conjectured coefficient of the $N$-term (see Conjecture~\ref{conj:log.2.sphere})
\begin{equation}\label{log.energy.diff1}
\left\{ \hat{\mathcal{E}}_{\rm log}({\mathbb S}^2;N) - \left[ V_{\mathrm{log}}( \mathbb{S}^2 ) \, N^2 - \frac{1}{2} \, N \log N \right] \right\} / N.
\end{equation}
%The inset graph captures the behavior if the $N$-term is included, that is 
%\begin{equation}\label{log.energy.diff2}
%\left\{ \hat{\mathcal{E}}_{\rm log}({\mathbb S}^2;N) - \left[ V_{\mathrm{log}}( \mathbb{S}^2 ) \, N^2 - \frac{1}{2} \, N \log N + C_{\mathrm{log},2} \, N \right] \right\} / \log N.
%\end{equation}
The horizontal line indicates the value of $C_{\mathrm{log},2}$ given in Conjecture~\ref{conj:log.2.sphere}.

\begin{figure}[ht]
\begin{center}
\includegraphics[scale=1.2]{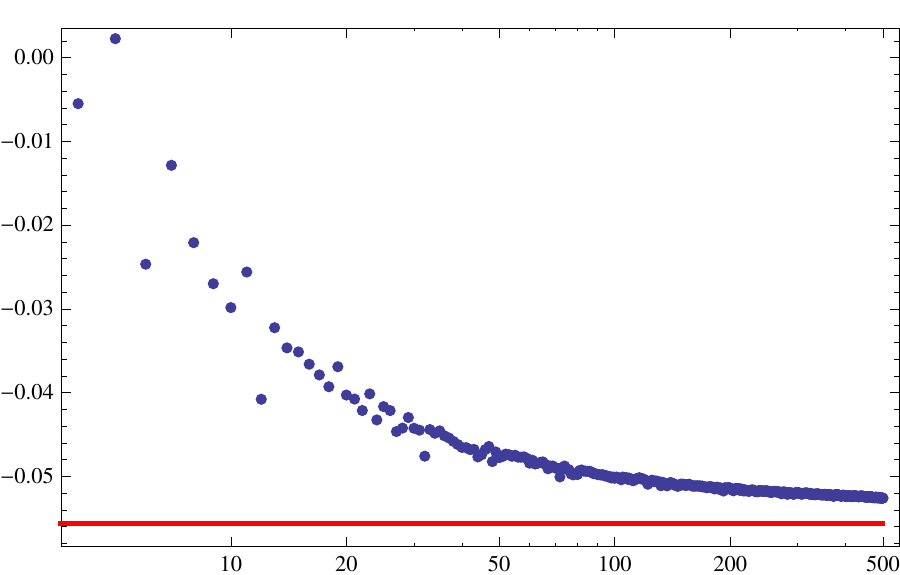}
\caption{\label{fig:2} The logarithmic case on $\mathbb{S}^2$, $N = 4, 5, 6 \dots, 500$ points. The horizontal axis shows $N$ on a logarithmic scale and the quantity \eqref{log.energy.diff1} on the vertical axis. The horizontal line shows the conjectured limit $C_{\log, 2}$.}
\end{center}
\end{figure}

\begin{figure}[ht]
\begin{center}
\includegraphics[scale=1.2]{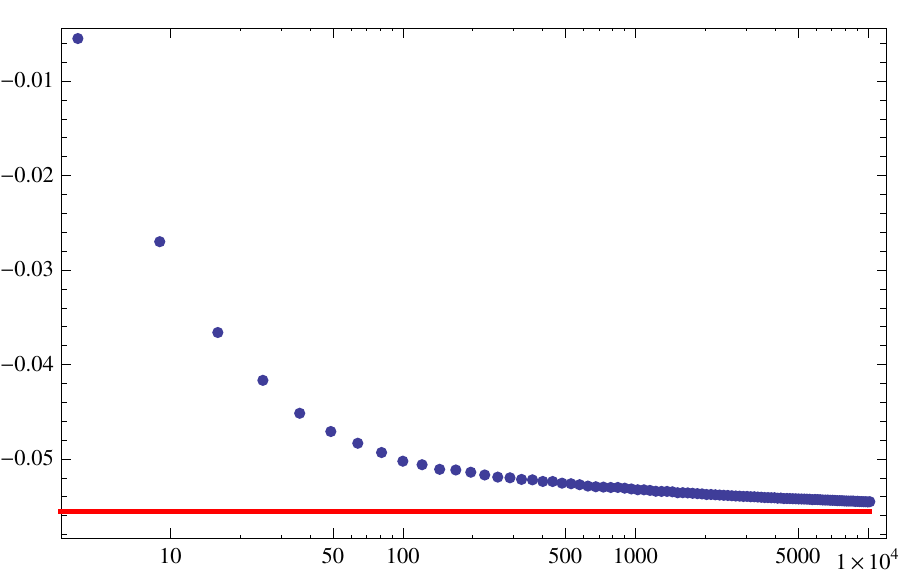}
\caption{\label{fig:3} The logarithmic case on $\mathbb{S}^2$ with $N = 4, 9, 16 \dots, 10201$ points. The same quantities are shown as in Figure~\ref{fig:2}.}
\end{center}
\end{figure}

\subsection{The Coulomb case $s=1$} 

Figures~\ref{fig:4} and \ref{fig:5} show the convergence to the conjectured coefficient of the $N^{1+1/2}$-term (see Conjecture~\ref{conj:Riesz.d.dsphere})
\begin{equation}\label{s1.diff1}
\left\{ \hat{\mathcal{E}}_1({\mathbb S}^2;N) - 1 \times N^2 \right\} / N^{1+1/2}.
\end{equation}
%The inset graph captures the behavior if this term is included, that is 
%\begin{equation}\label{s1.diff2}
%\left\{ \hat{\mathcal{E}}_1({\mathbb S}^2;N) - \left[ 1 \times N^2 + \frac{C_{1,2}}{\sqrt{4\pi}} N^{1+1/2} \right] \right\} / N^{1/2}.
%\end{equation}
The horizontal line indicates  the value of $C_{1,2} / \sqrt{4\pi}$.

\begin{figure}[ht]
\begin{center}
\includegraphics[scale=1.2]{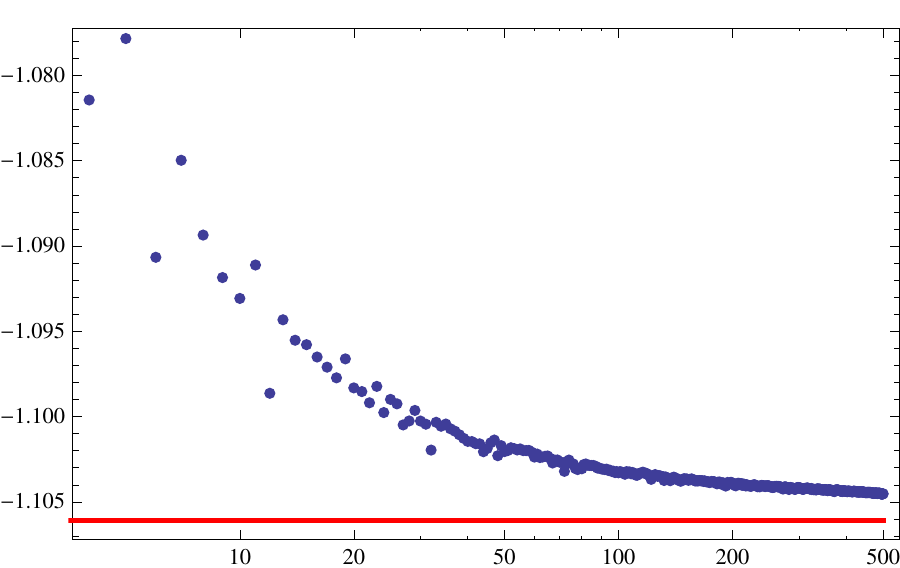}
\caption{\label{fig:4} The Coulomb case ($s=1$) on $\mathbb{S}^2$, $N = 4, 5, 6 \dots, 500$ points.   The horizontal axis shows $N$ on a logarithmic scale and the quantity \eqref{s1.diff1} on the vertical axis. The horizontal line shows the conjectured limit $C_{1,2} / \sqrt{4\pi}$.}
\end{center}
\end{figure}

\begin{figure}[ht]
\begin{center}
\includegraphics[scale=1.2]{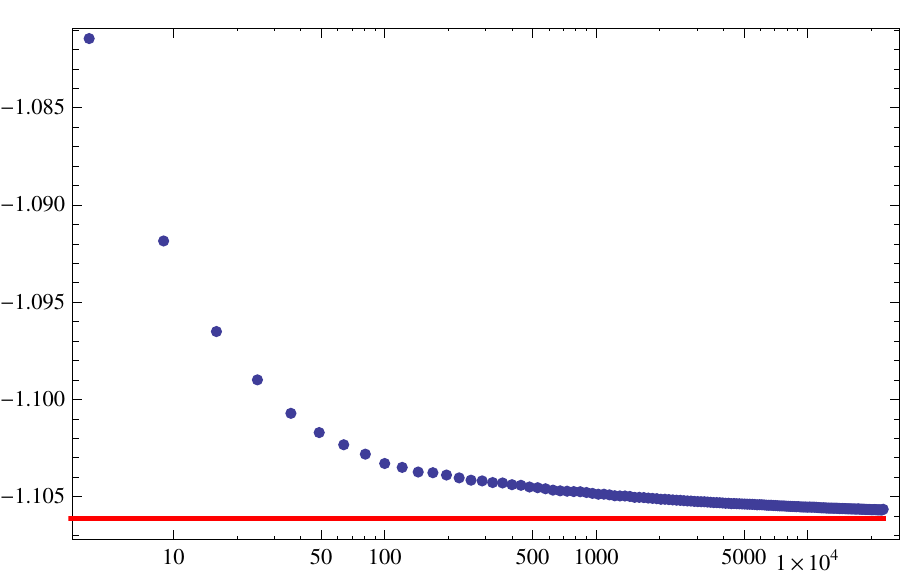}
\caption{\label{fig:5} The Coulomb case ($s=1$) on $\mathbb{S}^2$, $N = 4, 9, 16 \dots, 22801$ points. The same quantities are shown as in Figure~\ref{fig:4}.}
\end{center}
\end{figure}

\subsection{The boundary case $s = d = 2$}

Figure~\ref{fig:6} shows the convergence to the conjectured coefficient of the $N^{2}$-term (see Conjecture~\ref{conj:Riesz.d.dsphere})
\begin{equation}\label{s2.diff1}
\left\{ \hat{\mathcal{E}}_2({\mathbb S}^2;N) - \frac{1}{4} \, N^2 \log N \right\} / N^2.
\end{equation}
%The inset graph captures the behavior if this term is included, that is 
%\begin{equation}\label{s2.diff2}
%\left\{ \hat{\mathcal{E}}_2({\mathbb S}^2;N) - \left[ \frac{1}{4} \, N^2 \log N + C_{2,2} \, N^2 \right] \right\} / \left[ N \log N \right].
%\end{equation}
The horizontal line indicates the value of $C_{2,2}$.

\begin{figure}[ht]
\begin{center}
\includegraphics[scale=1.2]{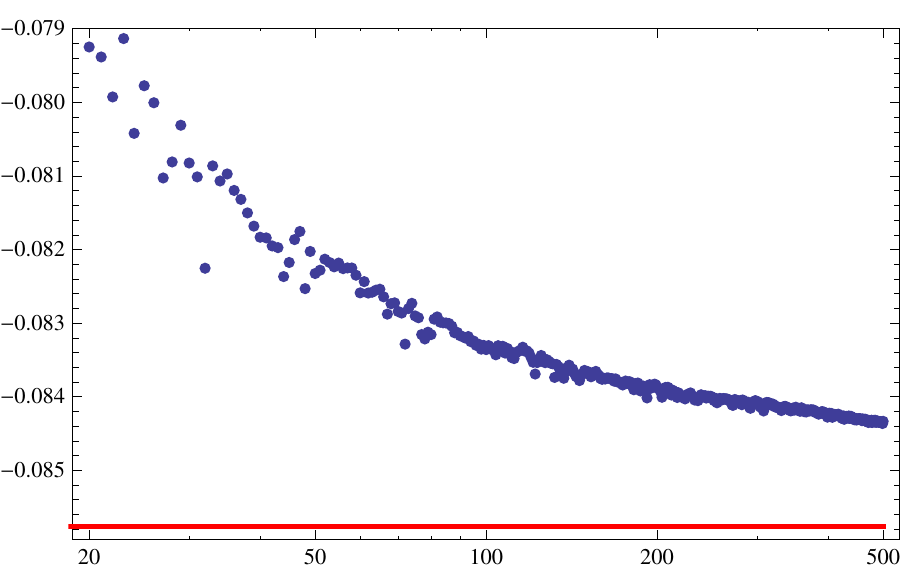}
\caption{\label{fig:6} The boundary case ($s=2$) on $\mathbb{S}^2$, $N = 4, 5, 6 \dots, 500$ points. The horizontal axis shows $N$ on a linear scale and the quantity \eqref{s2.diff1} on the vertical axis. The horizontal line shows the conjectured limit $C_{2,2}$.}
\end{center}
\end{figure}

\section{Proofs}
In the following we set $$\omega_d:=\mathcal{H}_d(\mathbb{S}^d)= \frac{2\pi^{(d+1)/2}}{\Gamma((d+1)/2)}.$$
\label{sectionproofs}
\begin{proof}[Proof of lower bound in Proposition~\ref{thm:2nd.term.boundary}]
This proof follows the first part of the proof of Theorem~3 in \cite{KuSa1998}. By an idea of Wagner, the (hyper)singular Riesz $d$-kernel $1/r^d$ is approximated by the smaller continuous kernel $1 / ( \eps + r^2 )^{d/2}$ ($\eps > 0$). Then, for $\PT{x}, \PT{y}\in \mathbb{S}^d$, we have $1 / ( \eps + |\PT{x}-\PT{y}|^2 )^{d/2}=K_\eps( \langle \PT{x},\PT{y}\rangle)$ where $K_\eps(t) \DEF ( 2 - 2 t + \eps )^{-d/2}$. Note that $K_\eps$ is positive definite in the sense of Schoenberg~\cite{Sch1942}; that is, it has an expansion $K_\eps(t) = \sum_{n = 0}^\infty a_n(\eps) \, P_n^d(t)$ in terms of ultraspherical (normalized Gegenbauer polynomials $P_n^d(t) = C_n^\lambda(t) / C_n^\lambda(1)$, where $\lambda = ( d - 1 ) / 2$) with positive coefficients $a_n(\eps)$ ($n \geq 1$) giving rise to the estimates
\begin{equation} \label{Edlowerbnd}
\begin{split}
E_d( X_N )  &\geq \sum_{j \neq k} \frac{1}{\left[ \eps + \left| \PT{x}_j - \PT{x}_k \right|^2 \right]^{d/2}} = \sum_{ j, k}  K_\eps( \langle \PT{x}_j, \PT{x}_k \rangle )-\sum_{j}  K_\eps( \langle \PT{x}_j, \PT{x}_j \rangle ) \\ &\geq a_0(\eps) \, N^2 - K_\eps(1) \, N,
\end{split}
\end{equation}
where we used the positivity of $K_\eps$ and $P_0^d(t)=1$ (cf. Sec. 3 of \cite{KuSa1998}).  Using the integral representation \cite[Eq.~15.6.1]{DLMF2010.05.10} of the regularized Gauss hypergeometric function we obtain  
 \begin{align*}
a_0( \eps ) 
&= \frac{\omega_{d-1}}{\omega_d} \int_{-1}^1 \left( 2 - 2 t + \eps \right)^{-d/2} \left( 1 - t^2 \right)^{d/2 - 1} \dd t \notag \\
&= \left( 4 + \eps \right)^{-d/2} 2^{d-1} \frac{\omega_{d-1}}{\omega_d} \int_0^1 u^{d/2-1} \left( 1 - u \right)^{d/2-1} \left( 1 - \frac{1}{1+\eps/4} u \right)^{-d/2} \dd u \notag \\
&= \left( 4 + \eps \right)^{-d/2} 2^{d-1} \frac{\omega_{d-1}}{\omega_d} \gammafcn(d/2)^2\HypergeomReg{2}{1}{d/2,d/2}{d}{\frac{1}{1+\eps/4}}.
\end{align*}
 Assuming that $\eps < 2$, we apply the linear transformation \cite[Eq.~15.8.11]{DLMF2010.05.10} with the understanding that $\digammafcn(d/2 - k) / \gammafcn(d/2 - k)$ is interpreted as $(-1)^{k-d/2+1} (k - d/2)!$ if $d/2-k$ is a non-positive integer: 
\begin{align*}
a_0( \eps ) 
% &= - \left( 4 + \eps \right)^{-d/2} 2^{d-1} \frac{\omega_{d-1}}{\omega_d} \gammafcn(d/2) \gammafcn(d/2) \, \frac{\left( 4 + \eps \right)^{d/2}}{\gammafcn(d/2)} \sum_{k=0}^\infty \frac{\Pochhsymb{d/2}{k}}{k! k! \gammafcn(d/2-k)} \left( \frac{\eps}{4} \right)^{k} \left[ \log \frac{\eps}{4} - 2 \digammafcn(k+1) + \digammafcn(d/2 + k) + \digammafcn(d/2 - k) \right]
&= - \frac{1}{2} \frac{\omega_{d-1}}{\omega_d} \gammafcn(d/2) \sum_{k=0}^\infty \frac{\Pochhsymb{d/2}{k}}{k! k! \gammafcn(d/2-k)} \left( \frac{\eps}{4} \right)^{k} \left[ \log \frac{\eps}{4} - 2 \digammafcn(k+1) + \digammafcn(d/2 + k) + \digammafcn(d/2 - k) \right] \\
&= \frac{1}{2} \frac{\omega_{d-1}}{\omega_d} \left( - \log \frac{\eps}{4} \right) \Hypergeom{2}{1}{1-d/2,d/2}{1}{- \frac{\eps}{4}} \\
&\phantom{=}- \frac{1}{2} \frac{\omega_{d-1}}{\omega_d} \gammafcn(d/2) \sum_{k=0}^\infty \frac{\Pochhsymb{d/2}{k}}{k! k! \gammafcn(d/2-k)} \left( \frac{\eps}{4} \right)^{k} \left[ \digammafcn(d/2 + k) + \digammafcn(d/2 - k) - 2 \digammafcn(k+1) \right].
\end{align*}
Note that the (non-regularized) Gauss hypergeometric function is a polynomial of degree $d/2-1$ if $d$ is even and reduces to $1$ if $d = 2$. Using the series representation \cite[Eq.~15.2.1]{DLMF2010.05.10} we arrive at 
\begin{align*}
a_0( \eps ) &= \frac{1}{2} \frac{\omega_{1}}{\omega_2} \left( - \log \eps \right) + \frac{\omega_{1}}{\omega_2} \log 2 + \mathcal{O}(\eps) = \frac{1}{4}\log(4/\eps) + \mathcal{O}(\eps) \quad \text{as $\eps \to 0$ if $d = 2$,} \\
a_0( \eps ) &= \frac{1}{2} \frac{\omega_{d-1}}{\omega_d} \left( - \log \eps \right) - \frac{\omega_{d-1}}{\omega_d} \left[ \digammafcn(d/2) - \digammafcn(1) - \log 2 \right] + \mathcal{O}(\eps \log(1/\eps)) \quad \text{as $\eps \to 0$ if $d \geq 3$.}
\end{align*}
Using the substitution $\eps = a^2 N^{-2/d}$ and noting that $K_\eps(1)=\eps^{-d/2}$, there follows 
from \eqref{Edlowerbnd} (one has $\mathcal{O}(N^{2-2/d})$ as $N \to \infty$ if $d = 2$)
\begin{equation*}
\mathcal{E}_d( \mathbb{S}^d; N ) \geq \begin{cases} \frac{1}{4}   \, N^2 \log N - f(2; a) \, N^2 + \mathcal{O}(N^{}), & d=2,\\
\frac{1}{d} \frac{\omega_{d-1}}{\omega_d} \, N^2 \log N - f(d; a) \, N^2 + \mathcal{O}(N^{2-2/d} \log N) , & d\ge 3,\end{cases}
\end{equation*}
as $N \to \infty$, where for each positive integer $d$ the function 
\begin{equation*}
f(d; a) \DEF \frac{\omega_{d-1}}{\omega_d} \left[ \log a + \digammafcn(d/2) - \digammafcn(1) - \log 2 \right] + a^{-d}, \qquad a > 0,
\end{equation*}
has a single global minimum at $a^* = [ \omega_{d-1} / (d\omega_d )]^{-1/d} $ in the interval $(0,\infty)$ with value
\begin{equation*}
f(d; a^*) = \frac{1}{d} \frac{\omega_{d-1}}{\omega_d} \left[ 1 - \log( \frac{1}{d} \frac{\omega_{d-1}}{\omega_d} ) + d \left( \digammafcn(d/2) - \digammafcn(1) - \log 2 \right) \right].
\end{equation*}
Let $F_d \DEF  \omega_{d-1} / (d\ \omega_d)$.  Then it is elementary to verify that   $F_d=\mathcal{H}_d(\mathbb{B}^d) / \mathcal{H}_d(\mathbb{S}^d)$. It remains to show that $f(d;a^*) > 0$. It is easy to see that $F_d > F_{d+2} > 0$ and $F_1 = 1 / \pi < 1$ and $F_2 = 1 / 4 < 1$. Thus, $1 - \log F_d > 0$. Since the digamma function is strictly  increasing, the expression $\digammafcn(d/2) - \digammafcn(1) - \log 2 > 0$ for $d \geq 4$ making $f(d;a^*) > 0$ for $d \geq 4$. Direct computations show that 
\begin{equation*}
f(1; a^*) = \left[ 1 + \log( \pi / 8 ) \right] / \pi, \qquad f(2; a^*) = 1 / 4, \qquad f(3; a^*) = 2 \left[ 7 + \log( 3 \pi / 1024 ) \right] / \left( 3 \pi \right)
\end{equation*}
are all positive, which completes the proof of the lower bound.
\end{proof}

We need the following auxiliary results for the upper bound in   Proposition~\ref{thm:2nd.term.boundary} . Let $C(\PT{x}, \rho)$ denote the spherical cap $\{ \PT{y} \in \mathbb{S}^d : | \PT{y} - \PT{x} | \leq \rho \} = \{ \PT{y} \in \mathbb{S}^d : \langle \PT{y} , \PT{x} \rangle \geq 1 - \rho^2 / 2 \}$.

\begin{lem} \label{lem:aux.1}
Let $d \geq 1$. Then for $\PT{x} \in \mathbb{S}^d$ and $0 < \rho \leq 2$, the normalized surface area measure of the spherical cap $C(\PT{x}, \rho)$ is given by
\begin{equation*}
\sigma_d( C(\PT{x}, \rho) ) = \frac{1}{d} \frac{\omega_{d-1}}{\omega_d} \, \rho^d \, \Hypergeom{2}{1}{1-d/2,d/2}{1+d/2}{\rho^2 / 4},
\end{equation*}
where the Gauss hypergeometric function is a polynomial if $d$ is even and reduces to $1$ if $d=2$.
\end{lem}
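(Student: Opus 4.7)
The plan is to reduce the computation to a one-dimensional integral over the ``height'' variable, and then massage it into the Euler integral representation of $\HyperF$.

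First, by the rotational invariance of $\sigma_d$, we may assume $\PT{x}$ is the north pole, so that the cap becomes $\{\PT{y}\in\Sp^d:y_{d+1}\ge 1-\rho^2/2\}$. Using the standard slicing of $\Sp^d$ by hyperplanes $y_{d+1}=t$ (whose cross-section is a scaled $\Sp^{d-1}$), we obtain the formula
\begin{equation*}
\sigma_d(C(\PT{x},\rho))=\frac{\omega_{d-1}}{\omega_d}\int_{1-\rho^2/2}^{1}(1-t^2)^{d/2-1}\,\dd t.
\end{equation*}

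Next I substitute $t=1-u$, which gives $1-t^2=u(2-u)$, and then rescale $u=(\rho^2/2)v$ to normalize the interval of integration to $[0,1]$. A short computation transforms the integral into
\begin{equation*}
\int_{1-\rho^2/2}^{1}(1-t^2)^{d/2-1}\,\dd t=\frac{\rho^d}{2}\int_0^1 v^{d/2-1}\bigl(1-\tfrac{\rho^2}{4}v\bigr)^{d/2-1}\dd v.
\end{equation*}

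Finally, I recognize the right-hand integral as an Euler integral: with $a=1-d/2$, $b=d/2$, $c=1+d/2$, $z=\rho^2/4$, the identity
\begin{equation*}
\Hypergeom{2}{1}{a,b}{c}{z}=\frac{\gammafcn(c)}{\gammafcn(b)\gammafcn(c-b)}\int_0^1 v^{b-1}(1-v)^{c-b-1}(1-zv)^{-a}\,\dd v
\end{equation*}
gives (since $c-b=1$ and $\gammafcn(c)/\gammafcn(b)=d/2$)
\begin{equation*}
\int_0^1 v^{d/2-1}\bigl(1-\tfrac{\rho^2}{4}v\bigr)^{d/2-1}\dd v=\frac{2}{d}\Hypergeom{2}{1}{1-d/2,d/2}{1+d/2}{\rho^2/4}.
\end{equation*}
Substituting back produces the claimed formula. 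The only mildly delicate point is the validity of the Euler representation for $0<\rho\le 2$: the hypotheses $\re b>0$ and $\re(c-b)>0$ hold for all $d\ge 1$, and the argument $\rho^2/4\in(0,1]$ lies in the disc of convergence (with the boundary value $\rho=2$ handled by the polynomial reduction, since $1-d/2$ is a non-positive integer for even $d$, and by Abel-type continuity/analytic continuation for odd $d$). This is the only step requiring care; everything else is a routine change of variables.
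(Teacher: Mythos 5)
Your proposal is correct and follows essentially the same route as the paper's proof: both reduce $\sigma_d(C(\PT{x},\rho))$ to the one-dimensional integral $\frac{\omega_{d-1}}{\omega_d}\int_{1-\rho^2/2}^{1}(1-t^2)^{d/2-1}\,\dd t$, rescale the variable to $[0,1]$, and identify the result via the Euler integral representation of $\Hypergeom{2}{1}{1-d/2,d/2}{1+d/2}{\rho^2/4}$. The only differences are cosmetic (a two-step substitution instead of the paper's single substitution $t=1-(\rho^2/2)u$, plus your explicit remark on the endpoint $\rho=2$, which is a welcome but minor addition).
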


\begin{proof}
Using the definition of the spherical cap, the Funk-Hecke formula (see \cite{Mu1966}), and the substitution $t=1-(\rho^2/2)u$, the surface area of a spherical cap can be written in terms of a hypergeometric function as follows
\begin{align*}
\sigma_d( C(\PT{x}, \rho) ) 
&= \int_{C(\PT{x}, \rho)} \dd \sigma_d =  \frac{\omega_{d-1}}{\omega_d} \int_{1-\rho^2/2}^1 \left( 1 - t^2 \right)^{d/2-1} \dd t \\
&= \frac{1}{2} \frac{\omega_{d-1}}{\omega_d} \, \rho^d \, \int_0^1 u^{d/2-1} \left( 1 - u \right)^{1-1} \left[ 1 - \left(\rho^2 / 4\right) u \right]^{d/2-1} \dd u \\
&= \frac{1}{2} \frac{\omega_{d-1}}{\omega_d} \, \rho^d \, \frac{\gammafcn(d/2) \gammafcn(1)}{\gammafcn(1+d/2)} \Hypergeom{2}{1}{1-d/2,d/2}{1+d/2}{\rho^2 / 4}
\end{align*}
from which the result follows using properties of the Gamma function. 
\end{proof}

\begin{lem} 
\label{lem:identity}
Let $m$ be a positive integer. For $z \in \mathbb{C} \setminus \{ -1, -2, -3, \dots \}$ there holds
\begin{equation*} 
\sum_{k=0}^m \frac{\Pochhsymb{z}{k}\Pochhsymb{-z}{m-k}}{k!(m-k)!} \left[ \digammafcn(k + z) - \digammafcn(k+1) \right] = \frac{\Pochhsymb{1-z}{m}}{m! \, m}.
\end{equation*}
\end{lem}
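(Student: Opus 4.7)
The plan is to reduce the sum via the Chu-Vandermonde identity $\sum_{k=0}^m \Pochhsymb{a}{k}\Pochhsymb{b}{m-k}/[k!(m-k)!] = \Pochhsymb{a+b}{m}/m!$ combined with parameter differentiation, and then finish off with a generating-function computation. Abbreviate $c_k := \Pochhsymb{z}{k}\Pochhsymb{-z}{m-k}/[k!(m-k)!]$. Chu-Vandermonde with $a=z$, $b=-z$ yields $\sum_{k=0}^m c_k = \Pochhsymb{0}{m}/m! = 0$ for $m \ge 1$. The splitting
\begin{equation*}
\digammafcn(k+z) - \digammafcn(k+1) = [\digammafcn(k+z)-\digammafcn(z)] - [\digammafcn(k+1)-\digammafcn(1)] + [\digammafcn(z)-\digammafcn(1)]
\end{equation*}
therefore annihilates the constant-in-$k$ piece, reducing the problem to evaluating $T_1 := \sum_k c_k[\digammafcn(k+z)-\digammafcn(z)]$ and $T_2 := \sum_k c_k H_k$, where $H_k := \digammafcn(k+1)-\digammafcn(1)$.

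For $T_1$ I differentiate Chu-Vandermonde in its first argument: set $\Phi(w) := \sum_{k=0}^m \Pochhsymb{w}{k}\Pochhsymb{-z}{m-k}/[k!(m-k)!] = \Pochhsymb{w-z}{m}/m!$. Since $\partial_w \Pochhsymb{w}{k} = \Pochhsymb{w}{k}[\digammafcn(w+k)-\digammafcn(w)]$, evaluation at $w=z$ gives $T_1 = \Phi'(z)$. Writing $\Phi(w) = (w-z)(w-z+1)\cdots(w-z+m-1)/m!$ as a polynomial in $w$, product-rule differentiation at $w=z$ retains only the contribution from the factor $(w-z)$ (the others evaluating to $1, 2, \ldots, m-1$), yielding $T_1 = (m-1)!/m! = 1/m$.

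For $T_2$ I use the integral representation $H_k = \int_0^1(1-t^k)/(1-t)\,dt$; interchanging sum and integral (valid for $\re z > 0$, extending to the full admissible $z$-domain by analytic continuation) gives $T_2 = -\int_0^1 P_m(t)/(1-t)\,dt$, where $P_m(t) := \sum_{k=0}^m c_k t^k$. The Cauchy product of two binomial series identifies $P_m(t) = [u^m](1-u)^z(1-tu)^{-z} = [u^m]\bigl((1-u)/(1-tu)\bigr)^z$. Writing $(1-u)/(1-tu) = 1 - u(1-t)/(1-tu)$ and applying the binomial series in the small quantity $u(1-t)/(1-tu)$, then expanding $(1-tu)^{-n}$ in $u$, produces
\begin{equation*}
P_m(t) = \sum_{n=1}^m \binom{z}{n}(-1)^n (1-t)^n \binom{m-1}{n-1} t^{m-n},
\end{equation*}
which makes the cancellation of $(1-t)$ manifest. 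Term-by-term Beta integration with $B(m-n+1,n) = (m-n)!(n-1)!/m!$ collapses each summand to $\binom{z}{n}(-1)^n/m$.

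To conclude, I invoke the elementary finite-sum identity $\sum_{n=0}^m \binom{z}{n}(-1)^n = (-1)^m\binom{z-1}{m} = \Pochhsymb{1-z}{m}/m!$ (a simple consequence of the upper-index negation in binomial coefficients), so $\int_0^1 P_m(t)/(1-t)\,dt = \bigl(\Pochhsymb{1-z}{m}/m! - 1\bigr)/m$ and hence $T_2 = \bigl(1-\Pochhsymb{1-z}{m}/m!\bigr)/m$. Combining, the sum in the lemma equals $T_1 - T_2 = \Pochhsymb{1-z}{m}/(m\cdot m!)$, which is the claimed identity. The main obstacle is explicitly computing $P_m(t)$ in a form that exposes the vanishing at $t=1$; the decisive trick is to rewrite $(1-u)^z(1-tu)^{-z}$ as the single power $\bigl((1-u)/(1-tu)\bigr)^z$, turning the $[u^m]$-extraction into a single binomial expansion centered at $u=0$.
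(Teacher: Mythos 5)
Your proof is correct, and it takes a genuinely different route from the paper's. The paper feeds the whole difference $\digammafcn(k+z)-\digammafcn(k+1)$ into the single integral representation $\int_0^1\frac{t^\alpha-t^\beta}{1-t}\,\dd t$, identifies $\sum_k c_k t^k$ as a regularized Gauss hypergeometric function, uses a contiguous transformation to pull out the factor $(1-t)$, integrates term by term, and closes with Chu--Vandermonde at unit argument followed by an analytic continuation step. You instead split the digamma difference three ways so that the constant-in-$k$ piece dies against $\sum_k c_k=\Pochhsymb{0}{m}/m!=0$, handle $T_1$ by a purely algebraic parameter differentiation of the Vandermonde convolution $\Phi(w)=\Pochhsymb{w-z}{m}/m!$ (a nice trick: only the derivative of the factor $(w-z)$ survives at $w=z$), and handle $T_2$ via the harmonic-number integral together with an explicit closed form for $P_m(t)$ obtained from $\bigl((1-u)/(1-tu)\bigr)^z=\bigl(1-u(1-t)/(1-tu)\bigr)^z$, which makes the needed divisibility by $(1-t)$ manifest; beta integrals and the partial-sum identity $\sum_{n=0}^m(-1)^n\binom{z}{n}=\Pochhsymb{1-z}{m}/m!$ finish it. Your version is more elementary and self-contained (no regularized ${}_2F_1$'s, no DLMF transformations, and no final continuation argument is actually required, since $T_1$ is polynomial differentiation and $P_m(t)/(1-t)$ is a polynomial in $t$ for every admissible $z$ — your caveat about $\re z>0$ and continuation for the sum--integral interchange is unnecessary, as the sum over $k$ is finite and $H_k$ is independent of $z$); the paper's version is more uniform and stays inside the hypergeometric toolkit it uses everywhere else.
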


\begin{proof}
Let $f_m(x)$ denote the sum for real $x>0$. Using the integral representation \cite[Eq.~2.2.4(20)]{PrBrMa1986I}
\begin{equation*}
\int_0^1 \frac{t^\alpha - t^\beta}{1-t} \dd t = \digammafcn(\beta+1) - \digammafcn(\alpha+1), \qquad \re \alpha, \re \beta > -1,
\end{equation*}
we obtain
\begin{equation*}
f_m(x) = \int_0^1 \frac{1-t^{x-1}}{1-t} g_m(x,t) \dd t, \qquad g_m(x,t) \DEF \sum_{k=0}^m \frac{\Pochhsymb{x}{k}\Pochhsymb{-x}{m-k}}{k!(m-k)!} t^k,
\end{equation*}
where the function $g_m(x,t)$ can be expressed as regularized Gauss hypergeometric functions
\begin{align*}
g_m(x,t) = (-1)^m \frac{\gammafcn(1+x)}{m!} \HypergeomReg{2}{1}{-m,x}{1+x-m}{t} = (-1)^m \frac{\gammafcn(1+x)}{m!} \left( 1 - t \right) \HypergeomReg{2}{1}{1-m,1+x}{1+x-m}{t}.
\end{align*}
Substituting the series expansion of $g_m(x,t)$ we have (for $x-m$ not a negative integer)
\begin{align*}
f_m(x) 
&= (-1)^m \frac{\gammafcn(1+x)}{m!} \sum_{k=0}^{m-1} \frac{\Pochhsymb{1-m}{k}\Pochhsymb{1+x}{k}}{\gammafcn(1+x-m+k) k!} \int_0^1 \left( t^k - t^{k+x-1} \right) \dd t \\
&= (-1)^m \frac{\gammafcn(1+x)}{m!} \sum_{k=0}^{m-1} \frac{\Pochhsymb{1-m}{k}\Pochhsymb{1+x}{k}}{\gammafcn(1+x-m+k) k!} \frac{1}{k+1} - (-1)^m \frac{\gammafcn(1+x)}{m!} \sum_{k=0}^{m-1} \frac{\Pochhsymb{1-m}{k}\Pochhsymb{1+x}{k}}{\gammafcn(1+x-m+k) k!} \frac{1}{k+x} \\
% &= (-1)^m \frac{\gammafcn(x)}{m! (- m)}  \sum_{k=1}^{m} \frac{\Pochhsymb{-m}{k}\Pochhsymb{x}{k}}{\gammafcn(x-m+k) k!} - (-1)^m \frac{\gammafcn(x)}{m!} \sum_{k=0}^{m-1} \frac{\Pochhsymb{1-m}{k}\Pochhsymb{x}{k}}{\gammafcn(1+x-m+k) k!}
&= (-1)^m \frac{1}{m! (- m)} \sum_{k=1}^{m} \frac{\Pochhsymb{-m}{k} \gammafcn(k+x)}{\gammafcn(x-m+k) k!} - (-1)^m \frac{1}{m!} \sum_{k=0}^{m-1} \frac{\Pochhsymb{1-m}{k}\gammafcn(k+x)}{\gammafcn(1+x-m+k) k!} \\
&= (-1)^m \frac{1}{m! (- m)} \frac{\gammafcn(x)}{\gammafcn(x-m)} \sum_{k=1}^{m} \frac{\Pochhsymb{-m}{k} \Pochhsymb{x}{k}}{\Pochhsymb{x-m}{k} k!} - (-1)^m \frac{1}{m!} \frac{\gammafcn(x)}{\gammafcn(x+1-m)} \sum_{k=0}^{m-1} \frac{\Pochhsymb{1-m}{k}\Pochhsymb{x}{k}}{\Pochhsymb{x+1-m}{k} k!} \\
&= (-1)^m \frac{1}{m! (- m)} \frac{\gammafcn(x)}{\gammafcn(x-m)} \left( \Hypergeom{2}{1}{-m,x}{x-m}{1} - 1 \right) - (-1)^m \frac{1}{m!} \frac{\gammafcn(x)}{\gammafcn(x+1-m)} \Hypergeom{2}{1}{1-m,x}{x+1-m}{1}. 
\end{align*}
Using the Chu-Vandermonde Identity \cite[Eq.~15.4.24]{DLMF2010.05.10} $\Hypergeom{2}{1}{-n,b}{c}{1} = \Pochhsymb{c-b}{n} / \Pochhsymb{c}{n}$, we obtain
\begin{align*}
f_m(x) 
&= (-1)^m \frac{1}{m! (- m)} \frac{\gammafcn(x)}{\gammafcn(x-m)} \left[ \frac{\Pochhsymb{-m}{m}}{\Pochhsymb{x-m}{m}} - 1 \right] - (-1)^m \frac{1}{m!} \frac{\gammafcn(x)}{\gammafcn(x+1-m)} \frac{\Pochhsymb{1-m}{m-1}}{\Pochhsymb{x+1-m}{m-1}} \\
&= (-1)^m \frac{1}{m! (- m)} \frac{\gammafcn(x)}{\gammafcn(x-m)} \left[ (-1)^m m! \frac{\gammafcn(x-m)}{\gammafcn(x)} - 1 \right] - \frac{1}{(-m)} \\
&= (-1)^m \frac{1}{m! \, m} \frac{\gammafcn(x)}{\gammafcn(x-m)} = \frac{1}{m! \, m} \frac{(-1)^m }{\Pochhsymb{x}{-m}} =  \frac{\Pochhsymb{1-x}{m}}{m!} \frac{1}{m}, % = \frac{\Pochhsymb{-x}{m}}{m!} \frac{m-x}{m(-x)} = \frac{\Pochhsymb{-x}{m}}{m!} \left( \frac{1}{m} - \frac{1}{x} \right),
\end{align*}
where in the last line we used properties of the Pochhammer symbol (see, for example, \cite[Appendix II.2]{PrBrMa1990III}. Since $f_m(x)$ is, in fact, analytic in $\mathbb{C}$ with poles at negative integers due to the digamma function (the singularity at $0$ can be removed), the identity
\begin{equation*}
f_m(x) = \frac{\Pochhsymb{1-x}{m}}{m! \, m}% = \frac{\Pochhsymb{-x}{m}}{m!} \left( \frac{1}{m} - \frac{1}{x} \right)
\end{equation*}
can be extended to $\mathbb{C} \setminus \{-1,-2,-3,\dots\}$ by analytic continuation.
\end{proof}

\begin{lem} \label{lem:aux.2}
Let $d \geq 1$. For $\PT{x} \in \mathbb{S}^d$ and $0< \rho < 2$
\begin{equation*}
\begin{split}
\int_{\mathbb{S}^d \setminus C(\PT{x},\rho)} \left| \PT{x} - \PT{y} \right|^{-d} \dd \sigma_d(\PT{y}) 
&= \frac{\omega_{d-1}}{\omega_d} \left( - \log \rho \right) - \frac{1}{2} \frac{\omega_{d-1}}{\omega_d} \left[ \digammafcn(d/2) - \digammafcn(1) - 2 \log 2 \right] \\
&\phantom{=}- \frac{1}{2} \frac{\omega_{d-1}}{\omega_d} \sum_{m=1}^\infty \frac{\Pochhsymb{1-d/2}{m}}{m! \, m} \left( \frac{\rho}{2} \right)^{2 m},
\end{split}
\end{equation*}
where the series terminates after finitely many terms if $d$ is even.
\end{lem}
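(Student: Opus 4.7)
The plan is to reduce the integral to a one-dimensional integral over the cosine variable, make the standard substitution that converts the result into an incomplete beta-type integral, isolate the logarithmic singularity at $\rho=0$, and then use a classical digamma identity to identify the constant term.

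First, parametrize using the cosine $t = \langle \PT{x}, \PT{y}\rangle$. Because $|\PT{x}-\PT{y}|^2 = 2-2t$, the spherical cap $C(\PT{x},\rho)$ corresponds to $t\in [1-\rho^2/2, 1]$, and the Funk--Hecke identity gives
\begin{equation*}
\int_{\mathbb{S}^d \setminus C(\PT{x},\rho)} |\PT{x}-\PT{y}|^{-d}\, \dd\sigma_d(\PT{y}) = \frac{\omega_{d-1}}{\omega_d}\int_{-1}^{1-\rho^2/2} (2-2t)^{-d/2}(1-t^2)^{d/2-1}\,\dd t.
\end{equation*}
The substitution $v = (1-t)/2$ (so $1-t^2 = 4v(1-v)$ and $2-2t=4v$) collapses this to
\begin{equation*}
\frac{1}{2}\,\frac{\omega_{d-1}}{\omega_d}\int_{\rho^2/4}^{1} v^{-1}(1-v)^{d/2-1}\,\dd v.
\end{equation*}

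Next, isolate the logarithmic divergence. Writing $(1-v)^{d/2-1} = 1 + [(1-v)^{d/2-1}-1]$, split the integral as
\begin{equation*}
\int_{\rho^2/4}^{1} v^{-1}(1-v)^{d/2-1}\,\dd v = -\log(\rho^2/4) + \int_0^1 \frac{(1-v)^{d/2-1}-1}{v}\,\dd v - \int_0^{\rho^2/4}\frac{(1-v)^{d/2-1}-1}{v}\,\dd v.
\end{equation*}
The first integral is the desired $-2\log\rho + 2\log 2$ piece. For the middle one, the classical formula
\begin{equation*}
\int_0^1 \frac{1 - (1-v)^{b-1}}{v}\,\dd v = \digammafcn(b) - \digammafcn(1), \qquad \re b > 0,
\end{equation*}
(which follows from $\int_0^1 (1-t^a)/(1-t)\,\dd t = \digammafcn(a+1)-\digammafcn(1)$ via $v\mapsto 1-v$) supplies the constant $-[\digammafcn(d/2)-\digammafcn(1)]$. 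For the remaining integral, expand $(1-v)^{d/2-1} = \sum_{m\geq 0} \Pochhsymb{1-d/2}{m}\, v^m / m!$ via the binomial series and integrate term by term to get
\begin{equation*}
\int_0^{\rho^2/4} \frac{(1-v)^{d/2-1}-1}{v}\,\dd v = \sum_{m=1}^\infty \frac{\Pochhsymb{1-d/2}{m}}{m!\, m}\left(\frac{\rho}{2}\right)^{2m},
\end{equation*}
which terminates after finitely many terms precisely when $d$ is a positive even integer, since $\Pochhsymb{1-d/2}{m}=0$ for $m\ge d/2$.

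Assembling these three pieces and multiplying by $\tfrac{1}{2}\omega_{d-1}/\omega_d$ yields the claimed formula. The only nontrivial input is the digamma identity above, which is what produces the $\digammafcn(d/2)-\digammafcn(1)$ term; everything else is elementary substitution and power-series expansion. No separate convergence issue arises, since the binomial series has radius of convergence $1>\rho^2/4$ for $\rho<2$.
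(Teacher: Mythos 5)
Your proof is correct, and it takes a genuinely different and more elementary route than the paper's. After the common first step (Funk--Hecke reduction to $\frac{\omega_{d-1}}{\omega_d}\int_{-1}^{1-\rho^2/2}(2-2t)^{-d/2}(1-t^2)^{d/2-1}\,\dd t$), the paper substitutes $1+t=2(1-\rho^2/4)u$ to produce a regularized Gauss hypergeometric function ${}_2F_1(1,d/2;1+d/2;1-\rho^2/4)$, applies the logarithmic connection formula \cite[Eq.~15.8.10]{DLMF2010.05.10}, and is then left with the product $(1-\rho^2/4)^{d/2}\sum_k \frac{\Pochhsymb{d/2}{k}}{k!}[\digammafcn(k+d/2)-\digammafcn(k+1)](\rho/2)^{2k}$, whose Cauchy-product coefficients must be evaluated by the separate combinatorial identity of Lemma~\ref{lem:identity} to obtain $\Pochhsymb{1-d/2}{m}/(m!\,m)$. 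Your substitution $v=(1-t)/2$ instead collapses everything to $\tfrac12\frac{\omega_{d-1}}{\omega_d}\int_{\rho^2/4}^1 v^{-1}(1-v)^{d/2-1}\,\dd v$, and the three-way split (pure $v^{-1}$ piece for the logarithm, the classical integral $\int_0^1[1-(1-v)^{b-1}]/v\,\dd v=\digammafcn(b)-\digammafcn(1)$ for the constant, and termwise integration of the binomial series for the tail) delivers the $\Pochhsymb{1-d/2}{m}/(m!\,m)$ coefficients directly. I verified the bookkeeping: the $2\log 2$ from $-\log(\rho^2/4)$ combines with $-\tfrac12[\digammafcn(d/2)-\digammafcn(1)]$ to give exactly the stated constant term, the sign of the series is right, and the termination for even $d$ follows from $\Pochhsymb{1-d/2}{m}=0$ for $m\ge d/2$. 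What your approach buys is that it makes Lemma~\ref{lem:identity} unnecessary for this computation (your direct split proves, in effect, the same identity implicitly); what the paper's approach buys is uniformity with the hypergeometric framework used throughout Section~\ref{sectionproofs} (Lemmas~\ref{lem:aux.1} and \ref{lem:aux.3} and the propositions all run through ${}_2F_1$ transformations), and Lemma~\ref{lem:identity} is stated in a form reusable elsewhere. The only minor caveat in your write-up is the case $d=1$, where $(1-v)^{d/2-1}=(1-v)^{-1/2}$ is singular at $v=1$; the splitting is still valid because the singularity is integrable and the digamma formula holds for $\re b>0$, but it is worth a sentence noting that the middle integral converges at $v=1$ for all $d\ge1$.
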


\begin{proof}
Writing $| \PT{x} - \PT{y} |^{-d} = [2 ( 1 - t ) ]^{-d/2}$ with $t=\langle \PT{x},\PT{y}\rangle$ and using the substitution $1 + t = 2 ( 1 - \rho^2 / 4 ) u$ the integral can be expressed as a hypergeometric function as in the proof of Lemma~\ref{lem:aux.1}:
\begin{align*}
\int_{\mathbb{S}^d \setminus C(\PT{x},\rho)} \left| \PT{x} - \PT{y} \right|^{-d} \dd \sigma_d(\PT{y}) 
&= \frac{\omega_{d-1}}{\omega_d} \int_{-1}^{1-\rho^2 / 2} \left[ 2 \left( 1 - t \right) \right]^{-d/2} \left( 1 - t^2 \right)^{d/2-1} \dd t \\
&= \frac{\omega_{d-1}}{\omega_d} 2^{-d/2} \int_{-1}^{1-\rho^2 / 2} \left( 1 - t \right)^{-1} \left( 1 + t \right)^{d/2-1} \dd t \\
&= \frac{1}{2} \frac{\omega_{d-1}}{\omega_d} \left( 1 - \rho^2/4 \right)^{d/2} \int_{0}^1 u^{d/2-1} \left( 1 - u \right)^{1-1} \left[ 1 - \left( 1 - \rho^2 / 4 \right) u \right]^{-1} \dd u \\
&= \frac{1}{2} \frac{\omega_{d-1}}{\omega_d} \left( 1 - \rho^2/4 \right)^{d/2} \gammafcn(d/2) \gammafcn(1) \HypergeomReg{2}{1}{1,d/2}{1+d/2}{1 - \rho^2 / 4}.
\end{align*}
The linear transformation \cite[Eq.~15.8.10]{DLMF2010.05.10} applied to the hypergeometric function 
above gives
\begin{align*}
&\int_{\mathbb{S}^d \setminus C(\PT{x},\rho)} \left| \PT{x} - \PT{y} \right|^{-d} \dd \sigma_d(\PT{y}) \\
&= - \frac{1}{2} \frac{\omega_{d-1}}{\omega_d} \left( 1 - \rho^2/4 \right)^{d/2} \sum_{k=0}^\infty \frac{\Pochhsymb{1}{k} \Pochhsymb{d/2}{k}}{k! k!} \left( \rho / 2 \right)^{2k} \left[ 2 \log( \rho / 2 ) + \digammafcn(k + d/2) - \digammafcn(k+1) \right] \\
&= - \frac{1}{2} \frac{\omega_{d-1}}{\omega_d} \left( 1 - \rho^2/4 \right)^{d/2} \Bigg\{ 2 \left( \log \frac{\rho}{2} \right) \sum_{k=0}^\infty \frac{\Pochhsymb{d/2}{k}}{k!} \left( \frac{\rho}{2} \right)^{2k} + \sum_{k=0}^\infty \frac{\Pochhsymb{d/2}{k}}{k!} \left[ \digammafcn(k + d/2) - \digammafcn(k+1) \right] \left( \frac{\rho}{2} \right)^{2k} \Bigg\} \\
&= \frac{\omega_{d-1}}{\omega_d} \left( - \log \rho \right) + \frac{\omega_{d-1}}{\omega_d} \log 2 - \frac{1}{2} \frac{\omega_{d-1}}{\omega_d} \left( 1 - \rho^2/4 \right)^{d/2} \sum_{k=0}^\infty \frac{\Pochhsymb{d/2}{k}}{k!} \left[ \digammafcn(k + d/2) - \digammafcn(k+1) \right] \left( \frac{\rho}{2} \right)^{2k}.
\end{align*}
The binomial expansion of the factor $( 1 - \rho^2/4 )^{d/2}$ is absolutely   convergent for   $0 \leq \rho <2$ as is the infinite series above. This gives
\begin{align*}
\left( 1 - \rho^2/4 \right)^{d/2} &\sum_{k=0}^\infty \frac{\Pochhsymb{d/2}{k}}{k!} \left[ \digammafcn(k + d/2) - \digammafcn(k+1) \right] \left( \frac{\rho}{2} \right)^{2k} \\
&= \sum_{k=0}^\infty \sum_{n=0}^\infty \frac{\Pochhsymb{d/2}{k}}{k!} \left[ \digammafcn(k + d/2) - \digammafcn(k+1) \right] \frac{\Pochhsymb{-d/2}{n}}{n!} \left( \frac{\rho}{2} \right)^{2k + 2n} = \sum_{m=0}^\infty b_m(d) \left( \frac{\rho}{2} \right)^{2 m}.
\end{align*}
where
\begin{equation*}  
b_m(d) := \sum_{k=0}^m \frac{\Pochhsymb{d/2}{k}\Pochhsymb{-d/2}{m-k}}{k!(m-k)!} \left[ \digammafcn(k + d/2) - \digammafcn(k+1) \right], 
\end{equation*}
Then $b_0(d) = \digammafcn(d/2) - \digammafcn(1)$ and, by Lemma~\ref{lem:identity}, 
\begin{equation*} \label{eq:identity}
b_m(d) =\frac{\Pochhsymb{1-d/2}{m}}{m! \, m} , \qquad m \geq 1.
\end{equation*}

Thus, we obtain
\begin{align*}
\int_{\mathbb{S}^d \setminus C(\PT{x},\rho)} \left| \PT{x} - \PT{y} \right|^{-d} \dd \sigma_d(\PT{y}) 
&= \frac{\omega_{d-1}}{\omega_d} \left( - \log \rho \right) + \frac{\omega_{d-1}}{\omega_d} \log 2 \\
&- \frac{1}{2} \frac{\omega_{d-1}}{\omega_d} \left\{ \digammafcn(d/2) - \digammafcn(1) + \sum_{m=1}^\infty \frac{\Pochhsymb{1-d/2}{m}}{m! \, m} \left( \frac{\rho}{2} \right)^{2 m} \right\}.
\end{align*}
\end{proof}

\begin{proof}[Proof of upper bound in Proposition~\ref{thm:2nd.term.boundary}]
A closer inspection of the second part of the proof of Theorem~3 in \cite{KuSa1998} gives almost (up to a $\log \log N$ factor) the correct order of the second term. Let $X_N^* = \{ \PT{x}_1^*, \dots, \PT{x}_N^* \}$ be an optimal $d$-energy configuration on $\mathbb{S}^d$. For $r > 0$ consider
\begin{equation*}
D_k(r) \DEF \mathbb{S}^d \setminus C(\PT{x}_k^*, r \, N^{-1/d}), \quad k = 1, \dots, N, \qquad D(r) \DEF \bigcap_{k=1}^N D_k(r).
\end{equation*}
Since $X_N^*$ is a minimal $d$-energy configuration, for each $j = 1, \dots, N$ the function
\begin{equation*}
U_j( \PT{x} ) \DEF \sum_{k: k \neq j} \frac{1}{\left| \PT{x}_k^* - \PT{x} \right|^d}, \qquad \PT{x} \in \mathbb{S}^d \setminus (X_N^*\setminus \{\PT{x}_j^*\}),
\end{equation*}
attains its minimum at   $\PT{x}_j^*$. By Lemma~\ref{lem:aux.2} and for $\rho = r \, N^{-1/d}$ ($0 < \rho <2$) we get
\begin{align*}
\sigma_d( D(r) ) \, U_j( \PT{x}_j ) 
&\leq \int_{D(r)} U_j( \PT{x} ) \dd \sigma_d( \PT{x} ) \leq \sum_{k: k \neq j} \int_{D_k(r)} \frac{1}{\left| \PT{x}_k^* - \PT{x} \right|^d} \dd \sigma_d( \PT{x} ) \\
&= \left( N - 1 \right) \Bigg\{ \frac{\omega_{d-1}}{\omega_d} \left( - \log(r N^{-1/d}) \right) - \frac{1}{2} \frac{\omega_{d-1}}{\omega_d} \left[ \digammafcn(d/2) - \digammafcn(1) - 2 \log 2 \right] \\
&\phantom{=}- \frac{1}{2} \frac{\omega_{d-1}}{\omega_d} \sum_{m=1}^\infty \frac{\Pochhsymb{1-d/2}{m}}{m!} \left( \frac{r N^{-1/d}}{2} \right)^{2 m} \Bigg\}.
\end{align*}
Hence,
\begin{equation*}
\begin{split}
\mathcal{E}_d( \mathbb{S}^d; N ) = \sum_{j=1}^N U_j( \PT{x}_j^*) 
&\leq \frac{N \left( N - 1 \right)}{\sigma_d( D(r) )} \Bigg\{ \frac{1}{d} \frac{\omega_{d-1}}{\omega_d} \log N +  \frac{\omega_{d-1}}{\omega_d} \left( - \log r \right) - \frac{1}{2} \frac{\omega_{d-1}}{\omega_d} \left[ \digammafcn(d/2) - \digammafcn(1) - 2 \log 2 \right] \\
&\phantom{=}- \frac{1}{2} \frac{\omega_{d-1}}{\omega_d} \sum_{m=1}^\infty \frac{\Pochhsymb{-d/2}{m}}{(m-1)!} \left( \frac{1}{m} - \frac{2}{d} \right) \left( \frac{r N^{-1/d}}{2} \right)^{2 m} \Bigg\}.
\end{split}
\end{equation*}
Subtracting off the dominant term of the asymptotic expansion, we obtain
\begin{equation*}
\begin{split}
&\mathcal{E}_d( \mathbb{S}^d; N ) - \frac{1}{d} \frac{\omega_{d-1}}{\omega_d} N^2 \log N \leq \frac{1}{d} \frac{\omega_{d-1}}{\omega_d} \frac{1 - \sigma_d( D(r) ) - 1 / N}{\sigma_d( D(r) )} N^2 \log N + \frac{1}{d} \frac{\omega_{d-1}}{\omega_d} \frac{N \left( N - 1 \right)}{\sigma_d( D(r) )} \left( - \log r^d \right) \\
&\phantom{=}- \frac{1}{2} \frac{\omega_{d-1}}{\omega_d} \left[ \digammafcn(d/2) - \digammafcn(1) - 2 \log 2 \right] \frac{N \left( N - 1 \right)}{\sigma_d( D(r) )} - \frac{1}{2} \frac{\omega_{d-1}}{\omega_d} \frac{N \left( N - 1 \right)}{\sigma_d( D(r) )}  \sum_{m=1}^\infty \frac{\Pochhsymb{-d/2}{m}}{(m-1)!} \left( \frac{1}{m} - \frac{2}{d} \right) \left( \frac{r N^{-1/d}}{2} \right)^{2 m}.
\end{split}
\end{equation*}
Using Lemma~\ref{lem:aux.1} in the trivial bound (recall $\rho = r N^{-1/d}$)
\begin{equation} \label{eq:trivial.bound}
\sigma_d( D(r) ) \geq 1 - N \sigma_d( C(\PT{x}_1^*, r N^{-1/d}) ) = 1 - \frac{1}{d} \frac{\omega_{d-1}}{\omega_d} \, r^d \, \Hypergeom{2}{1}{1-d/2,d/2}{1+d/2}{r^2 N^{-2/d} / 4}
\end{equation}
gives
\begin{equation*}
\left[ 1 - \sigma_d( D(r) ) \right] \log N \le \frac{1}{d} \frac{\omega_{d-1}}{\omega_d} \, r^d \log N \, \Hypergeom{2}{1}{1-d/2,d/2}{1+d/2}{r^2 N^{-2/d} / 4}.
\end{equation*}
Choosing $r^d = 1 / \log N$, we arrive at the result.
% \begin{equation*}
% \mathcal{E}_d( \mathbb{S}^d; N ) - \frac{1}{d} \frac{\omega_{d-1}}{\omega_d} N^2 \log N \leq 
% \end{equation*}
% 
\end{proof}

\begin{proof}[Proof of Proposition~\ref{thm:hypersing.lower.bound}]
We follow the Proof of Proposition~\ref{thm:2nd.term.boundary}, now for the kernel $K_\eps(s; t) := (2 - 2 t + \eps )^{-s/2}$ which is positive definite in the sense of Schoenberg~\cite{Sch1942} with the expansion $K_\eps(s;t) = \sum_{n = 0}^\infty a_n(s;\eps) \, P_n^d(t)$. (The positivity of the coefficients $a_n(s; \eps)$ can be seen by applying Rodrigues formula (see \cite{Mu1966}) and integration by parts $n$ times.) We have
\begin{equation*} 
E_s( X_N ) \geq a_0(s; \eps) \, N^2 - K_\eps(s; 1) \, N = a_0(s; \eps) \, N^2 - \eps^{-s/2} \, N,
\end{equation*}
where the coefficient $a_0(s; \eps)$ can be expressed in terms of a regularized Gauss hypergeomtric function 
\begin{align*}
a_0(s; \eps) 
&= \frac{\omega_{d-1}}{\omega_d} \int_{-1}^1 \left( 2 - 2 t + \eps \right)^{-s/2} \left( 1 - t^2 \right)^{d/2-1} \dd t \\
&= 2^{d-s-1} \frac{\omega_{d-1}}{\omega_d} \left( 1 + \eps / 4 \right)^{-s/2} \int_0^1 u^{d/2-1} \left( 1 - u \right)^{d/2-1} \left( 1 - \frac{1}{1+\eps/4} \, u \right)^{-s/2} \dd u \\
&= 2^{d-s-1} \frac{\omega_{d-1}}{\omega_d} \gammafcn(d/2) \gammafcn(d/2) \left( 1 + \eps / 4 \right)^{-s/2} \HypergeomReg{2}{1}{s/2,d/2}{d}{\frac{1}{1+\eps/4}}.
\end{align*}
For $(s-d)/2$ not an integer we can do asymptotic analysis by applying the linear transformation \cite[Eq.~15.8.5]{DLMF2010.05.10}, that is
\begin{equation*}
\begin{split}
\HypergeomReg{2}{1}{s/2,d/2}{d}{\frac{1}{1+\eps/4}} 
&= \frac{\pi}{\sin[\pi (d-s)/2]} \Bigg\{ \frac{\left(1 + \eps/4 \right)^{s/2}}{\gammafcn(d-s/2)\gammafcn(d/2)} \HypergeomReg{2}{1}{s/2,1-d+s/2}{1+(s-d)/2}{- \frac{\eps}{4}} \\
&\phantom{=}- \frac{\left[ \left( \eps / 4 \right) \big/ \left( 1 + \eps / 4 \right) \right]^{(d-s)/2} \left( 1 + \eps / 4 \right)^{d-s/2}}{\gammafcn(s/2) \gammafcn(d/2)} \HypergeomReg{2}{1}{d-s/2,1-s/2}{1-(s-d)/2}{- \frac{\eps}{4}} \Bigg\},
\end{split}
\end{equation*}
which, after simplifications and application of the last transformation in \cite[Eq.s~15.8.1]{DLMF2010.05.10}, yields
\begin{align*}
a_0(s; \eps) 
&= V_s(\mathbb{S}^d) \frac{1}{\gammafcn((d-s)/2)} \frac{\pi}{\sin[\pi (d-s)/2]} \HypergeomReg{2}{1}{s/2,1-d+s/2}{1+(s-d)/2}{- \frac{\eps}{4}} \\
&\phantom{=}- 2^{d-1-s} \frac{\omega_{d-1}}{\omega_d} \frac{\gammafcn(d/2)}{\gammafcn(s/2)} \frac{\pi}{\sin[\pi (d-s)/2]} \left( \frac{\eps}{4} \right)^{(d-s)/2} \HypergeomReg{2}{1}{1-d/2,d/2}{1-(s-d)/2}{- \frac{\eps}{4}}.
\end{align*}
Changing to Gauss hypergeometric functions shows that
\begin{equation*}
\begin{split}
a_0(s; \eps) 
&= 2^{d-1-s} \frac{\omega_{d-1}}{\omega_d} \frac{\gammafcn(d/2)}{\gammafcn(s/2)} \frac{\pi}{\sin[\pi (s-d)/2]} \frac{1}{\gammafcn(1-(s-d)/2)} \left( \frac{\eps}{4} \right)^{(d-s)/2} \Hypergeom{2}{1}{1-d/2,d/2}{1-(s-d)/2}{- \frac{\eps}{4}} \\
&\phantom{=}+ V_s(\mathbb{S}^d) \frac{1}{\gammafcn((d-s)/2)} \frac{1}{\gammafcn(1-(d-s)/2)} \frac{\pi}{\sin[\pi (d-s)/2]} \Hypergeom{2}{1}{s/2,1-d+s/2}{1+(s-d)/2}{- \frac{\eps}{4}}.
\end{split}
\end{equation*}
Application of the reflection formula for the gamma function \cite[Eq.s~5.5.3]{DLMF2010.05.10} and the substitution $\eps / 4 = a^{2/(d-s)} N^{-2/d}$ gives
\begin{equation}
\begin{split} \label{eq:a.0.s.eps}
a_0(s; \eps) 
&= 2^{d-1-s} \frac{\omega_{d-1}}{\omega_d} \frac{\gammafcn(d/2) \gammafcn((s-d)/2)}{\gammafcn(s/2)} \, a \, N^{-1+s/d} \Hypergeom{2}{1}{1-d/2,d/2}{1-(s-d)/2}{- a^{2/(d-s)} N^{-2/d}} \\
&\phantom{=}+ V_s(\mathbb{S}^d) \Hypergeom{2}{1}{s/2,1-d+s/2}{1+(s-d)/2}{- a^{2/(d-s)} N^{-2/d}}.
\end{split}
\end{equation}
Note that the first hypergeometric function above is a polynomial if $d$ is even and reduces to $1$ if $d=2$. Hence, using the series expansion of a hypergeometric function,
\begin{equation*}
\mathcal{E}_s( \mathbb{S}^d; N ) \geq f(s,d;a) \, N^{1+s/d} + V_s( \mathbb{S}^d ) \, N^2 + \mathcal{O}(N^{1+s/d-2/d}) \qquad \text{as $N \to \infty$,}
\end{equation*}
where the function $f(s,d;a) \DEF c \, a - 2^{-s} a^{s/(s-d)}$ with (cf. \eqref{eq:gammad.const})
\begin{equation*}
c = c_{s,d} = 2^{d-1-s} \frac{\omega_{d-1}}{\omega_d} \frac{\gammafcn(d/2) \gammafcn((s-d)/2)}{\gammafcn(s/2)} = 2^{d-1-s} \frac{\gammafcn((d+1)/2) \gammafcn((s-d)/2)}{\sqrt{\pi}\gammafcn(s/2)}
\end{equation*}
has a unique maximum at $a^* = [ 2^s c (s - d) / s ]^{(s-d)/d}$ with value 
\begin{align*}
A_{s,d} 
&\DEF f(s,d;a^*) = c \left[ 2^s c \left( s - d \right) / s \right]^{s/d-1} - 2^{-s} \left[ 2^s c \left( s - d \right) / s \right]^{s/d} = \frac{d}{s-d} \left[ 2^{s-d} \frac{s-d}{s} c \right]^{s/d} \\
&= \frac{d}{s-d} \left[ \frac{1}{2} \frac{\gammafcn((d+1)/2) \gammafcn(1+(s-d)/2)}{\sqrt{\pi}\gammafcn(1+s/2)} \right]^{s/d}.
\end{align*}
\end{proof}

For the proof of Proposition~\ref{thm:hypersing.upper.bound} we need the following auxiliary result.

\begin{lem} \label{lem:aux.3}
Let $d \geq 1$ and $s > d$ and $(s - d)/2$ not an integer. For $\PT{x} \in \mathbb{S}^d$ and $0 < \rho < 2$ we have
\begin{equation*}
\int_{\mathbb{S}^d \setminus C(\PT{x},\rho)} \left| \PT{x} - \PT{y} \right|^{-s} \dd \sigma_d(\PT{y}) = V_s(\mathbb{S}^d) + \frac{2^{d-s}}{s-d} \frac{\omega_{d-1}}{\omega_d} \left( \rho / 2 \right)^{d-s}  \Hypergeom{2}{1}{1-d/2,(d-s)/2}{1-(s-d)/2}{\frac{\rho^2}{4}}.
\end{equation*}
\end{lem}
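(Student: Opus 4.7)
The plan is to mimic the proof of Lemma~\ref{lem:aux.2}, converting the surface integral into a one-dimensional Gauss hypergeometric function, then apply the Pfaff/Kummer-type linear transformation from argument $1-\rho^2/4$ to argument $\rho^2/4$ in order to isolate the $V_s(\mathbb{S}^d)$ contribution. First, using $|\PT{x}-\PT{y}|^{-s}=[2(1-t)]^{-s/2}$ with $t=\langle \PT{x},\PT{y}\rangle$ and the Funk--Hecke reduction,
\begin{equation*}
\int_{\mathbb{S}^d \setminus C(\PT{x},\rho)} |\PT{x}-\PT{y}|^{-s}\, \dd\sigma_d(\PT{y})
=\frac{\omega_{d-1}}{\omega_d}\, 2^{-s/2}\!\int_{-1}^{1-\rho^2/2}(1-t)^{(d-s)/2-1}(1+t)^{d/2-1}\,\dd t.
\end{equation*}
The substitution $1+t=2(1-\rho^2/4)u$, already used in Lemma~\ref{lem:aux.2}, converts this into a regularized Euler integral, giving
\begin{equation*}
\frac{\omega_{d-1}}{\omega_d}\, 2^{d-s-1}(1-\rho^2/4)^{d/2}\,\frac{2}{d}\,
\Hypergeom{2}{1}{1-(d-s)/2,\,d/2}{1+d/2}{1-\rho^2/4}.
\end{equation*}

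Next I would apply the linear transformation \cite[Eq.~15.8.4]{DLMF2010.05.10}, which is valid precisely because $c-a-b=(d-s)/2$ is not an integer (this is exactly the excluded case in the hypothesis). With $a=1-(d-s)/2$, $b=d/2$, $c=1+d/2$ we compute $c-a=d-s/2$, $c-b=1$, $a+b-c+1=1-(d-s)/2$, so one of the resulting hypergeometric series has equal first and third parameters and reduces by $\Hypergeom{2}{1}{a,b}{a}{z}=(1-z)^{-b}$ to $(1-\rho^2/4)^{-d/2}$. The prefactor of this first piece collapses to
\begin{equation*}
\frac{\omega_{d-1}}{\omega_d}\,2^{d-s-1}\,\frac{\gammafcn(d/2)\,\gammafcn((d-s)/2)}{\gammafcn(d-s/2)}=V_s(\mathbb{S}^d),
\end{equation*}
after using $\omega_{d-1}/\omega_d=\gammafcn((d+1)/2)/[\sqrt{\pi}\,\gammafcn(d/2)]$, exactly reproducing the value in \eqref{V.s.d}.

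The second piece of the linear transformation carries the factor $(\rho^2/4)^{(d-s)/2}$ and a ${}_2F_1$ with parameters $d-s/2$, $1$, $1+(d-s)/2$; after simplifying the gamma quotient via $\gammafcn(1-(d-s)/2)=((s-d)/2)\gammafcn((s-d)/2)$, the prefactor becomes $\frac{2^{d-s}}{s-d}\frac{\omega_{d-1}}{\omega_d}(\rho/2)^{d-s}(1-\rho^2/4)^{d/2}$. Finally, I would absorb the stray factor $(1-\rho^2/4)^{d/2}$ into the hypergeometric function by Euler's transformation
\begin{equation*}
(1-z)^{c-a-b}\Hypergeom{2}{1}{a,b}{c}{z}=\Hypergeom{2}{1}{c-a,c-b}{c}{z};
\end{equation*}
here $c-a-b=-d/2$, so
\begin{equation*}
(1-\rho^2/4)^{d/2}\Hypergeom{2}{1}{d-s/2,\,1}{1+(d-s)/2}{\rho^2/4}
=\Hypergeom{2}{1}{1-d/2,\,(d-s)/2}{1-(s-d)/2}{\rho^2/4},
\end{equation*}
which yields the claimed identity.

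The main obstacle is purely book-keeping rather than conceptual: tracking the many gamma factors through the linear transformation and confirming that the first piece indeed telescopes to $V_s(\mathbb{S}^d)$. The hypothesis that $(s-d)/2$ is not an integer is needed precisely to avoid the logarithmic (degenerate) case of \cite[Eq.~15.8.4]{DLMF2010.05.10}; the $d$ even subcase works identically because the terminating hypergeometric in the second piece is polynomial in $\rho^2/4$.
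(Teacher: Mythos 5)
Your proposal is correct and follows essentially the same route as the paper's proof: the Funk--Hecke reduction and substitution $1+t=2(1-\rho^2/4)u$ to get a ${}_2F_1$ at argument $1-\rho^2/4$, the connection formula \cite[Eq.~15.8.4]{DLMF2010.05.10} (non-degenerate precisely because $(s-d)/2\notin\mathbb{Z}$), the collapse of the first piece via $\Hypergeom{2}{1}{a,b}{a}{z}=(1-z)^{-b}$ and the reflection formula to yield $V_s(\mathbb{S}^d)$, and the Euler transformation to absorb $(1-\rho^2/4)^{d/2}$ into the second piece. The gamma-factor bookkeeping you outline checks out against the stated constants.
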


\begin{proof}
Similar as in the proof of Lemma~\ref{lem:aux.2} we obtain
\begin{align*}
&\int_{\mathbb{S}^d \setminus C(\PT{x},\rho)} \left| \PT{x} - \PT{y} \right|^{-s} \dd \sigma_d(\PT{y}) = \frac{\omega_{d-1}}{\omega_d} \int_{-1}^{1-\rho^2 / 2} \left[ 2 \left( 1 - t \right) \right]^{-s/2} \left( 1 - t^2 \right)^{d/2-1} \dd t \\
&\phantom{equals}= \frac{\omega_{d-1}}{\omega_d} 2^{-s/2} \int_{-1}^{1-\rho^2 / 2} \left( 1 - t \right)^{(d-s)/2-1} \left( 1 + t \right)^{d/2-1} \dd t \\
&\phantom{equals}= 2^{d-1-s} \frac{\omega_{d-1}}{\omega_d} \left( 1 - \rho^2/4 \right)^{d/2} \int_{0}^1 u^{d/2-1} \left( 1 - u \right)^{1-1} \left[ 1 - \left( 1 - \rho^2 / 4 \right) u \right]^{(d-s)/2-1} \dd u \\
&\phantom{equals}= 2^{d-1-s} \frac{\omega_{d-1}}{\omega_d} \left( 1 - \rho^2/4 \right)^{d/2} \gammafcn(d/2) \gammafcn(1) \HypergeomReg{2}{1}{1+(s-d)/2,d/2}{1+d/2}{1 - \rho^2 / 4},
\intertext{for $(s-d)/2$ not a positive integer we can apply the linear transformation \cite[Eq.~15.8.4]{DLMF2010.05.10},}
% &\phantom{equals}= 2^{d-1-s} \frac{\omega_{d-1}}{\omega_d} \left( 1 - \rho^2/4 \right)^{d/2} \gammafcn(d/2) \gammafcn(1) \frac{\pi}{\sin[\pi(d-s)/2]}\Bigg\{ \frac{1}{\gammafcn(d-s/2) \gammafcn(1)} \HypergeomReg{2}{1}{1+(s-d)/2,d/2}{1+(s-d)/2}{\frac{\rho^2}{4}} \\
% &\phantom{equals=}- \frac{1}{\gammafcn(1+(s-d)/2) \gammafcn(d/2)} \left( \rho^2 / 4 \right)^{(d-s)/2} \HypergeomReg{2}{1}{d-s/2,1}{1-(s-d)/2}{\frac{\rho^2}{4}} \Bigg\}
&\phantom{equals}= 2^{d-1-s} \frac{\omega_{d-1}}{\omega_d} \left( 1 - \rho^2/4 \right)^{d/2} \frac{\pi}{\sin[\pi(d-s)/2]} \Bigg\{ \frac{\gammafcn(d/2) \gammafcn(1) }{\gammafcn(d-s/2) \gammafcn(1)} \HypergeomReg{2}{1}{1+(s-d)/2,d/2}{1+(s-d)/2}{\frac{\rho^2}{4}} \\
&\phantom{equals=}- \frac{\gammafcn(d/2) \gammafcn(1) }{\gammafcn(1+(s-d)/2) \gammafcn(d/2)} \left( \rho^2 / 4 \right)^{(d-s)/2} \HypergeomReg{2}{1}{d-s/2,1}{1-(s-d)/2}{\frac{\rho^2}{4}} \Bigg\},
\intertext{where the first regularized hypergeometric function can be evaluated using \cite[Eq.~15.4.6]{DLMF2010.05.10} and the second can be transformed by the last linear transformation in \cite[Eq.~15.8.1]{DLMF2010.05.10},}
&\phantom{equals}= 2^{d-1-s} \frac{\omega_{d-1}}{\omega_d} \left( 1 - \rho^2/4 \right)^{d/2} \frac{\pi}{\sin[\pi(d-s)/2]} \Bigg\{ \frac{\gammafcn(d/2)}{\gammafcn(d-s/2) \gammafcn(1+(s-d)/2)} \left( 1 - \rho^2 / 4 \right)^{-d/2} \\
&\phantom{equals=}- \frac{1}{\gammafcn(1+(s-d)/2)} \left( \rho^2 / 4 \right)^{(d-s)/2}  \left( 1 - \rho^2 / 4 \right)^{-d/2} \HypergeomReg{2}{1}{1-d/2,(d-s)/2}{1-(s-d)/2}{\frac{\rho^2}{4}} \Bigg\} \\
&\phantom{equals}= 2^{d-1-s} \frac{\omega_{d-1}}{\omega_d} \frac{\gammafcn(d/2) }{\gammafcn(d-s/2)} \frac{\pi}{\sin[\pi(d-s)/2]} \frac{1}{ \gammafcn(1-(d-s)/2)} \\
&\phantom{equals=}- 2^{d-1-s} \frac{\omega_{d-1}}{\omega_d} \frac{\pi}{\sin[\pi(d-s)/2]} \frac{1}{\gammafcn(1+(s-d)/2)} \left( \rho / 2 \right)^{d-s}  \HypergeomReg{2}{1}{1-d/2,(d-s)/2}{1-(s-d)/2}{\frac{\rho^2}{4}},
\intertext{after changing to a non-regularized hypergeometric function and using the reflection formula for the gamma function,}
% &\phantom{equals}= 2^{d-1-s} \frac{\omega_{d-1}}{\omega_d} \frac{\gammafcn(d/2) \gammafcn((d-s)/2)}{\gammafcn(d-s/2)}  \\
% &\phantom{equals=}- 2^{d-1-s} \frac{\omega_{d-1}}{\omega_d} \frac{\pi}{\sin[\pi(d-s)/2]} \frac{1}{\gammafcn(1+(s-d)/2)\gammafcn(1-(s-d)/2)} \left( \rho / 2 \right)^{d-s}  \Hypergeom{2}{1}{1-d/2,(d-s)/2}{1-(s-d)/2}{\frac{\rho^2}{4}},
&\phantom{equals}= 2^{d-1-s} \frac{\omega_{d-1}}{\omega_d} \frac{\gammafcn(d/2) \gammafcn((d-s)/2)}{\gammafcn(d-s/2)} + \frac{2^{d-s}}{s-d} \frac{\omega_{d-1}}{\omega_d} \left( \rho / 2 \right)^{d-s}  \Hypergeom{2}{1}{1-d/2,(d-s)/2}{1-(s-d)/2}{\frac{\rho^2}{4}}.
\end{align*}
% For $(s-d)/2$ not a positive integer we can apply the linear transformation \cite[Eq.~15.8.4]{DLMF2010.05.10}
The substitution $\omega_{d-1}/\omega_d = \gammafcn((d+1)/2) / [ \sqrt{\pi} \gammafcn(d/2) ]$ (cf. \eqref{eq:gammad.const}) shows that the first term is $V_s(\mathbb{S}^d)$ by Eq.~\eqref{V.s.d}. The result follows.
\end{proof}

\begin{proof}[Proof of Proposition~\ref{thm:hypersing.upper.bound}]
We follow the proof of the upper bound in Proposition~\ref{thm:2nd.term.boundary}. Since $X_N^*$ is a minimal $s$-energy configuration, for each $j = 1, \dots, N$ the function
\begin{equation*}
U_j( \PT{x} ) \DEF \sum_{k: k \neq j} \frac{1}{\left| \PT{x}_k^* - \PT{x} \right|^s}, \qquad \PT{x} \in \mathbb{S}^d \setminus \left(X_N^*\setminus\{\PT{x}_j^*\}\right),
\end{equation*}
attains its minimum at $\PT{x}_j^*$. By Lemma~\ref{lem:aux.3} with $\rho = r \, N^{-1/d}$ ($0 < \rho  <2$) we get
\begin{align*}
\sigma_d( D(r) ) \, U_j( \PT{x}_j ) 
&\leq \int_{D(r)} U_j( \PT{x} ) \dd \sigma_d( \PT{x} ) \leq \sum_{k: k \neq j} \int_{D_k(r)} \frac{1}{\left| \PT{x}_k^* - \PT{x} \right|^s} \dd \sigma_d( \PT{x} ) \\
&= \left( N - 1 \right) \Bigg\{ V_s(\mathbb{S}^d) + \frac{1}{s-d} \frac{\omega_{d-1}}{\omega_d} r^{d-s} N^{-1+s/d}  \Hypergeom{2}{1}{1-d/2,(d-s)/2}{1-(s-d)/2}{\frac{r^2}{4} N^{-2/d}} \Bigg\}.
\end{align*}
Hence
\begin{align*}
\sigma_d(D(r)) \, \mathcal{E}_s( \mathbb{S}^d; N) 
&\leq N^2 \Bigg\{ V_s(\mathbb{S}^d) + \frac{1}{s-d} \frac{\omega_{d-1}}{\omega_d} r^{d-s} N^{-1+s/d}  \Hypergeom{2}{1}{1-d/2,(d-s)/2}{1-(s-d)/2}{\frac{r^2}{4} N^{-2/d}} \Bigg\} \\
&= \frac{1}{s-d} \frac{\omega_{d-1}}{\omega_d} r^{d-s} N^{1+s/d} + V_s(\mathbb{S}^d) \, N^2 + \mathcal{O}(r^{2+d-s} \, N^{1 + s / d - 2 / d} ).
\end{align*}
By   relations~\eqref{EsSd} and \eqref{eq:trivial.bound}, we have
\begin{equation*}
\sigma_d( D(r) ) \, \mathcal{E}_s( \mathbb{S}^d; N) \geq \left( 1 - \frac{1}{d} \frac{\omega_{d-1}}{\omega_d} \, r^d \right) \mathcal{E}_s( \mathbb{S}^d; N) + \mathcal{O}(r^{2+d} \, N^{1 + s / d - 2 / d} ).
\end{equation*}
Note, that for $d=2$ the hypergeometric function in \eqref{eq:trivial.bound} reduces to one and the $\mathcal{O}(.)$-term above disappears. Hence
\begin{equation*}
\mathcal{E}_s( \mathbb{S}^d; N) \leq \frac{1}{s-d} \frac{\omega_{d-1}}{\omega_d} \frac{r^{d-s}}{1 - \frac{1}{d}  \frac{\omega_{d-1}}{\omega_d} \, r^d} \, N^{1+s/d} + \frac{V_s(\mathbb{S}^d)}{1 - \frac{1}{d} \frac{\omega_{d-1}}{\omega_d} \, r^d} \, N^2 + \mathcal{O}(r^{2+d-s} \, N^{1 + s / d - 2 / d} ).
\end{equation*}
The function $h(r) = r^{d-s} / (1 - c r^d)$ (where $c = (1/d) (\omega_{d-1}/\omega_d)$) has a single minimum in the interval $(0,\infty)$ at $r^* = c^{-1/d} ( 1 - d / s)^{1/d}$ with value $h(r^*) = (s/d) c^{-1+s/d} ( 1 - d / s)^{1-s/d}$, where $1 - c r^d = d/s > 0$. The result follows.
\end{proof}

\section{Motivations for conjectures}
\label{sec:justification}

\begin{proof}[Motivation for Conjecture~\ref{conj:log.2.sphere}]
Suppose Conjecture~\ref{conj:Riesz.d.sphere} holds. Proceeding formally, we obtain
\begin{equation*}
\mathcal{E}_{\mathrm{log}}( \mathbb{S}^d; N ) 
= \left. \frac{\dd}{\dd s} \mathcal{E}_{s}( \mathbb{S}^d; N ) \right|_{s\to0^+} =  \left. \frac{\dd}{\dd s} \left[V_s(\mathbb{S}^d) N^2 + \left(\frac{|\Lambda_d|}{ \omega_d} \right)^{s/d}\zetafcn_{\Lambda_d}(s) \, N^{1+s/d}  \right]\right|_{s\to0^+} + \Delta( \mathbb{S}^d; N ),
\end{equation*}
where 
\begin{equation*}
\Delta( \mathbb{S}^d; N ) = \lim_{s\to0^+} \frac{1}{s} \left[ \mathcal{E}_s( \mathbb{S}^d; N ) - V_s({\mathbb S}^d) \, N^2 - \left(\frac{|\Lambda_d|}{ \omega_d} \right)^{s/d}\zetafcn_{\Lambda_d}(s) \, N^{1+s/d} \right].
\end{equation*}
Assuming   the limit exists and that $\Delta( \mathbb{S}^d; N ) = o(N)$ as $N \to \infty$, we  have
\begin{equation}
\begin{split}
\mathcal{E}_{\mathrm{log}}( \mathbb{S}^d; N ) &= \left. \frac{\dd}{\dd s} \left [ V_s(\mathbb{S}^d) N^2 + \left(\frac{|\Lambda_d|}{ \omega_d} \right)^{s/d}\zetafcn_{\Lambda_d}(s) \, N^{1+s/d}  \right ] \right|_{s\to0^+} + o(N), \\
&=  V_{\log}(\mathbb{S}^d) N^2 +\zetafcn_{\Lambda_d}(0)\frac{1}{d} \, N\log N   + 
\left[ \zetafcn_{\Lambda_d}'(0)+\frac{\zetafcn_{\Lambda_d}(0)}{d}\log \left(\frac{|\Lambda_d|}{ \omega_d} \right) \right]N + o(N) ,\\
&=  V_{\log}(\mathbb{S}^d) N^2 -\frac{1}{d} \, N\log N   + 
\left[ \zetafcn_{\Lambda_d}'(0)-\frac{1}{d}\log \left(\frac{|\Lambda_d|}{ \omega_d} \right) \right]N + o(N),\\
&=  V_{\log}(\mathbb{S}^d) N^2 -\frac{1}{d} \, N\log N   + 
C_{{\log }, d}N + o(N) \quad \text{as $N \to \infty$,}
\end{split}
\end{equation}
%(In light of Smale's Problem \#7 it is more likely that $\Delta( \mathbb{S}^2; N ) = \mathcal{O}(\log N)$.)
where (cf. \eqref{V.log} and \eqref{V.s.d}) we used
\begin{equation*}
\left. \frac{\dd}{\dd s} V_s(\mathbb{S}^d) \right|_{s\to0^+} = \left. \frac{\dd}{\dd s} \left [ 2^{d-s-1} \frac{\gammafcn((d+1)/2) \gammafcn((d-s)/2)}{\sqrt{\pi}\gammafcn(d-s/2)}\right ] \right|_{s\to0^+}=
\log \frac{1}{2} + \frac{1}{2} \left[ \digammafcn( d ) - \digammafcn( d / 2 ) \right]=V_{\log}(\mathbb{S}^d),\end{equation*}  and also used the fact that  $\zetafcn_{\Lambda}(0)=-1$ holds for any lattice $\Lambda$ (cf.  \cite{Te1988}).

 Using  Proposition~\ref{prop:zeta.lambda.at.0} in the appendix and $|\Lambda_2|=\sqrt{3}/2$ (for the hexagonal lattice with unit length edges), we  obtain
\begin{equation}
C_{\mathrm{log},2} 
= \zetafcn_{\Lambda_2}^\prime( 0 )-  \frac{1}{2} \log \frac{\sqrt{3}}{8 \pi}  =    \log ( 2 \pi ) - \frac{\log 3}{4} - 3 \log \gammafcn(1/3)- \frac{1}{2} \log \frac{\sqrt{3}}{8 \pi}
= 2 \log 2 + \frac{1}{2} \log \frac{2}{3} + 3 \log \frac{\sqrt{\pi}}{\gammafcn(1/3)}.
\end{equation}

\end{proof}

\begin{proof}[Motivation for  Conjecture~\ref{conj:Riesz.d.dsphere}]
We first remark that   $ V_s( \mathbb{S}^d )$ has a simple pole at $s=d$ with 
\begin{equation*}
V_s( \mathbb{S}^d )=\frac{a_{-1,d}}{s-d}+A_d+\mathcal{O}(|s-d|) \text{ as $s\to d$,}
\end{equation*}
where
\begin{equation*}
a_{-1,d} \DEF \res_{s=d} V_s( \mathbb{S}^d ) = - d \frac{\mathcal{H}_d(\mathbb{B}^d)}{\mathcal{H}_d(\mathbb{S}^d)}$$ and $$ A_d \DEF \lim_{s\to d} \left[ V_s( \mathbb{S}^d ) - \frac{a_{-1,d}}{s-d} \right]= - \frac{1}{2} \frac{\omega_{d-1}}{\omega_d} \left( \gamma - 2 \log 2 + \digammafcn( d / 2 ) \right).
\end{equation*}
In addition to  Conjecture~\ref{conj:Riesz.d.sphere}, we further assume that $C_{s,d}$ behaves, near $s=d$, as  follows
$$\frac{C_{s,d}}{[\mathcal{H}_d({\mathbb S}^d)]^{s/d}}=b_{-1,d}/(s-d)+B_d +\mathcal{O}(|s-d|) \text { as $s\to d$, }$$
 where $$b_{-1,d}:=-a_{-1,d}  \quad \text{   and  }\quad
 B_d \DEF \lim_{s\to d} \left[ \frac{C_{s,d}}{\left[ \mathcal{H}_d( \mathbb{S}^d ) \right]^{s/d}} - \frac{b_{-1,d}}{s-d} \right].$$
 
Proceeding similarly as before and taking $s\to d$, we have
\begin{align*}
\mathcal{E}_s( \mathbb{S}^d; N ) 
&= V_s( \mathbb{S}^d ) \, N^2 + \frac{C_{s,d}}{\left[ \mathcal{H}_d( \mathbb{S}^d ) \right]^{s/d}} \, N^{1+s/d} + \Delta_s( \mathbb{S}^d; N ) \\
&= \left[ V_s( \mathbb{S}^d ) - \frac{a_{-1,d}}{s-d} \right] N^2 + \frac{a_{-1,d}}{s-d} \left( N^2 - N^{1+s/d} \right) \\
&\phantom{=}+ \left[ \frac{C_{s,d}}{\left[ \mathcal{H}_d( \mathbb{S}^d ) \right]^{s/d}} - \frac{b_{-1,d}}{s-d} \right] N^{1+s/d} + \Delta_s( \mathbb{S}^d; N ) \\
& \longrightarrow  A_d \, N^2 + \frac{\mathcal{H}_d(\mathbb{B}^d)}{\mathcal{H}_d(\mathbb{S}^d)} \, N^2 \log N + B_d N^2 + \Delta_d( \mathbb{S}^d; N ),
\end{align*}
where we assume the limit $\Delta_d( \mathbb{S}^d; N )$ exists and that $\Delta_d( \mathbb{S}^d; N ) = o(N^2)$ as $N \to \infty$.

Multiplying both sides of \eqref{eq:C.s.d.expansion} with $\left[ \mathcal{H}_d( \mathbb{S}^d ) \right]^{s/d}$ and expanding $\left[ \mathcal{H}_d( \mathbb{S}^d ) \right]^{s/d}$ about $s = d$, we obtain that
\begin{equation}\label{eq:C.s.d.expansion}
C_{s,d} = \frac{\omega_{d-1}}{s-d} + \widetilde{B}_d + \mathcal{O}(|s-d|) \qquad \text{as $s\to d$,}
\end{equation}
where
\begin{equation*}
\widetilde{B}_d \DEF \lim_{s \to d} \left[ C_{s,d} - \frac{\omega_{d-1}}{s-d} \right].
\end{equation*}
Furthermore, the following relation holds between coefficient of the $N^2$-term and $\widetilde{B}_d$:
\begin{align*}
C_{d,d} 
&\DEF A_d + B_d = - \frac{1}{2} \frac{\omega_{d-1}}{\omega_d} \left( \gamma - 2 \log 2 + \digammafcn( d / 2 ) \right) - \frac{1}{d} \frac{\omega_{d-1}}{\omega_d} \log \omega_d + \frac{\widetilde{B}_d}{\omega_d}.
\end{align*}

In the case $d=2$ (with the help of Mathematica), we obtain 
\begin{equation*}
a_{-1,2} = \res_{s=2} \frac{2^{1-s}}{2-s} = - 1 / 2, \qquad b_{-1,2} = \res_{s=2} \frac{C_{s,2}}{\left( 4 \pi \right)^{s/2}} = 1 / 2, 
\end{equation*}
and
\begin{align*}
A_2=\lim_{s\to2} \left[ \frac{2^{1-s}}{2-s} - \frac{a_{-1,2}}{s-2} \right] &= \frac{\log 2}{2}, \\
B_2 = \lim_{s\to2} \left[ \frac{C_{s,2}}{\left( 4 \pi \right)^{s/2}} - \frac{b_{-1,2}}{s-2} \right] &= \frac{1}{4} \left[ \gamma - \log ( 8 \sqrt{3} \pi ) \right] + \frac{\sqrt{3}}{4 \pi} \left[ \gamma_1(2/3) - \gamma_1(1/3) \right],
\end{align*}
where $\gamma$ is the Euler-Mascheroni constant and $\gamma_n(a)$ is the generalized Stieltjes constant appearing as the coefficient of $(1-s)^n$ in the expansion of $\zetafcn(s,a)$ about $s=1$.

\end{proof}

\appendix

\section{Auxiliary results}

\begin{prop} \label{prop:zeta.lambda.at.0}
\begin{align*}
\DirichletL_{-3}(0) &= 1 / 3, & \DirichletL_{-3}^\prime(0) &= - \frac{1}{3} \log 3 + \log \frac{\gammafcn(1/3)}{\gammafcn(2/3)} = \frac{\log 3}{6} + 2 \log \frac{\gammafcn(1/3)}{\sqrt{2 \pi}}, \\
\zetafcn_{\Lambda_2}(0) &= -1, & \zetafcn_{\Lambda_2}^\prime(0) &= \log \frac{\sqrt{3}}{\sqrt{2 \pi}} + \frac{3}{2} \log \frac{\gammafcn(2/3)}{\gammafcn(1/3)} = \log ( 2 \pi ) - \frac{\log 3}{4} - 3 \log \gammafcn( 1 / 3 ).
\end{align*}
\end{prop}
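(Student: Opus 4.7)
The proof uses the factorization $\zetafcn_{\Lambda_2}(s) = 6\,\zetafcn(s/2)\DirichletL_{-3}(s/2)$ (equation \eqref{zeta.lambda.prod}) together with the Hurwitz representation $\DirichletL_{-3}(s) = 3^{-s}[\zetafcn(s,1/3)-\zetafcn(s,2/3)]$ (equation \eqref{dirichlet.id}). All four identities reduce to evaluating $\zetafcn(s,a)$ and its $s$-derivative at $s=0$ for the rational parameters $a=1/3, 2/3$, combined with the standard values $\zetafcn(0) = -1/2$ and $\zetafcn'(0) = -\tfrac12\log(2\pi)$.

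First I would handle $\DirichletL_{-3}$. The two classical ingredients are the formula $\zetafcn(0,a) = 1/2 - a$ and \emph{Lerch's formula} $\zetafcn'(0,a) = \log\gammafcn(a) - \tfrac12\log(2\pi)$. Plugging in $a=1/3$ and $a=2/3$ gives
$\DirichletL_{-3}(0) = (1/2 - 1/3) - (1/2 - 2/3) = 1/3,$
while differentiating the Hurwitz representation and using the product rule (with the $3^{-s}$ factor contributing a $-\log 3$ at $s=0$) yields
$$
\DirichletL_{-3}'(0) = -\tfrac{1}{3}\log 3 + \log\gammafcn(1/3) - \log\gammafcn(2/3),
$$
which is the first form claimed. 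The second form then follows from the reflection formula $\gammafcn(1/3)\gammafcn(2/3) = \pi/\sin(\pi/3) = 2\pi/\sqrt{3}$: solving for $\gammafcn(2/3)$ and substituting converts $\log[\gammafcn(1/3)/\gammafcn(2/3)]$ into $\tfrac{1}{2}\log 3 + 2\log\gammafcn(1/3) - \log(2\pi)$, and combining with the $-\tfrac{1}{3}\log 3$ gives the stated $\tfrac{\log 3}{6} + 2\log[\gammafcn(1/3)/\sqrt{2\pi}]$.

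For the Epstein zeta function, I would just differentiate the factorization. At $s=0$ we have
$$
\zetafcn_{\Lambda_2}(0) = 6\,\zetafcn(0)\,\DirichletL_{-3}(0) = 6\cdot(-1/2)\cdot(1/3) = -1.
$$
Differentiating and using the chain rule (each factor $\zetafcn(s/2),\DirichletL_{-3}(s/2)$ contributes a factor of $1/2$),
$$
\zetafcn_{\Lambda_2}'(0) = 3\bigl[\zetafcn'(0)\DirichletL_{-3}(0) + \zetafcn(0)\DirichletL_{-3}'(0)\bigr] = -\tfrac{1}{2}\log(2\pi) - \tfrac{3}{2}\DirichletL_{-3}'(0).
$$
Substituting the first form of $\DirichletL_{-3}'(0)$ gives the first claimed form for $\zetafcn_{\Lambda_2}'(0)$; substituting the second form gives the second. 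The two forms are, of course, equivalent modulo the same reflection identity already used.

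The only non-trivial input is Lerch's formula $\zetafcn'(0,a)=\log\gammafcn(a)-\tfrac12\log(2\pi)$, which is classical (it follows, e.g., from the Hermite integral representation or directly from the functional equation for the Hurwitz zeta). Everything else is bookkeeping with the product rule, the reflection formula for $\gammafcn$, and the values $\zetafcn(0),\zetafcn'(0)$. There is no real obstacle; the subtlest point is simply keeping careful track of the two equivalent forms written in the statement and verifying that the reflection identity interpolates between them, which is a short calculation.
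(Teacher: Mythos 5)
Your proposal is correct and follows essentially the same route as the paper's own proof: both evaluate $\DirichletL_{-3}$ and its derivative at $0$ via the Hurwitz-zeta representation together with $\zetafcn(0,a)=1/2-a$ and Lerch's formula $\zetafcn'(0,a)=\log\gammafcn(a)-\tfrac12\log(2\pi)$, then pass to $\zetafcn_{\Lambda_2}$ through the factorization $\zetafcn_{\Lambda_2}(s)=6\,\zetafcn(s/2)\DirichletL_{-3}(s/2)$ and use the reflection identity $\gammafcn(1/3)\gammafcn(2/3)=2\pi/\sqrt{3}$ to reconcile the two stated forms. No gaps.
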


\begin{proof}
Note that one has the following identities (cf. \cite[p.~264]{Ap1976})
\begin{equation} \label{app.ids}
\zetafcn(0,a) = (1 / 2) - a, \qquad \left. \frac{\dd }{\dd s} \zetafcn(s,a) \right|_{s\to0} = \log \gammafcn(a) - \log \sqrt{2 \pi}.
\end{equation}
By \eqref{dirichlet.id} and the last two relations
\begin{align*}
\DirichletL_{-3}(0) 
&= \zetafcn(0,1/3) - \zetafcn(0,2/3) = \left( 1 / 2 \right) - \left( 1 / 3 \right) - \left( 1 / 2 \right) + \left( 2 / 3 \right) = 1 / 3, \\
\DirichletL_{-3}^\prime(0) 
&= \left. \frac{\dd }{\dd s} \left\{ 3^{-s} \left[ \zetafcn(s,1/3) - \zetafcn(s,2/3) \right] \right\} \right|_{s\to0} = - \log 3 \, \DirichletL_{-3}(0) + \log \gammafcn(1/3) - \log \gammafcn(2/3).
\end{align*} 
For the second relation for $\DirichletL_{-3}^\prime(0)$ we used $\gammafcn(2/3) = 2 \pi / [ \sqrt{3} \gammafcn(1 / 3) ]$.

By \eqref{zeta.lambda.prod} and $\zetafcn(0) = - 1 / 2$, we get
\begin{align*}
\zetafcn_{\Lambda_2}(0) &= 6 \zetafcn( 0 ) \DirichletL_{-3}( 0 ) = 6 \left(- 1 / 2 \right) \left( 1 / 3 \right) = -1 \\
\zetafcn_{\Lambda_2}^\prime(0) &= \left. \frac{\dd}{\dd s}\left\{ 6 \zetafcn( s / 2 ) \DirichletL_{-3}( s / 2 ) \right\} \right|_{s \to 0} = 3 \zetafcn^\prime(0) \DirichletL_{-3}(0) + 3 \zetafcn(0) \DirichletL_{-3}^\prime(0).
\end{align*}
Since $\zetafcn(s) = \zetafcn(s,1)$, we derive from \eqref{app.ids} the special values $\zetafcn(0) = - 1 / 2$ and $\zetafcn^\prime(0) = - \log \sqrt{2 \pi}$. This completes the proof.
\end{proof}

\bibliographystyle{abbrv}
\bibliography{ENERGYbibliography}
%\bibliography{bibliography}

\end{document}